  \newlength\commentspace
  \newcommand\algcomment[2]{\makebox[0pt][l]{\hspace{-#1em}%
    \hspace{\commentspace}$\triangleright$ #2}}
\newcommand{\algorithmicfunc}[1]{\textsc{#1}}
\newcommand{\FUNC}[1]{\item[\algorithmicfunc{#1}]}
\theoremstyle{plain}
\newtheorem{theorem}{Theorem}[section]
\newtheorem{corollary}[theorem]{Corollary}
\newtheorem{proposition}[theorem]{Proposition}
\newtheorem{lemma}[theorem]{Lemma}
\newtheorem{question}{Question}
\theoremstyle{remark}
\newtheorem*{remark}{Remark}
\newtheorem*{remarks}{Remarks}
\DeclareSymbolFont{rsfscript}{OMS}{rsfs}{m}{n}
\DeclareSymbolFontAlphabet{\mathrsfs}{rsfscript}
\DeclareMathOperator{\rt}{\mathrm{rt}}
\newcommand{\mA}{\mathrsfs{A}}
\newcommand{\mB}{\mathrsfs{B}}
\newcommand{\mC}{\mathrsfs{C}}
\newcommand{\mD}{\mathrsfs{D}}
\newcommand{\mE}{\mathrsfs{E}}
\newcommand{\mF}{\mathrsfs{F}}
\newcommand{\mL}{\mathrsfs{L}}
\newcommand{\gR}{\mathrel{\mathfrak{R}}}
\newcommand{\gL}{\mathrel{\mathfrak{L}}}
\newcommand{\gD}{\mathrel{\mathfrak{D}}}
\newcommand{\sa}{synchronizing automata}
\newcommand{\san}{synchronizing automaton}
\newcommand{\sDFA}{synchronizing DFA}
\newcommand{\rl}{reset threshold}
\newcommand{\sw}{reset word}
\newcommand{\scc}{strongly connected component}
\renewcommand{\subjclassname}{\textup{2020} Mathematics Subject Classification}
\renewcommand*\subjclass[2][2020]{%
  \def\@subjclass{#2}%
  \@ifundefined{subjclassname@#1}{%
    \ClassWarning{\@classname}{Unknown edition (#1) of Mathematics Subject Classification; using '2020'.}%
  }{%
    \@xp\let\@xp\subjclassname\csname subjclassname@#1\endcsname
  }%
}
\pgfplotsset{compat=1.16}
\title{Adversarial Synchronization}
\author{Anton E. Lipin}
\email{tony.lipin@yandex.ru}
\author{Mikhail V. Volkov}
\email{m.v.volkov@urfu.ru}
\address[A. E. Lipin]{\normalfont Krasovskii Institute of Mathematics and Mechanics, 620108 Yekaterinburg, Russia; Ural Federal University, 620062 Yekaterinburg, Russia}
\address[M. V. Volkov]{\normalfont 620075 Yekaterinburg, Russia}
\date{Version of January 22, 2026}
\begin{document}

\begin{abstract}
We study a variant of the synchronization game on finite deterministic automata. In this game, Alice chooses one input letter of an automaton $\mA$ on each of her moves while Bob may respond with an arbitrary finite word over the input alphabet of~$\mA$; Alice wins if the word obtained by interleaving her letters with Bob’s responses resets $\mA$. We prove that if Alice has a winning strategy in this game on $\mA$, then $\mA$ admits a reset word whose length is strictly smaller than the number of states of $\mA$. In contrast, for any $k\ge 1$, we exhibit automata with shortest reset-word length quadratic in the number of states, on which Alice nevertheless wins a version of the game in which Bob's responses are restricted to arbitrary words of length at most $k$. We provide polynomial-time algorithms for deciding the winner in various synchronization games, and we analyze the relationships between variants of synchronization games on fixed-size automata.
\end{abstract}

\keywords{Synchronizing automata, reset threshold, synchronization game}

\subjclass{68Q45, 91A05}

\maketitle

\section{Introduction}
\label{intro}

\subsection{Automata} All automata in this paper are complete, deterministic, and finite. Such an automaton (a \emph{DFA}) is a pair $(Q,\Sigma)$ of two finite non-empty sets equipped with a map $Q\times\Sigma\to Q$. We call $Q$ the \emph{state set} and $\Sigma$ the \emph{input alphabet}; elements of $Q$ and $\Sigma$ are referred to as \emph{states} and \emph{letters}, respectively. The map $Q\times\Sigma\to Q$ is called the \emph{action} of letters on states. For a state $q\in Q$ and a letter $a\in\Sigma$, the result of the action of $a$ on $q$ is often denoted by just $qa$ in the literature. In this paper, we denote it by $q{\cdot}a$ by default, although we may also use  $\circ$ to denote the action, when working with multiple automata in one argument.

Any DFA $\mA=(Q,\Sigma)$ can be represented as a labeled digraph with vertex set $Q$ and edges of the form $q\xrightarrow{a}q{\cdot}a$ for all $q\in Q$ and $a\in\Sigma$. We will take the liberty of applying graph-theoretical notions to automata; that is, we speak, for instance, of paths or \scc{}s of $\mA$, meaning those of the digraph of $\mA$.

From time to time throughout this paper, we will encounter automata from the series $\{\mathrsfs{C}_{n}\}_{n=2,3,\dots}$ invented by Jan \v{C}ern\'{y}~\cite{Cerny:1964}, so we introduce them right now to exemplify the above concepts. The automaton $\mathrsfs{C}_n$ has the set $\mathbb{Z}_n$ of the residues modulo $n$ as the state set and two input letters, $a$ and $b$, that act as follows:
\[
0{\cdot} a:=1,\ \ m{\cdot} a:=m\ \text{ for \ $0<m<n$,}\qquad m{\cdot} b:=m+1\kern-8pt\pmod{n}.
\]
(Here and below, the symbol $:=$ signifies equality by definition; that is, $A:=B$ means that $A$ is defined as $B$.)

Figure~\ref{fig:cerny-n} shows a generic automaton from the \v{C}ern\'{y} series\footnote{We adopt the usual convention that edges with multiple labels represent bunches of parallel edges. In particular, the edge $0\xrightarrow{a,b}1$ in Figure~\ref{fig:cerny-n} represents the parallel edges $0\xrightarrow{a}1$ and $0\xrightarrow{b}1$.}.
\begin{figure}[ht]
\begin{center}
\begin{tikzpicture}[scale=0.7,->,>=latex,shorten >=1pt,auto,semithick]
  \tikzstyle{every state}=[draw=blue,minimum size=8mm,inner sep=0pt]

  \node[state] (n0) at (0,-1) {0};
  \node[state] (n1) at (-3,-3) {$n{-}1$};
  \node[state] (n2) at (3,-3) {1};
  \node[state] (n3) at (-1.8,-6) {$n{-}2$};
  \node[state] (n4) at (1.8,-6) {2};

  \path (n1) edge node[above] {$b$} (n0)
        (n0) edge node[above, xshift=1ex] {$a,b$} (n2)
        (n2) edge node[right] {$b$} (n4)
        (n3) edge node[left] {$b$} (n1);

\path
 (n2) edge[-latex, loop, out = 40, in = 0, looseness=8] node[above right] {$a$} (n2)
 (n3) edge[-latex, loop, out = -110, in = -150, looseness=8] node[below left] {$a$} (n3)
 (n4) edge[-latex, loop, out = -20, in = -60, looseness=8] node[below right] {$a$} (n4)
 (n1) edge[-latex, loop, out=180, in=140, looseness=8] node[above left] {$a$} (n1);

  \node at (.1,-6.2) {\raisebox{10pt}{$\mathbf{\centerdot\,\centerdot\,\centerdot}$}};
\end{tikzpicture}
\end{center}
\caption{The automaton $\mathrsfs{C}_n$}\label{fig:cerny-n}
\end{figure}

A \emph{word} over $\Sigma$  is a finite sequence of letters. If $w=a_1\cdots a_\ell$, where $a_1,\dots a_\ell$ are letters, the number $\ell$ is called the \emph{length} of the word $w$ and is denoted by $|w|$. We allow the \emph{empty word}, denoted by $\varepsilon$, which is the empty sequence, and set $|\varepsilon|:=0$.

We use the standard notation $\Sigma^*$ for the set of all words over $\Sigma$. The action of letters on the states of a DFA $(Q,\Sigma)$ naturally extends to actions of words over $\Sigma$ on states and on sets of states. Namely, for $q\in Q$ and $w\in\Sigma^*$, the action of $w$ on $q$ is defined recursively as
\[
q{\cdot}w:=\begin{cases}
               q & \text{if } w=\varepsilon,\\
               (q{\cdot}w'){\cdot}a & \text{if } w=w'a \text{ for some } w'\in\Sigma^* \text{ and } a\in\Sigma,
             \end{cases}
\]
and the action of $w$ on a subset $P\subseteq Q$ is defined by $P{\cdot}w:=\{p{\cdot}w : p\in P\}$.

\subsection{Synchronizing automata} A DFA $(Q,\Sigma)$ is called \emph{synchronizing} if there exists a word $w\in\Sigma^*$ such that applying $w$ to any state in $Q$ results in the same state:
\[
q{\cdot}w = q'{\cdot}w \ \text{ for all } \ q,q'\in Q.
\]
Any word $w$ satisfying this condition is called a \emph{reset} word for the automaton. The minimum length of reset words for a \san{} $\mA$ is called its \emph{reset threshold} and is denoted by $\rt(\mA)$.

The DFAs $\mC_n$ from the \v{C}ern\'{y} series are well-known examples of \sa. \v{C}ern\'{y} \cite[Lemma 1]{Cerny:1964} proved that the shortest reset word for the automaton $\mathrsfs{C}_{n}$ is the word $(ab^{n-1})^{n-2}a$ of length $(n-1)^2$. Hence, $\rt(\mC_n)=(n-1)^2$.

Synchronizing automata are widely studied as transparent and natural models of error-resistant systems, with numerous applications in different areas of computer science, including coding theory, robotics, and the testing of reactive systems. They also reveal fascinating connections with symbolic dynamics, substitution systems, permutation groups, and other areas of mathematics. Per se, \sa{} attract considerable attention because of the \v{C}ern\'y conjecture, arguably the most longstanding open problem in the combinatorial theory of finite automata. (The conjecture, formulated in the late 1960s, states that the \rl{} of every \san{} with $n$ states does not exceed $\rt(\mC_n)=(n-1)^2$.) We refer the reader to chapter~\cite{KV} of the `Handbook of Automata Theory' and to the overview article \cite{Vo22} for an introduction to the field of automata synchronization. In connection with the \v{C}ern\'y conjecture, the `living survey' \cite{List} provides a regularly updated reference source.

\subsection{Synchronization games} F\"odor Fominykh and the second-named author \cite{FV} initiated the study of \sa{} from a game-theoretic perspective, motivated, on the one hand, by a game-theoretical approach to software testing proposed in~\cite{BGNV06} and, on the other hand, by a peg-solitaire-like game used in~\cite{AVZ06,AVZ07} to justify new examples of \sa{} whose \rl{}s are close to the \v{C}ern\'y conjecture bound. In the game introduced in~\cite{FV}, two players, Alice (Synchronizer) and Bob (Desynchronizer), take turns choosing letters from the input alphabet of a~DFA $\mathrsfs{A}$. Alice who wants to synchronize $\mathrsfs{A}$ wins when the word formed by the letters chosen by her and Bob, taken in the order in which they were selected, resets $\mA$. Bob, on the other hand, strives to prevent synchronization or, if synchronization is unavoidable, to delay it for as long as possible. Assuming both players play optimally, the outcome of the game depends solely on the automaton. This raises the problem of classifying \sa{} into those for which Alice has a winning strategy and those for which Bob does. Several results in this direction were obtained in~\cite{FV,FMV,MonNol18,FUN22,NCMA,JALC}.

\subsection{New synchronization games and our contribution} It was observed in \cite[Section 5.4]{JALC} that many of the results established in~\cite{FV,FMV,FUN22,NCMA,JALC} remain valid for a modified game in which Bob is allowed to choose an arbitrary word in response to each of Alice's moves. Synchronization games under such modified rules appear to better model reliably controlled systems that must perform tasks in the presence of natural or artificial interference. Indeed, nature or an adversary does not necessarily act in the same rhythm as we do, nor are they obliged to respond to each of our actions with exactly one obstacle or counteraction.

In this paper, we study synchronization games with various `degrees of freedom' for an adversary. Our main interest is in the speed of synchronization. Automata on which Alice can win playing against an adversary appear to be more amenable to synchronization. Consequently, it is natural to expect that, in the absence of resistance, they should be quickly synchronizable---that is, they should possess short reset words. We show that the \rl{} of a DFA $(Q,\Sigma)$ on which Alice can win even if Bob may respond with any word in $\Sigma^*$ is always less than $|Q|$. As a corollary, we obtain a new, fully combinatorial proof (and a strengthening) of a result by Jorge Almeida and Benjamin Steinberg \cite{AlSt09} on the \rl{} of \sa{} subject to certain algebraic restrictions. In contrast, for any $k$, we exhibit automata with a \rl{} quadratic in the number of states on which Alice wins if Bob may respond with any word of length less than $k$. For $k=2$, this resolves a question left open in \cite{NCMA}.

We also provide polynomial-time algorithms that decide which player, Alice or Bob, has a winning strategy on a given DFA in the variants of synchronization games we consider, and we analyze the relationships between these variants when played on fixed-size automata.

\subsection{Prerequisites} A fair effort has been made to keep the paper reasonably self-con\-tained. We use only a few standard notions from graph theory and algorithmics; no game- or automata-theoretic background is presupposed.

\section{Variants of adversarial synchronization}
\label{sec:rules}

\subsection{Definition and graphical representation} $\mathbb{N}$ is the set of all positive integers, and $\overline{\mathbb{N}}$ denotes $\mathbb{N}$ augmented with the symbol $\omega$. We extend the usual order on  $\mathbb{N}$ to $\overline{\mathbb{N}}$ by declaring $k<\omega$ for all $k\in\mathbb{N}$. For $k\in\overline{\mathbb{N}}$ and an alphabet $\Sigma$, we write $\Sigma^{<k}$ for the set of all words over $\Sigma$ of length less than $k$. In particular, $\Sigma^{<1}=\{\varepsilon\}$ and $\Sigma^{<\omega}=\Sigma^*$.

For $k\in\overline{\mathbb{N}}$ and a DFA $\mA = (Q,\Sigma)$, we call the following game the $k$-\emph{synchronization game} on $\mA$. Two players, Alice (Synchronizer) and Bob (Desynchronizer), take turns moving. On her turns, Alice chooses letters from $\Sigma$. Bob, on his turns, chooses words from $\Sigma^{<k}$ (thus, when $k=\omega$, Bob may choose arbitrary finite words).

The concatenation of all letters chosen by Alice and all words chosen by Bob up to some turn $t$, taken in the order in which they were selected, is called the \emph{history} of the game up to turn $t$. If, at some turn, this history is a reset word for the automaton $\mA$, then Alice wins the game. The goal of Bob is to continue the game indefinitely, or if synchronization is unavoidable, to prolong the game for as long as possible.

The reader may have noticed that we have not yet specified who starts the game, Alice or Bob. This does not actually matter. If, given a DFA, Alice wins a $k$-synchronization game in which she makes the first move, then she can also win the game in which Bob moves first, because a reset word remains a reset word when multiplied on the left by an arbitrary word. If, given a DFA, Bob has a winning strategy in a $k$-synchronization game which Alice starts, then he can also win the game in which he moves first by passing the first turn to Alice, that is, by choosing the empty word as his initial move.

For brevity, the following two conventions will be used throughout this paper. First, we will tacitly assume that DFAs under consideration have at least three states, since Alice trivially wins on each \san{} with two or less states. Second, since no games other than synchronization games are considered, we will shorten the expression `$k$-synchronization game' to `$k$-game'.

Using the standard graphical representation, one can visualize the $k$-game on a DFA $\mA$ as a two-player peg-solitaire-like game with (the digraph of) $\mA$ as the game board. At the beginning each state of $\mA$ holds one token. Tokens are moved and removed as follows. If Alice chooses a letter $a$, all tokens still present on the board simultaneously slide along the edges labeled $a$. Similarly, if Bob chooses a word $w$, all tokens still present on the board simultaneously slide along the paths formed by consecutive edges whose labels form $w$. Whenever several tokens arrive at the same state after this, all of them except one are removed, so that after the move is completed, each state holds at most one token.

We refer to the set of states carrying tokens after some turn $t$ as the \emph{position} at turn $t$. Since token movements correspond to actions of letters and words, the position at turn~$t$ in the $k$-game on a DFA $(Q,\Sigma)$ equals $Q{\cdot}u$, where $u$ is the history of the game up to turn $t$.

\begin{figure}[bht]
\begin{center}
\begin{tikzpicture}[scale=0.064,>=latex,auto,
  every node/.style={circle,minimum size=8mm,inner sep=0pt},
  token/.style={circle,minimum size=5mm,inner sep=0pt},
  b/.style={minimum size=4mm,inner sep=0pt}]

\node[b] (b0) at (37,30) {};
\node[b] (b1) at (28,1) {};
\draw[->,thick] (b0) -- node[sloped, above, yshift=-1.5ex]{Move $b$} (b1);

\node[b] (c0) at (101,30) {};
\node[b] (c1) at (111,1) {};
\draw[->,thick] (c0) -- node[sloped, above, yshift=-2ex]{Move $ab$} (c1);


\node[token,fill=gray!60] (t10) at (66,54) {};
\node[draw=blue] (n10) at (66,54) {0};
\node[draw=blue] (n11) at (42,38) {4};
\node[token, fill=gray!60] (t12) at (90,38) {};
\node[draw=blue] (n12) at (90,38) {1};
\node[token, fill=gray!60] (t13) at (54,18) {};
\node[draw=blue] (n13) at (54,18) {3};
\node[draw=blue] (n14) at (78,18) {2};

\draw[->] (n10) -- node[sloped, above, yshift=-1ex] {$a,b$} (n12);
\draw[->] (n12) -- node[sloped, below, yshift=1ex] {$b$} (n14);
\draw[->] (n14) -- node[sloped, below, yshift=1ex] {$b$} (n13);
\draw[->] (n13) -- node[sloped, below, yshift=1ex] {$b$} (n11);
\draw[->] (n11) -- node[sloped, above, yshift=-1ex] {$b$} (n10);

\draw[->] (n11) to [out=170,in=130,looseness=8] node[above,yshift=-1mm] {$a$} (n11);
\draw[->] (n12) to [out=50,in=10,looseness=8] node[above,yshift=-1mm] {$a$} (n12);
\draw[->] (n13) to [out=-130,in=-170,looseness=8] node[below,yshift=1mm] {$a$} (n13);
\draw[->] (n14) to [out=-10,in=-50,looseness=8] node[below,yshift=1mm] {$a$} (n14);


\node[draw=blue] (n0) at (27,-4)  {0};
\node[token, fill=gray!60] (t1) at (3,-20) {};
\node[draw=blue] (n1) at (3,-20) {4};
\node[token, fill=gray!60] (t2) at (51,-20) {};
\node[draw=blue] (n2) at (51,-20) {1};
\node[draw=blue] (n3) at (15,-40) {3};
\node[token, fill=gray!60] (t4) at (39,-40) {};
\node[draw=blue] (n4) at (39,-40) {2};

\draw[->] (n0) -- node[sloped, above, yshift=-1ex] {$a,b$} (n2);
\draw[->] (n2) -- node[sloped, below, yshift=1ex] {$b$} (n4);
\draw[->] (n4) -- node[sloped, below, yshift=1ex] {$b$} (n3);
\draw[->] (n3) -- node[sloped, below, yshift=1ex] {$b$} (n1);
\draw[->] (n1) -- node[sloped, above, yshift=-1ex]  {$b$} (n0);

\draw[->] (n1) to [out=170,in=130,looseness=8] node[above,yshift=-1mm] {$a$} (n1);
\draw[->] (n2) to [out=50,in=10,looseness=8] node[above,yshift=-1mm] {$a$} (n2);
\draw[->] (n3) to [out=-130,in=-170,looseness=8] node[below,yshift=1mm] {$a$} (n3);
\draw[->] (n4) to [out=-10,in=-50,looseness=8] node[below,yshift=1mm] {$a$} (n4);


\node[draw=blue] (n5) at (112,-4) {0};
\node[token, fill=gray!60] (t6) at (88,-20) {};
\node[draw=blue] (n6) at (88,-20) {4};
\node[draw=blue] (n7) at (138,-20) {1};
\node[draw=blue] (n8) at (100,-40) {3};
\node[token, fill=gray!60] (t9) at (124,-40) {};
\node[draw=blue] (n9) at (124,-40) {2};

\draw[->] (n5) -- node[sloped, above, yshift=-1ex] {$a,b$} (n7);
\draw[->] (n7) -- node[sloped, below, yshift=1ex] {$b$} (n9);
\draw[->] (n9) -- node[sloped, below, yshift=1ex] {$b$} (n8);
\draw[->] (n8) -- node[sloped, below, yshift=1ex] {$b$} (n6);
\draw[->] (n6) -- node[sloped, above, yshift=-1ex]  {$b$} (n5);

\draw[->] (n6) to [out=170,in=130,looseness=8] node[above,yshift=-1mm] {$a$} (n6);
\draw[->] (n7) to [out=50,in=10,looseness=8] node[above,yshift=-1mm] {$a$} (n7);
\draw[->] (n8) to [out=-130,in=-170,looseness=8] node[below,yshift=1mm] {$a$} (n8);
\draw[->] (n9)  to [out=-10,in=-50,looseness=8] node[below,yshift=1mm] {$a$} (n9);

\end{tikzpicture}
\caption{Moves in a synchronization game on $\mC_5$} \label{fig:moves}
\end{center}
\end{figure}
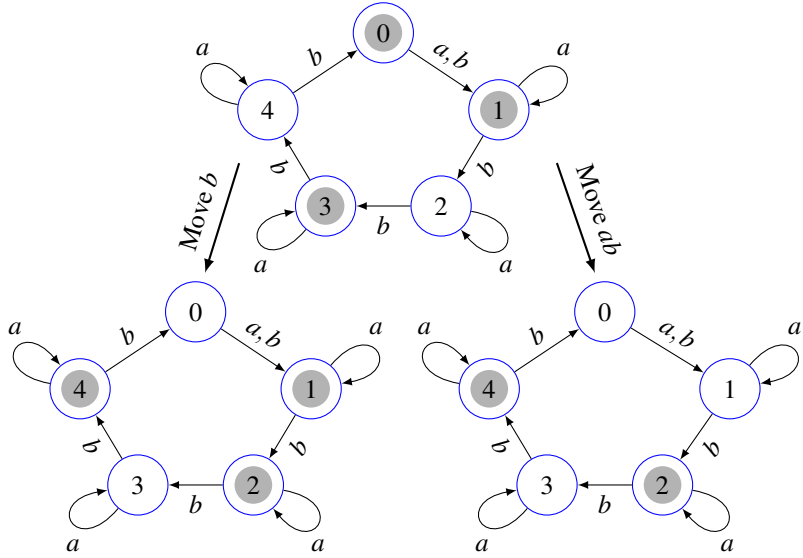

For illustration, the upper part of Figure~\ref{fig:moves} shows a typical intermediate position in a game on the automaton $\mC_5$ from the \v{C}ern\'{y} series. In this position, states 0, 1, and 3 hold tokens (shown in gray). The lower left part shows the effect of the move $b$ in this position while the lower right part demonstrates the result of the move $ab$. Observe that in the latter case one token has been removed, because the tokens originating from states 0 and 1 have moved to the same state.

In the graphical representation, Alice wins the game when all but one token have been removed. Bob wins if he can keep at least two tokens on the board indefinitely.

\subsection{Hierarchy of $A_k$-automata} Provided that both Alice and Bob play optimally, the outcome of the $k$-game on a DFA $\mA$ depends only on $\mA$. If Alice has a winning strategy in the $k$-game on $\mA$, we say that $\mA$ is an $A_k$-\emph{automaton}.
	
The following observations are immediate:
\begin{enumerate}
	\item[\rm(O1)] a DFA is an $A_1$-automaton if and only if it is synchronizing;
	
	\item[\rm(O2)] for each $k\in\mathbb{N}$, every $A_{k+1}$-automaton is an $A_k$-automaton;
	
	\item[\rm(O3)] a DFA is an $A_\omega$-automaton  if and only if it is an $A_k$-automaton for each $k\in\mathbb{N}$.
\end{enumerate}

For $k\in\overline{\mathbb{N}}$, denote the class of all $A_k$-automata by $\mathbf{A}_k$. By (O2) and (O3), the classes $\mathbf{A}_k$ form a decreasing hierarchy:
\begin{equation}\label{eq:hierarchy}
\mathbf{A}_1\supseteq\mathbf{A}_2\supseteq\dots\supseteq\mathbf{A}_k\supseteq\dots \mathbf{A}_\omega.
\end{equation}
In fact, all inclusions in \eqref{eq:hierarchy} are strict. To show this, consider, for each $n\ge 3$, the DFA $\mE_n=(\mathbb{Z}_n,\{b,c,d\})$ whose letters act as follows:
\begin{equation}\label{eq:rulesEn}
m{\cdot} b:=m+1\kern-7pt\pmod{n}; \qquad  m{\cdot} c:=\begin{cases}
			0 &\text{if } m = 0, \\
			1 &\text{if } m\ne 0;
		\end{cases}
\qquad  m{\cdot} d:=\begin{cases}
			1 &\text{if } m = n-1, \\
			0 &\text{if } m\ne n-1.
		\end{cases}
\end{equation}
Figire~\ref{fig:strict} shows the automata $\mE_3$ and $\mE_4$.

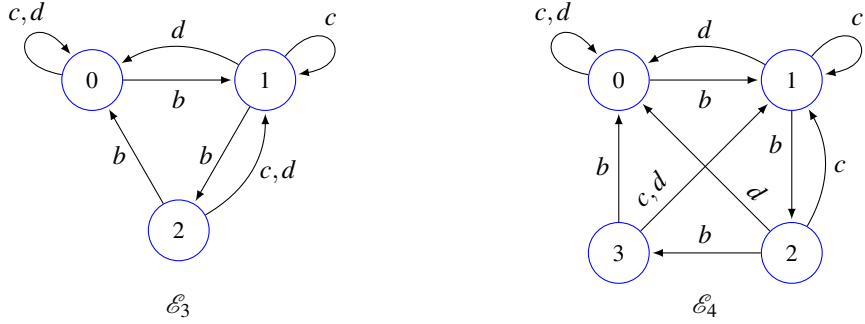
\begin{figure}[hbt]
\begin{center}
\begin{tikzpicture}[->,>=latex,shorten >=1pt,auto,
  every state/.style={circle,draw=blue,minimum size=23pt,font=\small}]

\node[state] (0) at (0,0) {0};
\node[state] (1) at (2.3,0) {1};
\node[state] (2) at (1.15,-2) {2};

\path (0) edge node[below] {$b$} (1)
      (1) edge node[left] {$b$} (2);

\draw[->] (0) to [out=170,in=130,looseness=8] node[above,yshift=1mm] {$c,d$} (0);
\draw[->] (1) to [out=50,in=10,looseness=8] node[above,yshift=1mm] {$c$} (1);
\path (2) edge[bend right] node[right] {$c,d$} (1);

\path (1) edge[bend right] node[above] {$d$} (0)
      (2) edge node[left] {$b$} (0);

\node at (1.15,-3) {$\mE_3$};

\begin{scope}[xshift=7cm]

\node[state] (0r) at (0,0) {0};
\node[state] (1r) at (2.3,0) {1};
\node[state] (2r) at (2.3,-2.3) {2};
\node[state] (3r) at (0,-2.3) {3};

\path (0r) edge node[below] {$b$} (1r)
      (1r) edge node[pos=0.3,left] {$b$} (2r)
      (2r) edge node[above] {$b$} (3r)
      (3r) edge node {$b$} (0r);

\draw[->] (0r) to [out=170,in=130,looseness=8] node[above,yshift=1mm] {$c,d$} (0r);
\draw[->] (1r) to [out=50,in=10,looseness=8] node[above,yshift=1mm] {$c$} (1r);

\path (2r) edge[bend right] node[right] {$c$} (1r)
      (3r) edge node[pos=0.2,sloped,above] {$c,d$} (1r);

\path (1r) edge[bend right] node[above] {$d$} (0r)
      (2r) edge node[pos=0.2,sloped,above] {$d$} (0r);

\node at (1.15,-3) {$\mE_4$};

\end{scope}

\end{tikzpicture}
\caption{Automata $\mE_3\in\mathbf{A}_2\setminus\mathbf{A}_3$ and $\mE_4\in\mathbf{A}_3\setminus\mathbf{A}_4$} \label{fig:strict}
\end{center}
\end{figure}

\begin{lemma}\label{lem:strict}
For each $n\ge 3$, the automaton $\mE_n$ lies in $\mathbf{A}_{n-1}\setminus\mathbf{A}_n$.
\end{lemma}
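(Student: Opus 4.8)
The plan is to prove the two memberships separately, after observing that the game on $\mE_n$ quickly funnels into positions of size two, from which everything is controlled by a single quantity: how many letters are needed to carry the pair $\{0,1\}$ to the pair $\{0,n-1\}$. I would first record some elementary facts about two-element positions, read off directly from \eqref{eq:rulesEn}. Since $b$ acts as a cyclic permutation and neither $c$ nor $d$ can increase the number of tokens, once a position has two states the game stays at (at most) two states until Alice synchronizes. For a two-element set $S$: the set $S{\cdot}c$ is a singleton exactly when $0\notin S$, and equals $\{0,1\}$ otherwise; dually $S{\cdot}d$ is a singleton exactly when $n-1\notin S$, and equals $\{0,1\}$ otherwise. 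Hence Alice synchronizes in one move from every two-element set except $\{0,n-1\}$ (play $c$ if $0\notin S$, else $d$), so $\{0,n-1\}$ is the unique dangerous pair. Moreover, starting from $\{0,1\}$, any letter that keeps two tokens either is $b$ (shifting a cyclically adjacent pair $\{j,j+1\bmod n\}$ to $\{j+1,j+2\}$) or is $c$ or $d$ (each collapsing an applicable pair back to $\{0,1\}$); in particular every two-element position reachable without synchronizing is a cyclically adjacent pair.

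For $\mE_n\in\mathbf{A}_{n-1}$ I would let Alice move first (legitimate, since the ability to win moving first implies it moving second) and have her open with $c$, bringing the position to $\mathbb{Z}_n{\cdot}c=\{0,1\}$. The key lemma is that Bob cannot turn $\{0,1\}$ into $\{0,n-1\}$ with a word of length at most $n-2$ without synchronizing along the way. To prove it, I track a surviving two-element position by the parameter $j$ of its adjacent pair $\{j,j+1\}$: the letter $b$ sends $j\mapsto j+1$, while $c$ and $d$, when they preserve two tokens, reset $j$ to $0$. Since the only operation that raises $j$ does so by $1$ and the resets return to $0$, reaching $j=n-1$ from $j=0$ requires at least $n-1$ letters. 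Consequently any legal Bob response of length $\le n-2$ either synchronizes (Alice wins) or leaves a two-element set other than $\{0,n-1\}$, from which Alice synchronizes on her next move.

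For $\mE_n\notin\mathbf{A}_n$ I would exhibit a winning strategy for Bob that maintains the invariant that $\{0,n-1\}$ is contained in the position before each of Alice's turns; starting from $\mathbb{Z}_n$ this holds at the outset. Under the invariant Alice can never synchronize in one move, because $b$ preserves the token count while both $c$ and $d$ send the position to $\{0,1\}$, which still has two states. To restore the invariant, Bob answers with $b^{n-1}$ in every case: if Alice played $b$, then $b^{n-1}$ undoes it, since $P{\cdot}b{\cdot}b^{n-1}=P$; if Alice played $c$ or $d$, the position is $\{0,1\}$ and $\{0,1\}{\cdot}b^{n-1}=\{0,n-1\}$. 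As $|b^{n-1}|=n-1<n$, this response is legal in the $n$-game, so Bob prevents synchronization forever, i.e.\ Alice has no winning strategy.

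The hard part will be the length bound underlying the first membership, namely that $n-1$ letters are genuinely required to move $\{0,1\}$ to $\{0,n-1\}$ (equivalently, that $b^{n-1}$ is the shortest word achieving this); this is exactly what makes $k=n-1$ the threshold separating the two cases. The cleanest route is the $j$-parameter reduction above, which turns the obstacle into the transparent statement that on the cycle of adjacent pairs one can only advance by one step at a time and can only reset to the start. Everything else reduces to the direct computations from \eqref{eq:rulesEn} on two-element sets.
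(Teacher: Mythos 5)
Your proposal is correct and follows essentially the same route as the paper: Alice opens with $c$ (or $d$) to reach $\{0,1\}$, Bob's only escape is to push the pair all the way around to $\{0,n-1\}$, which needs $n-1$ letters, and Bob wins the $n$-game by answering every move with $b^{n-1}$ to restore $\{0,n-1\}$. The only cosmetic difference is that you phrase the length bound via the adjacent-pair parameter $j$, whereas the paper uses the pointwise inequality $m{\cdot}x\le m+1$ to pin down $b^{n-2}$ as the unique problematic response; both amount to the same monotonicity observation.
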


\begin{proof}
We need to show that Alice has a winning strategy in the $(n-1)$-game on $\mE_n$, whereas Bob can win the $n$-game on this automaton.
	
The rules \eqref{eq:rulesEn} give $\{0,n-1\}{\cdot} x=\{0,1\}$ for each letter $x\in\{b,c,d\}$. Hence
\[
\{0,n-1\}{\cdot} xb^{n-1}=\{0,1\}{\cdot} b^{n-1}=\{0+(n-1)\kern-7pt\pmod{n},\,1+(n-1)\kern-7pt\pmod{n}\}=\{0,n-1\}.
\]
Therefore, if in the $n$-game on $\mathcal{E}_n$ Bob plays the word $b^{n-1}$ on every one of his turns, then after each of his moves both states $0$ and $n-1$ carry tokens. Consequently, Bob prevents synchronization and wins the game.

To win the $(n-1)$-game on $\mE_n$, Alice chooses one of the letters $c$ or $d$ on her first turn. After that, only states 0 and 1 hold tokens. Bob responds with an arbitrary word $w$ of length $<n-1$, which moves the tokens to the states in $\{0,1\}{\cdot}w$. If $n-1\notin\{0,1\}{\cdot}w$, Alice wins by playing the letter $d$. The rules \eqref{eq:rulesEn} ensure that $m{\cdot}x\le m+1$ for each $m\ne n-1$ and each $x\in\{b,c,d\}$. From this, the containment $n-1\in\{0,1\}{\cdot}w$ is possible only when $w=b^{n-2}$. In this case, $\{0,1\}{\cdot}w=\{n-2,n-1\}$, and Alice wins by playing the letter $c$.
\end{proof}

\begin{corollary}\label{cor:strict}
All inclusions in\/ \eqref{eq:hierarchy} are strict.
\end{corollary}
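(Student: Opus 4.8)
The plan is to read off the strictness of all inclusions in \eqref{eq:hierarchy} from Lemma~\ref{lem:strict}, treating separately only the very first inclusion, which the family $\{\mE_n\}_{n\ge 3}$ does not reach. The chain \eqref{eq:hierarchy} records the consecutive containments $\mathbf{A}_k\supseteq\mathbf{A}_{k+1}$ for $k\in\mathbb{N}$ together with the containment $\mathbf{A}_\omega\subseteq\mathbf{A}_k$ for every~$k$, the latter being immediate from~(O3). Hence any witness lying in $\mathbf{A}_k\setminus\mathbf{A}_{k+1}$ automatically lies in $\mathbf{A}_k\setminus\mathbf{A}_\omega$ as well, and it suffices to produce, for each $k$, one automaton separating $\mathbf{A}_k$ from $\mathbf{A}_{k+1}$.

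For $k\ge 2$ I would simply invoke Lemma~\ref{lem:strict} with $n=k+1\ge 3$: it places $\mE_{k+1}$ in $\mathbf{A}_{k}\setminus\mathbf{A}_{k+1}$, giving $\mathbf{A}_k\supsetneq\mathbf{A}_{k+1}$ at once. Since $\mathbf{A}_\omega\subseteq\mathbf{A}_{k+1}$, the same automaton satisfies $\mE_{k+1}\notin\mathbf{A}_\omega$ while $\mE_{k+1}\in\mathbf{A}_k$, so $\mathbf{A}_k\supsetneq\mathbf{A}_\omega$ too. This settles every inclusion from $\mathbf{A}_2$ downward.

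The single inclusion left is $\mathbf{A}_1\supsetneq\mathbf{A}_2$; it is not covered by the lemma, because $\mE_n\in\mathbf{A}_{n-1}\subseteq\mathbf{A}_2$ for all $n\ge 3$. For this I would exhibit a synchronizing automaton on which Bob wins the $2$-game, and the \v{C}ern\'y automaton $\mC_3$ already serves. It is synchronizing, so $\mC_3\in\mathbf{A}_1$ by~(O1). Among the two-element subsets of $\mathbb{Z}_3$, only $\{0,1\}$ is sent to a singleton by a letter (namely $\{0,1\}{\cdot}a=\{1\}$), whereas $b$ permutes $\mathbb{Z}_3$ and collapses nothing. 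Bob can therefore keep, at the start of each of Alice's turns, the position away from singletons and from $\{0,1\}$: he answers with the empty word, except that whenever Alice's letter produces $\{0,1\}$ he plays $b$, sending it to $\{0,1\}{\cdot}b=\{1,2\}$. Since the only single-letter move reaching a singleton is $a$ applied to $\{0,1\}$, which Alice never faces, synchronization never occurs and $\mC_3\notin\mathbf{A}_2$.

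I do not expect a genuine obstacle: the mathematical content sits entirely in Lemma~\ref{lem:strict}, and the corollary is essentially bookkeeping over the index $k$. The only point demanding attention is the boundary case $k=1$, where the separating family is unavailable and one must both choose an independent witness and verify, by the short invariant argument above, that Bob indeed wins the $2$-game on it.
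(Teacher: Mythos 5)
Your proposal is correct and follows essentially the same route as the paper: Lemma~\ref{lem:strict} with $n=k+1$ handles every inclusion $\mathbf{A}_k\supseteq\mathbf{A}_{k+1}$ with $k\ge2$, and a \v{C}ern\'y automaton witnesses $\mathbf{A}_1\supsetneqq\mathbf{A}_2$. The only difference is that the paper cites \cite[Example~1]{FMV} for Bob's win in the $2$-game on $\mC_n$, whereas you verify the case of $\mC_3$ directly with a correct invariant argument (the position at the start of Alice's turn is never $\{0,1\}$ nor a singleton), which makes the proof self-contained.
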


\begin{proof}
For $k>1$, the strict containment $\mathbf{A}_k\supsetneqq \mathbf{A}_{k+1}$ is witnessed by Lemma~\ref{lem:strict}. By (O1), the strictness of $\mathbf{A}_1\supsetneqq \mathbf{A}_2$ amounts to saying that Bob can win the 2-synchronization game on a \san. This is witnessed by any \v{C}ern\'{y} automaton $\mathrsfs{C}_{n}$ with $n\ge 3$. Indeed, according to~\cite[Example 1]{FMV}, Bob wins on $\mathrsfs{C}_{n}$ for $n\ge 3$ even if he is not allowed to skip moves, that is, to play the empty word.
\end{proof}

Two questions about the hierarchy \eqref{eq:hierarchy} arise naturally. The first is how to efficiently determine the level of~\eqref{eq:hierarchy} which a given \san{} $\mA$ reaches, that is, to compute the largest $k\in\overline{\mathbb{N}}$ such that $\mA\in\mathbf{A}_k$. The second is how the \rl{} of $\mA$ depends on its level. We address these questions in Sections~\ref{sec:decidability} and~\ref{sec:speed}, respectively.

Yet another question of interest concerns the behaviour of the restriction of the hierarchy~\eqref{eq:hierarchy} to DFAs with a fixed number of states. This is studied in Section~\ref{sec:nstates}.

\subsection{Localization} Sometimes, we will need a localization result that is a slight modification of \cite[Lemma 2]{FMV}. For the reader's convenience we include its proof.

\begin{lemma}
\label{lem:localization} Alice has a winning strategy in the $k$-game on a DFA if and only if she has a winning strategy from every position in which only two states of the DFA hold tokens.
\end{lemma}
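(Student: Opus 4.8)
The plan is to prove the two directions of the equivalence separately, with the forward direction being essentially trivial and the backward direction carrying the content. The forward direction is a specialization: if Alice wins the $k$-game on $\mA$ starting from the full position $Q$, then I claim she wins from any two-token position $\{p,q\}$. The natural idea is that a two-token position $\{p,q\}$ arises from $Q$ after some history, or more robustly, any winning strategy from $Q$ restricts to a winning strategy from a smaller position, because the action of any word on a subset $P\subseteq Q$ is monotone under inclusion: if $P'\subseteq P$ then $P'{\cdot}w\subseteq P{\cdot}w$. Thus if Alice can drive $Q$ down to a singleton against every Bob response, the same sequence of her letters drives any subset, in particular any pair, down to a singleton as well.

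The backward direction is the main obstacle and requires a genuine strategy-synthesis argument. Here I would assume Alice wins from every two-element position and build a winning strategy from the full position $Q$. The key step is to formalize the intuition that synchronizing a large set reduces to repeatedly merging pairs of tokens. I would argue as follows: from the full position, Alice picks any two states $p,q$ currently holding tokens and plays her winning strategy for the pair $\{p,q\}$, \emph{ignoring} all the other tokens on the board. By monotonicity of the subset action, whatever word $u$ ends up being played during this phase, the images $p{\cdot}u$ and $q{\cdot}u$ coincide (that is how Alice wins the two-token game), so the two chosen tokens merge into one. Crucially, the other tokens also move under $u$, but the total number of distinct tokens cannot increase --- the map $P\mapsto P{\cdot}u$ never enlarges a set --- and after this phase the two originally chosen tokens have become a single token. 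Hence the position strictly decreases in cardinality. Alice then repeats this procedure on the new, smaller position.

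The delicate point I expect to be the crux is the \emph{interleaving} of Bob's moves across these phases, and making sure the number of tokens strictly drops each phase rather than merely not increasing. When Alice runs her pair-strategy for $\{p,q\}$, Bob's responses are exactly the inputs that the pair-strategy is designed to handle, so the strategy remains valid even though the board carries extra tokens that Alice is pretending not to see. At the end of the phase, $p{\cdot}u=q{\cdot}u$, so the image of $\{p,q\}$ is a singleton; since token images only merge, never split, the full position $Q{\cdot}u$ has at least one fewer token than before (the merged pair accounts for the drop, and no compensating increase is possible). I would package this as an induction on the number of tokens in the current position: the base case is a single token (Alice has already won), and each phase reduces the count by at least one, so after at most $|Q|-1$ phases only one token remains and Alice has synchronized $\mA$. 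The only thing requiring care is that a phase always \emph{terminates} --- but this is guaranteed because the pair-strategy is winning, hence forces $\{p,q\}$ to a singleton in finitely many turns regardless of Bob's play.
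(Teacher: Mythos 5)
Your proof is correct, and its substantive half coincides with the paper's. The backward direction---Alice picks a pair of tokens, runs her pair strategy while ignoring the rest of the board, observes that the chosen pair must merge while no other merge can be undone, and iterates, so the token count strictly drops each phase---is exactly the argument given in the paper, including the points you flag as delicate (validity of the pair strategy against interleaved Bob moves, strict decrease, termination of each phase). Where you genuinely differ is the easy direction. The paper argues by contraposition: if Alice cannot win from some two-token position, then Bob has a winning strategy there (this implicitly uses determinacy of the game), and Bob imports that strategy into the full game by tracking only those two tokens, so they survive forever. You instead argue directly via monotonicity of the subset action ($P'\subseteq P$ implies $P'{\cdot}w\subseteq P{\cdot}w$): Alice's strategy from the full position forces $|Q{\cdot}u|=1$ for the eventual history $u$ no matter what Bob plays, hence $|\{p,q\}{\cdot}u|=1$ as well. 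Your route is slightly more self-contained, since it does not need the fact that one of the two players must have a winning strategy; the paper's route has the advantage of explicitly exhibiting Bob's counter-strategy, a simulation device it reuses later (e.g., in the proof of Proposition~\ref{prop:criterion}). Both are sound.
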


\begin{proof}
If Alice has no winning strategy from a position with two tokens, say $T$ and $T'$, then Bob has a winning strategy from this position. If Bob plays from the initial position according to this strategy, that is, choosing his moves solely on the basis of the locations of the tokens $T$ and $T'$, as if no other tokens were present, then the two tokens persist forever, and Alice loses the game. (Here we assume that whenever one of the tokens $T$ and $T'$ meets some third token on some state during the course of the game, then it is this third token that is removed.)

Conversely, if Alice can win from every position in which only two states hold tokens, she may proceed as follows. In the initial position, she selects a pair of tokens, say $T$ and $T'$, and plays as if no other tokens were present, that is, she applies her winning strategy from the position in which $T$ and $T'$ occupy the same states as they do in the initial position and all other tokens are removed. This brings the game to a position in which $T$ or $T'$ is removed. Alice then selects a new pair of tokens and plays as if these were the only tokens, and so on. Since at least one token is removed in each round, Alice eventually wins.
\end{proof}

The proof of Lemma~\ref{lem:localization} implies an observation similar to that in \cite[Corollary 3]{FMV}:

\begin{corollary} \label{cor:cubic}
If Alice has a winning strategy in the $k$-game on a DFA with $n$ states, she can win in at most $\binom{n}2(n-2)+1$ moves.
\end{corollary}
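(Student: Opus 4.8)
The plan is to extract the bound directly from the pair-merging strategy furnished by the proof of Lemma~\ref{lem:localization}, together with one extra observation that exploits the fact that the game starts from a full board. Recall that, by that lemma, a winning Alice may play in \emph{phases}: she fixes two tokens, runs her winning two-token strategy for that pair while treating every other token as absent, and continues until the two tracked tokens coincide; she then picks a new pair and repeats. Each phase ends with a merge, so it lowers the number of tokens on the board by at least one.

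Two estimates control the cost. First, a single phase uses at most $\binom{n}{2}$ of Alice's moves: throughout a phase the configuration of the two tracked tokens is a two-element subset of $Q$, and under optimal play no such subset can repeat, since a repetition would let Bob replay his intermediate responses indefinitely and keep the pair apart, contradicting that Alice wins that two-token game; as $Q$ has only $\binom{n}{2}$ two-element subsets, the merge occurs within that many moves. Second, I would shave one whole phase by using Alice's very first move wisely. Because Alice wins from every two-token position, some two-token position has rank one, i.e.\ there is a pair $\{x,y\}$ and a letter $a$ with $x{\cdot}a=y{\cdot}a$. Since at the outset every state carries a token, playing $a$ as the opening move instantly merges the tokens at $x$ and $y$, so after one move at most $n-1$ tokens remain. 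From there Alice needs at most $n-2$ further phases to reduce the board to a single token, each costing at most $\binom{n}{2}$ moves. Adding the opening move gives the claimed bound $\binom{n}{2}(n-2)+1$.

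The routine parts are the phase count and the telescoping arithmetic; the two points I would take care over are the no-repetition argument that pins a phase at $\binom{n}{2}$ moves (it rests on the finiteness and well-foundedness of Alice's winning two-token strategy, which is exactly what Lemma~\ref{lem:localization} supplies) and the verification that a rank-one pair exists---equivalently, that a minimal-rank winning two-token position collapses under a single letter, which follows by tracing any winning two-token play down to the move immediately preceding its merge. This existence claim, which is what licenses the saving of $\binom{n}{2}-1$ moves over the naive bound $\binom{n}{2}(n-1)$, is the one step I expect to require genuine (though short) justification.
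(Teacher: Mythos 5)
Your proposal is correct and follows essentially the same route as the paper: bound each pair-tracking round by $\binom{n}{2}$ moves via the no-repeated-pair argument (a repetition would let Bob cycle forever), and save the extra round by noting that a synchronizing automaton always has two states merged by a single letter, so Alice's opening move already removes a token, leaving $n-2$ rounds. The only cosmetic difference is that the paper justifies the opening merge directly from synchronizability rather than via a minimal-rank two-token position, but the content is identical.
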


\begin{proof}
Select a pair of tokens $T$ and $T'$ and let $q_i$ and $q'_i$ denote the states holding these tokens $T$ and $T'$ after the $i$-th move of Alice. Then if Alice plays optimally, we must have $\{q_i,q'_i\}\ne\{q_{i+j},q'_{i+j}\}$ for every $j>0$. Indeed, the equality $\{q_i,q'_i\}=\{q_{i+j},q'_{i+j}\}$ would mean that, regardless of how Alice moves $T$ and $T'$ by her $(i+1)$-st, \dots, $(i+j-1)$-st moves, Bob can force Alice to return the tokens, by her $(i+j)$-th move, to the same states they occupied after her $i$-th move. Analogously, Bob can force Alice to return $T$ and $T'$ to the same states also after her $(i+2j)$-th, $(i+3j)$-th, \dots\ moves. Consequently, neither of the two tokens can ever be removed, contradicting the assumptions that Alice has a winning strategy and plays optimally.

Hence the number of Alice's moves in any round in which she operates with any fixed pair of tokens does not exceed $\binom{n}2$. Moreover, in every \san\ there exist states $q$ and $q'$ such that $q{\cdot}a=q'{\cdot}a$ for some letter $a$. Therefore Alice can remove one token by her first move. After that, she needs at most $n-2$ rounds to remove $n-2$ of the remaining $n-1$ tokens. Altogether, Alice needs at most $\binom{n}2(n-2)+1$ moves.
\end{proof}

\section{Synchronization speed}
\label{sec:speed}

\subsection{Reset thresholds of $A_\omega$-automata}
\label{subsec:omegaspeed}
We aim to show that the \rl{} of every $A_\omega$-automaton is less than the number of its states. In a rough approximation, the idea of the proof is that we assign to each position in the $\omega$-game on a DFA
$(Q, \Sigma)$ the value of a certain parameter with the following properties: first, Alice always has a move that decreases the current value (in an appropriate sense), while no move of Bob can increase it; and second, the total number of possible values of this parameter is less than $|Q|$. To embody this idea, we need several concepts and pieces of notation.

For a DFA $\mA = (Q, \Sigma)$, we define a sequence $\{E_\ell\}_{\ell=0,1,2,\dots}$ of binary relations on the set $Q$ by induction:
	\begin{itemize}
		\item $E_0 := \{(q, q) : q \in Q\}$;
		
		\item $E_{\ell+1}$ is the set of all pairs $(p,q)\in Q\times Q$ such that for every word $w \in \Sigma^*$, there exists a letter $x\in\Sigma$ such that $(p{\cdot}wx,\,q{\cdot}wx)\in E_\ell$.
	\end{itemize}
In terms of the $\omega$-game on $\mA$, the definition says that $(p,q)\in E_\ell$ if and only if Alice has a strategy that allows her to move tokens from the states $p$ and $q$ to a common state in at most $\ell$ moves, provided that Bob moves first.

A straightforward induction on $\ell$ shows that $(p,q)\in E_\ell$ if and only if $(q, p)\in E_\ell$ for all $\ell$; that is, the relations $E_\ell$ are symmetric. We therefore regard the pairs $(Q,E_\ell)$ as (undirected) graphs and use the corresponding terminology. In this context, we take the liberty of occasionally referring to pairs $(p,q)\in E_\ell$ as \emph{edges}, although, strictly speaking, the edges of $(Q,E_\ell)$ are the sets $\{p,q\}$ with $(p,q)\in E_\ell$.

We record a few properties of the relations $E_\ell$.

\begin{proposition}\label{prop:PRT1} Let $\mA = (Q, \Sigma)$ be a DFA, $\ell$ a nonnegative integer.
	\begin{description}
		\item[\it(Stability)] if $(p,q)\in E_\ell$, then $(p{\cdot}w, q{\cdot}w) \in E_\ell$ \ for every word $w \in \Sigma^*$;
		
		\item[\it(Containment)] $E_\ell\subseteq E_{\ell+1}$;
		
		\item[\it(Criterion)] $\mA$ is an $A_\omega$-automaton if and only if\/ $\bigcup_{\ell \geq 0} E_\ell = Q\times Q$.
	\end{description}
\end{proposition}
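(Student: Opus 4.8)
These I expect to be quick, near-tautological inductions that I would dispose of first. For \emph{Stability} the base case $\ell=0$ is immediate, since $(p,q)\in E_0$ means $p=q$, whence $p{\cdot}w=q{\cdot}w$. For $\ell\ge 1$ I would argue directly from the definition, without even invoking an induction hypothesis: given $(p,q)\in E_\ell$ and a word $w$, to verify $(p{\cdot}w,q{\cdot}w)\in E_\ell$ I must produce, for each word $v$, a letter $x$ with $((p{\cdot}w){\cdot}vx,(q{\cdot}w){\cdot}vx)\in E_{\ell-1}$; applying the defining property of $E_\ell$ to the word $wv$ yields exactly such an $x$, because $(p{\cdot}w){\cdot}vx=p{\cdot}(wv)x$ by associativity of the action. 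For \emph{Containment} I would induct on $\ell$: the base $E_0\subseteq E_1$ holds because a diagonal pair satisfies the condition defining $E_1$ (every letter keeps it on the diagonal $E_0$), and the step is just monotonicity of the defining operator — if $(p,q)\in E_{\ell+1}$, then the pairs $(p{\cdot}wx,q{\cdot}wx)$ furnished by the definition lie in $E_\ell\subseteq E_{\ell+1}$ by hypothesis, which is precisely the condition for $(p,q)\in E_{\ell+2}$.

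\textbf{Criterion.} This is the substantive part, and my plan is to route it through the Localization Lemma~\ref{lem:localization} together with the game reading of the relations $E_\ell$ recorded just after their definition. Writing $E_\infty:=\bigcup_{\ell\ge 0}E_\ell$, the intended bridge is the equivalence: $(p,q)\in E_\infty$ if and only if Alice wins the $\omega$-game from the two-token position $\{p,q\}$. Granting this, the Criterion is immediate: by Lemma~\ref{lem:localization}, $\mA$ is an $A_\omega$-automaton iff Alice wins from every two-token position, which by the equivalence means $(p,q)\in E_\infty$ for all pairs $(p,q)$ — the diagonal pairs lying in $E_0\subseteq E_\infty$ for free — that is, $E_\infty=Q\times Q$. (Here I use that in the $\omega$-game Bob may pass by playing $\varepsilon$, so the ``Bob moves first'' convention built into the definition of $E_\ell$ agrees with Alice simply having a winning strategy from the position.)

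\textbf{The crux.} The one genuinely non-formal step is this equivalence, and within it the forward implication. That $(p,q)\in E_\ell$ yields an Alice win is clear, since the membership literally encodes a strategy synchronizing the pair within $\ell$ of Alice's moves. The obstacle is the converse: an Alice win from $\{p,q\}$ might \emph{a priori} take unboundedly many moves and so fail to place $(p,q)$ in any finite $E_\ell$. I would remove this obstacle using finiteness rather than an explicit move count. By Containment the $E_\ell$ form an increasing chain inside the finite set $Q\times Q$, hence they stabilize: $E_{\ell^*}=E_{\ell^*+1}=E_\infty$ for some $\ell^*$. Now suppose $(p,q)\notin E_\infty$. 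Then $(p,q)\notin E_{\ell^*+1}$, and negating the definition produces a word $w$ such that $(p{\cdot}wx,q{\cdot}wx)\notin E_{\ell^*}=E_\infty$ for \emph{every} letter $x$. Thus from any pair outside $E_\infty$ Bob has a single move after which, whatever letter Alice plays, the pair is again outside $E_\infty$; playing this way indefinitely, Bob keeps the pair outside $E_\infty\supseteq E_0$, so the two tokens never coincide and Alice cannot win from $\{p,q\}$. This is the contrapositive of the forward implication, completing the equivalence. (Alternatively, the explicit count in the proof of Corollary~\ref{cor:cubic}, bounding each pair-round by $\binom{n}{2}$ moves, places any winnable pair directly in $E_{\binom{n}{2}}$.)
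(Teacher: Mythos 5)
Your proposal is correct and follows essentially the same route as the paper: Stability by applying the defining property of $E_\ell$ to the concatenated word, Containment by the same induction, and the Criterion via stabilization of the increasing chain $E_0\subseteq E_1\subseteq\cdots$ in the finite set $Q\times Q$ followed by negating the definition of $E_{\ell^*+1}$ to extract Bob's escape word. The only cosmetic difference is that you channel both directions of the Criterion through Lemma~\ref{lem:localization}, whereas the paper uses it only for the ``if'' part and lets Bob's pair-preserving strategy on $H$ refute synchronization of the whole state set directly; the substance is identical.
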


\begin{proof}
\textit{Stability.} The claim is immediate for $\ell=0$. Let $\ell>0$, and denote $p':=p{\cdot}w$ and $q':=q{\cdot}w$. Take an arbitrary word $w'\in\Sigma^*$. Then $p'{\cdot}w'=p{\cdot}ww'$ and $q'{\cdot}w'=q{\cdot}ww'$. Applying the definition of $E_\ell$ to the pair $(p,q)$ and the word $ww'$ ensures the existence of a letter $x\in\Sigma$ such that $(p{\cdot}ww'x,\,q{\cdot}ww'x)\in E_{\ell-1}$. Thus, $(p'{\cdot}w'x,\,q'{\cdot}w'x)\in E_{\ell-1}$, and this verifies that $(p',q')\in E_\ell$.

\smallskip
	
\textit{Containment.} We induct on $\ell$. The base $E_0 \subseteq E_1$ is clear. Let $\ell > 0$. By definition, for each $(p,q) \in E_\ell$ and each $w \in \Sigma^*$, there is a letter $x\in\Sigma$ such that $(p{\cdot}wx,\,q{\cdot}wx)\in E_{\ell-1}$. By the inductive assumption, $E_{\ell-1} \subseteq E_\ell$, whence $(p{\cdot}wx,\,q{\cdot}wx)\in E_\ell$. Thus, $(p,q) \in E_{\ell+1}$.

\smallskip

\textit{Criterion.} The `only if' part. Since the set $Q$ is finite and $E_\ell\subseteq E_{\ell+1}$ for all $\ell\ge 0$, there exists an integer $L$ such that $\bigcup_{\ell \geq 0} E_\ell=E_L$. Let $H:=(Q\times Q)\setminus E_L$ and suppose, towards a contradiction, that $H\ne\varnothing$. Then, for every pair $(p,q) \in H$, we have $p\ne q$, and there exists a word $w_{(p,q)}\in\Sigma^*$ such that $(p{\cdot}w_{(p,q)}x,\,q{\cdot}w_{(p,q)}x)\in H$ for every letter $x\in\Sigma$. Indeed, otherwise, for every $w\in\Sigma^*$, one could find a letter $x\in\Sigma$ such that $(p{\cdot}wx,\,q{\cdot}wx)\in E_L$ which would mean that $(p,q)\in E_{L+1}=E_L$.

Now, Bob wins the $\omega$-game on $\mA$, using the following strategy. He first chooses a pair $(p_0,q_0)\in H$ and plays the word $w_1:=w_{(p_0,q_0)}$.  If Alice then plays a letter $x_1$ and we set $p_1 := p_0{\cdot}w_1x_1$ and $q_1 := q_0{\cdot}w_1x_1$, Bob responds by playing the word $w_2 := w_{(p_1,q_1)}$, and so on. By construction, after each turn the pair of states reached from $(p_0,q_0)$ belongs to $H$, and hence consists of two distinct states. Therefore, at no turn can the history of the game be a reset word for $\mA$. This shows that Bob has a winning strategy on $\mA$, contradicting the assumption that $\mA$ is an $A_\omega$-automaton.

\smallskip

The `if' part. By Lemma~\ref{lem:localization}, it suffices to show that Alice wins from every position with only two tokens. Take any such position, and let $p$ and $q$ be the states holding the tokens. Since $\bigcup_{\ell\ge0} E_\ell = Q\times Q$, there exists $\ell$ such that the pair $(p,q)$ belongs to the relation $E_\ell$. As observed after the definition of $E_\ell$, Alice then has a strategy that allows her to move the tokens from the states $p$ and $q$ to a common state in at most $\ell$ moves.
\end{proof}

By  Proposition \ref{prop:PRT1}, in any $A_\omega$-automaton $(Q,\Sigma)$, every pair $(p,q)\in Q\times Q$ belongs to some relation $E_\ell$. We denote by $d(p,q)$ the smallest integer $\ell$ with this property. All items in the next proposition follow easily from the definitions and Proposition \ref{prop:PRT1}.

\begin{proposition}\label{prop:dproperties}
	In any $A_\omega$-automaton $(Q,\Sigma)$, the following hold for all $p,q \in Q$:
	\begin{enumerate}
		\item[\rm(D1)] $d(p,q) = 0$ if and only if $p = q$;
		
		\item[\rm(D2)] $d(p,q) = d(q,p)$;
		
		\item[\rm(D3)] $d(p{\cdot}w,\,q{\cdot}w) \le d(p,q)$ for every word $w\in\Sigma^*$;
		
		\item[\rm(D4)] if $d(p,q) > 0$, then there is a letter $x\in\Sigma$ such that $d(p{\cdot}x,\,q{\cdot}x)<d(p,q)$.
	\end{enumerate}
\end{proposition}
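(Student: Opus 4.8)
The plan is to verify each of the four items directly, treating $d(p,q)$ as the least index $\ell$ for which $(p,q)\in E_\ell$ and invoking, for each item, exactly one clause of Proposition~\ref{prop:PRT1} or one line of the definition. For (D1) I would simply note that $d(p,q)=0$ means $(p,q)\in E_0$, and since $E_0=\{(r,r):r\in Q\}$ by definition, this is equivalent to $p=q$. For (D2) I would appeal to the symmetry of the relations $E_\ell$ recorded just after their definition: $(p,q)\in E_\ell$ holds precisely when $(q,p)\in E_\ell$, so the least index at which $(p,q)$ enters the family coincides with that for $(q,p)$, giving $d(p,q)=d(q,p)$.

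For (D3) I would set $\ell:=d(p,q)$, so that $(p,q)\in E_\ell$, and then apply the \textit{Stability} clause of Proposition~\ref{prop:PRT1}, which gives $(p{\cdot}w,\,q{\cdot}w)\in E_\ell$ for every $w\in\Sigma^*$; by minimality of $d$ this yields $d(p{\cdot}w,\,q{\cdot}w)\le\ell=d(p,q)$. For (D4) I would suppose $\ell:=d(p,q)>0$, so that $(p,q)\in E_\ell=E_{(\ell-1)+1}$, and unpack the inductive step of the definition: for every word $w\in\Sigma^*$ there is a letter $x$ with $(p{\cdot}wx,\,q{\cdot}wx)\in E_{\ell-1}$. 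Specializing to $w=\varepsilon$ produces a letter $x$ with $(p{\cdot}x,\,q{\cdot}x)\in E_{\ell-1}$, whence $d(p{\cdot}x,\,q{\cdot}x)\le\ell-1<d(p,q)$.

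I do not anticipate any genuine obstacle here, since each assertion collapses to a single application of the definition or of an already-proved property; the proposition's own phrasing (`follow easily') reflects this. The only points deserving a moment's care are bookkeeping rather than substance: in (D4) one must check that $\ell\ge 1$ so that the index $\ell-1$ is legitimate (which holds by the hypothesis $d(p,q)>0$), and one must be sure to invoke the correct clause in each case---\textit{Stability} for (D3), the inductive defining property of $E_{(\ell-1)+1}$ specialized to the empty word for (D4)---rather than conflating them.
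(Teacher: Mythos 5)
Your proposal is correct and follows exactly the route the paper intends: the paper gives no written proof beyond the remark that all items ``follow easily from the definitions and Proposition~\ref{prop:PRT1}'', and your four one-line arguments (definition of $E_0$ for (D1), symmetry of the $E_\ell$ for (D2), \emph{Stability} for (D3), and the inductive clause of the definition specialized to $w=\varepsilon$ for (D4)) are precisely the intended details. No gaps.
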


\begin{remark}
It can be verified that the function $d$ also satisfies the triangle inequality $d(p,r) \le d(p,q) + d(q,r)$, that is, $d$ is a metric on the set $Q$. However, we do not employ this fact.
\end{remark}

We need some notation related to the connected components of the graphs $(Q,E_\ell)$, where $(Q,\Sigma)$ is an $A_\omega$-automaton. For brevity, we refer to the set of all vertices of a connected component of a graph $G$ as a \emph{component} of $G$.

For every integer $\ell\ge 0$, let $\mathcal{D}_\ell$ denote the set of all components of the graph $(Q, E_\ell)$. Each $\mathcal{D}_\ell$ is a partition of the set $Q$, and when viewed as such, $\mathcal{D}_{\ell+1}$ is coarser than $\mathcal{D}_{\ell}$. Thus, every component in $\mathcal{D}_{\ell+1}$ is a union of components from $\mathcal{D}_\ell$. We therefore obtain a descending chain of partitions with respect to the refinement order. The chain starts with the one-block partition $\{Q\}$ (which is $\mathcal{D}_K$ for some integer $K$ such that $E_K = Q\times Q$; such a~$K$ exists since $\bigcup_{\ell \geq 0} E_\ell = Q\times Q$ and the set $Q$ is finite) and ends with $\mathcal{D}_0 = \{\{q\} : q \in Q\}$. We set $\mathcal{D}:= \bigcup_{\ell\ge 0} \mathcal{D}_\ell$ and $\mathcal{D}^+:=\mathcal{D}\setminus\mathcal{D}_0$. Notice that any two components in $\mathcal{D}$ are either disjoint or comparable with respect to inclusion.

For any component $C\in \mathcal{D}$, let $h(C)$ denote the smallest integer $\ell$ such that $C\in\mathcal{D}_\ell$. Notice that $h(C)=0$ if and only if $C\in\mathcal{D}_0$, that is, if and only if $|C|=1$. Each $C\in\mathcal{D}^+$ is partitioned in more than one components from $\mathcal{D}_{h(C)-1}$. We denote by $\mathcal{V}(C)$ the set of components of this partition of $C$. Thus, $\mathcal{V}(C)\subseteq\mathcal{D}_{h(C)-1}$ and $|\mathcal{V}(C)|>1$.

For every nonempty subset $A\subseteq Q$, there is the least (with respect to inclusion) component in $\mathcal{D}$ that contains $A$. We denote this component by $C_A$. Notice that $|C_A|>1$ if and only if $|A|>1$.
Denoting $h(C_A)$ by $\ell$, we introduce a further bundle of notation:
	\begin{itemize}
		\item $\mathcal{T}_A$ is the family of all connected graphs $G = (V,E)$ with $A\subseteq V\subseteq C_A$ and $E\subseteq E_\ell$;
		
		\item $\theta_\ell(G)$ for $G = (V,E)\in\mathcal{T}_A$ is the number of edges $e\in E$ such that $d(e)=\ell$;
		
		\item $\gamma(A): = \min\limits_{G \in \mathcal{T}_A} \theta_\ell(G)$.
	\end{itemize}

\begin{lemma}\label{lem:gamma}
For every $A_\omega$-automaton $(Q,\Sigma)$ and every nonempty subset $A\subseteq Q$,
\[
\gamma(A)\le\gamma(C_A)=\begin{cases}
			|\mathcal{V}(C_A)| - 1 &\text{if }\ |A|>1; \\
			0 &\text{otherwise.}
		    \end{cases}
\]
\end{lemma}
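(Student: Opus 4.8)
The plan is to treat the two displayed claims separately---the inequality $\gamma(A)\le\gamma(C_A)$ and the evaluation of $\gamma(C_A)$---and to dispose of the degenerate case $|A|=1$ at once. When $|A|=1$, say $A=\{q\}$, we have $C_A=\{q\}$ and $\ell:=h(C_A)=0$, so the only graph in $\mathcal{T}_A$ is the single vertex $\{q\}$ with no edges; hence $\gamma(A)=\gamma(C_A)=0$, which settles both the inequality (as equality) and the ``otherwise'' branch. For the inequality in general I would observe that it is purely formal: writing $C:=C_A$ and $\ell:=h(C_A)$, note that $C_C=C$ and $h(C)=\ell$, so every $G\in\mathcal{T}_C$ has vertex set exactly $C$ and edges inside $E_\ell$, and therefore satisfies $A\subseteq C\subseteq C_A$ with $E\subseteq E_\ell$. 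Thus $\mathcal{T}_C\subseteq\mathcal{T}_A$, and taking the minimum of the same quantity $\theta_\ell$ over a smaller family can only increase the value, giving $\gamma(A)\le\gamma(C)=\gamma(C_A)$.

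It remains to compute $\gamma(C)$ when $|A|>1$ (so $|C|>1$, $C\in\mathcal{D}^+$, $\ell\ge 1$, and $\mathcal{V}(C)$ is defined); put $m:=|\mathcal{V}(C)|>1$ and write $\mathcal{V}(C)=\{V_1,\dots,V_m\}$. The key remark is a dichotomy for the edges of any $G=(C,E)\in\mathcal{T}_C$: since $E\subseteq E_\ell$, every edge $\{p,q\}$ has $d(p,q)\le\ell$, and by the Containment property $d(p,q)\le\ell-1$ holds exactly when $(p,q)\in E_{\ell-1}$, i.e.\ exactly when $p$ and $q$ lie in the same block $V_i$ (because the $V_i$ are precisely the connected components of $(Q,E_{\ell-1})$ making up $C$). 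Consequently $\theta_\ell(G)$ counts exactly those edges of $G$ joining two distinct blocks, the ``crossing'' edges.

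For the lower bound $\gamma(C)\ge m-1$ I would delete from a connected $G\in\mathcal{T}_C$ all its crossing edges. Every surviving edge lies within one block, so each connected component of the remaining graph is contained in a single $V_i$; the component partition thus refines $\{V_1,\dots,V_m\}$ and has at least $m$ blocks. Re-adding the crossing edges to recover the connected graph $G$ reduces the number of components by at most one per edge, so at least $m-1$ crossing edges are needed, i.e.\ $\theta_\ell(G)\ge m-1$. For the matching upper bound I would exhibit one $G\in\mathcal{T}_C$ with $\theta_\ell(G)=m-1$: inside each $V_i$ choose a spanning tree using edges of $E_{\ell-1}$ (possible since $V_i$ is connected in $(Q,E_{\ell-1})$), contributing no crossing edges; then, since $C$ is connected in $(Q,E_\ell)$ and contracting each connected block $V_i$ preserves connectivity, the quotient on the $m$ blocks is connected and has a spanning tree of $m-1$ edges, which lift to $m-1$ crossing edges of $E_\ell$. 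Their union is a connected spanning graph of $C$ with exactly $m-1$ crossing edges, so $\gamma(C)\le m-1$. Combining the two bounds yields $\gamma(C_A)=|\mathcal{V}(C_A)|-1$.

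The only genuinely delicate point, and the one I would write out most carefully, is the edge dichotomy together with the ``components refine $\mathcal{V}(C)$'' step of the lower bound; everything else is either formal (the inclusion $\mathcal{T}_C\subseteq\mathcal{T}_A$) or a routine spanning-tree-and-contraction construction. Specifically, I would make sure to justify that $d(p,q)\le\ell-1$ is equivalent to $(p,q)\in E_{\ell-1}$ and that membership in $E_{\ell-1}$ forces $p$ and $q$ into a common block of $\mathcal{V}(C)$, since this equivalence is exactly what converts the metric quantity $\theta_\ell$ into the combinatorial count of crossing edges on which both bounds rest.
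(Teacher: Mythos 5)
Your argument is correct and follows essentially the same route as the paper: the inequality via $\mathcal{T}_{C_A}\subseteq\mathcal{T}_A$, the lower bound $\theta_\ell(G)\ge|\mathcal{V}(C)|-1$ from the fact that connecting the $|\mathcal{V}(C)|$ blocks requires that many inter-block edges, and the upper bound by gluing $E_{\ell-1}$-spanning trees of the blocks to a spanning tree of the connected quotient graph. One caveat: the ``edge dichotomy'' you single out as the key point is stated too strongly. It is true that $d(p,q)\le\ell-1$ holds iff $(p,q)\in E_{\ell-1}$, and that an $E_{\ell-1}$-edge forces $p$ and $q$ into the same block; but the converse fails, since two states lying in the same connected component of $(Q,E_{\ell-1})$ need not be joined by an $E_{\ell-1}$-\emph{edge} (the relations $E_\ell$ are not transitive in general). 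Hence $\theta_\ell(G)$ is not ``exactly'' the number of crossing edges: a graph in $\mathcal{T}_C$ may contain an edge of $E_\ell\setminus E_{\ell-1}$ with both endpoints in one block, and such an edge is counted by $\theta_\ell$ although it is not crossing. This does not damage your proof, because each bound only uses a one-sided version of the dichotomy: the lower bound needs only that every crossing edge has $d=\ell$, so that $\theta_\ell(G)$ is at least the number of crossing edges, which is at least $|\mathcal{V}(C)|-1$; and your explicit construction for the upper bound takes all within-block edges from $E_{\ell-1}$, so its only edges with $d=\ell$ are the $|\mathcal{V}(C)|-1$ lifted tree edges. You should simply replace ``counts exactly the crossing edges'' by ``counts at least the crossing edges'' and verify the two bounds separately, which is in effect what the paper does.
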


\begin{proof}
To lighten notation, we write $C$ for $C_A$ in this proof.

The inequality $\gamma(A)\le\gamma(C)$ follows from the fact that the minimum in the expression for $\gamma(A)$ is taken over a larger set since clearly $\mathcal{T}_C$ is a subfamily of the family $\mathcal{T}_A$.

To prove the equality
\begin{equation}\label{eq:gamma}
\gamma(C)= \begin{cases}
			|\mathcal{V}(C)| - 1 &\text{if } C\in\mathcal{D}^+; \\
			0 &\text{otherwise},
		    \end{cases}
\end{equation}
first observe that $C_C=C$; hence all graphs in the family $\mathcal{T}_C$ have $C$ as their vertex set.

If $|C|=1$, then the one-vertex graph with no edges occurs in the family $\mathcal{T}_C$, whence $\min\limits_{G\in\mathcal{T}_C} \theta_\ell(G)=0$, which agrees with~\eqref{eq:gamma}.

Recall that each set $C\in\mathcal{D}^+$ is partitioned into $|\mathcal{V}(C)|$ components from $\mathcal{D}_{\ell-1}$. For a graph $G=(C,E)$ with $E\subseteq E_\ell$ to be connected, any two such components must be connected by a path involving edges in $E_\ell\setminus E_{\ell-1}$; see Figure~\ref{fig:partition} for illustration.
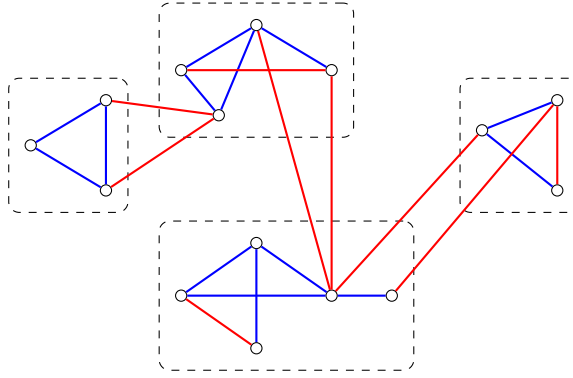
\begin{figure}[hbt]
\begin{center}
\begin{tikzpicture}[
  vertex/.style={circle,draw,fill=white,inner sep=1.5pt},
  inedge/.style={blue,thick},
  outedge/.style={red,thick},
  contour/.style={draw,dashed,rounded corners,inner sep=6pt}
]

\node[vertex] (a1) at (0,0) {};
\node[vertex] (a2) at (1,0.6) {};
\node[vertex] (a3) at (1,-0.6) {};

\draw[inedge] (a1)--(a2);
\draw[inedge] (a1)--(a3);
\draw[inedge] (a2)--(a3);

\node[contour,fit=(a1)(a2)(a3)] {};

\node[vertex] (b1) at (2,1) {};
\node[vertex] (b2) at (3,1.6) {};
\node[vertex] (b3) at (2.5,0.4) {};
\node[vertex] (b4) at (4,1) {};

\draw[inedge] (b1)--(b2);
\draw[inedge] (b2)--(b4);
\draw[inedge] (b1)--(b3);
\draw[inedge] (b3)--(b2);

\node[contour,fit=(b1)(b2)(b3)(b4)] {};

\node[vertex] (c1) at (2,-2) {};
\node[vertex] (c2) at (3,-1.3) {};
\node[vertex] (c3) at (4,-2) {};
\node[vertex] (c4) at (3,-2.7) {};
\node[vertex] (c5) at (4.8,-2) {};

\draw[inedge] (c1)--(c2);
\draw[inedge] (c1)--(c3);
\draw[inedge] (c2)--(c3);
\draw[inedge] (c2)--(c4);
\draw[inedge] (c3)--(c5);

\node[contour,fit=(c1)(c2)(c3)(c4)(c5)] {};

\node[vertex] (d1) at (6,0.2) {};
\node[vertex] (d2) at (7,0.6) {};
\node[vertex] (d3) at (7,-0.6) {};

\draw[inedge] (d1)--(d2);
\draw[inedge] (d1)--(d3);

\node[contour,fit=(d1)(d2)(d3)] {};

\draw[outedge] (a2)--(b3);
\draw[outedge] (a3)--(b3);

\draw[outedge] (b2)--(c3);
\draw[outedge] (b4)--(c3);
\draw[outedge] (b1)--(b4);

\draw[outedge] (c4)--(c1);
\draw[outedge] (c3)--(d1);
\draw[outedge] (c5)--(d2);
\draw[outedge] (d3)--(d2);

\end{tikzpicture}
\caption{A graph from $\mathcal{T}_C$ with components from $\mathcal{V}(C)$ shown by dashed contours. Blue edges belong to $E_{\ell-1}$ (loops from $E_0$ are not shown); red edges are from $E_\ell\setminus E_{\ell-1}$.} \label{fig:partition}
\end{center}
\end{figure}

For each edge $e\in E_\ell\setminus E_{\ell-1}$, we have $d(e) = \ell$ by the definition of the function $d$, and at least $|\mathcal{V}(C)|-1$ edges are required to connect $|\mathcal{V}(C)|$ components. Hence, $\theta_\ell(G)\ge |\mathcal{V}(C)|-1$ for every graph $G\in\mathcal{T}_C$. On the other hand, $|\mathcal{V}(C)|-1$ edges suffice. To see this, consider the graph $H:=(\mathcal{V}(C),F)$ in which two vertices $V_1,V_2\in\mathcal{V}(C)$ are connected by an edge $f$ if and only if there exists an edge $e\in E_\ell\setminus E_{\ell-1}$ connecting $V_1$ and $V_2$ in the graph $(C,(C\times C)\cap E_\ell)$; we say that $e$ \emph{induces}~$f$. The graph $H$ is connected, and any spanning tree of $H$ has $|\mathcal{V}(C)|-1$ edges. Take a spanning tree $H'$ of $H$ and consider the graph $G'=(C,((C\times C)\cap E_{\ell-1})\cup E')$, where $E'$ consists of $|\mathcal{V}(C)|-1$ edges in $E_\ell\setminus E_{\ell-1}$ that induce the edges of the tree $H'$. The graph $G'$ is connected and so it belongs to the family $\mathcal{T}_C$. By construction,  $\theta_\ell(G')=|\mathcal{V}(C)|-1$. Thus, $\min\limits_{G \in \mathcal{T}_C} \theta_\ell(G)=|\mathcal{V}(C)|-1$, which again agrees with~\eqref{eq:gamma}.
\end{proof}
	
For a nonempty subset $A\subseteq Q$, the pair $\chi(A):= (C_A, \gamma(A))$ is called the \emph{characteristic} of $A$. The characteristics of singletons have the form $(\{q\},0)$ for some $q\in Q$; we call such characteristics
\emph{trivial}. For every nontrivial characteristic $(C_A,\gamma(A))$, we have $|C_A|>1$ and $1\le\gamma(A)\le\gamma(C_A)$. The next two propositions show that nontrivial characteristics are suited to play the role of the `parameter' from the proof outline at the beginning of this section.

\begin{proposition}\label{prop:PRT4}
For every $A_\omega$-automaton $(Q,\Sigma)$, there are at most $|Q|-1$ distinct nontrivial characteristics.
\end{proposition}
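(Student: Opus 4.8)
The plan is to bound the number of nontrivial characteristics by a sum over the nonsingleton components and then evaluate that sum via a tree-counting identity. First I would record which nontrivial characteristics are even possible. If $\chi(A)=(C_A,\gamma(A))$ is nontrivial, then $|A|>1$, so $C_A\in\mathcal{D}^+$; and by the bound $1\le\gamma(A)\le\gamma(C_A)$ noted just before the statement together with the equality $\gamma(C_A)=|\mathcal{V}(C_A)|-1$ supplied by Lemma~\ref{lem:gamma}, the second coordinate $\gamma(A)$ lies in $\{1,2,\dots,|\mathcal{V}(C_A)|-1\}$. Consequently, for each fixed $C\in\mathcal{D}^+$ there are at most $|\mathcal{V}(C)|-1$ nontrivial characteristics whose first coordinate is $C$, so the total number of distinct nontrivial characteristics is at most $\sum_{C\in\mathcal{D}^+}\bigl(|\mathcal{V}(C)|-1\bigr)$. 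It then remains to show that this sum equals $|Q|-1$.

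To evaluate the sum I would organize the family $\mathcal{D}$ into a rooted tree. Since any two members of $\mathcal{D}$ are either disjoint or comparable by inclusion, the members of $\mathcal{D}$ containing a fixed set form a chain topped by $Q$; this attaches to every $D\in\mathcal{D}\setminus\{Q\}$ a unique minimal proper superset, its parent, and lets me regard $\mathcal{D}$ as a tree with root $Q$. The crucial point is to identify the children, and I would argue that for each $C\in\mathcal{D}^+$ the set of children of $C$ is exactly $\mathcal{V}(C)$. Concretely, given $D\in\mathcal{D}\setminus\{Q\}$, let $m$ be the largest index with $D\in\mathcal{D}_m$; then at level $m+1$ the set $D$ is absorbed into a strictly larger component $C$ with $h(C)=m+1$, whence $D\in\mathcal{V}(C)$, and a level-comparison shows that $C$ is the only member of $\mathcal{D}$ for which $D\in\mathcal{V}(C)$. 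Thus each $C\in\mathcal{D}^+$ is an internal node with exactly $|\mathcal{V}(C)|$ children, the $|Q|$ singletons are the leaves, and the internal nodes are precisely $\mathcal{D}^+$.

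With the tree in hand the counting is routine. The number of edges equals the number of non-root nodes, namely $|\mathcal{D}|-1$; counting the same edges by parents gives $\sum_{C\in\mathcal{D}^+}|\mathcal{V}(C)|=|\mathcal{D}|-1$. Since the leaves are the $|Q|$ singletons, the number of internal nodes is $|\mathcal{D}^+|=|\mathcal{D}|-|Q|$. Subtracting, $\sum_{C\in\mathcal{D}^+}\bigl(|\mathcal{V}(C)|-1\bigr)=(|\mathcal{D}|-1)-(|\mathcal{D}|-|Q|)=|Q|-1$, which completes the argument.

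The main obstacle I anticipate is the middle step: verifying that $\mathcal{D}$ genuinely carries a tree structure whose children are the sets $\mathcal{V}(C)$. Everything hinges on the fact that the $\mathcal{D}_\ell$ form a refinement chain, so components only ever merge as $\ell$ increases and never split; making the \emph{unique parent} and \emph{children are $\mathcal{V}(C)$} claims precise requires the level-comparison argument above, using that a component of $\mathcal{D}_\ell$ cannot be a proper subset of another component of the same $\mathcal{D}_\ell$. Once this laminar tree picture is pinned down, the reduction to the sum and the counting itself are elementary.
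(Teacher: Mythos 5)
Your proof is correct. The first half coincides with the paper's argument: both reduce the count to the observation that each $C\in\mathcal{D}^+$ can carry at most $\gamma(C)=|\mathcal{V}(C)|-1$ nontrivial characteristics (via Lemma~\ref{lem:gamma}), so that the total is bounded by $\sum_{C\in\mathcal{D}^+}\bigl(|\mathcal{V}(C)|-1\bigr)$. Where you diverge is in evaluating this sum. The paper introduces the auxiliary function $\Gamma(R)=\sum_{C\in\mathcal{D},\,C\subseteq R}\gamma(C)$ and proves $\Gamma(R)=|R|-1$ by induction on $|R|$, unwinding the recursion through $\mathcal{V}(R)$. You instead make the laminar structure of $\mathcal{D}$ explicit as a rooted tree with root $Q$, children $\mathcal{V}(C)$, internal nodes $\mathcal{D}^+$, and the $|Q|$ singletons as leaves, and then obtain the identity by double-counting edges: $\sum_{C\in\mathcal{D}^+}|\mathcal{V}(C)|=|\mathcal{D}|-1$ and $|\mathcal{D}^+|=|\mathcal{D}|-|Q|$. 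The two computations are equivalent, but yours trades the induction for a one-shot global count at the price of having to verify the parent--child correspondence (that each non-root $D$ lies in $\mathcal{V}(C)$ for exactly one $C$); your level-comparison argument for this is sound, resting precisely on the fact that the partitions $\mathcal{D}_\ell$ only coarsen as $\ell$ grows. The paper's induction quietly absorbs this same uniqueness fact in the step ``for every $C\subsetneqq R$ there is a unique $S\in\mathcal{V}(R)$ with $C\subseteq S$,'' so neither route is strictly simpler; yours arguably makes the underlying tree picture more transparent, while the paper's version avoids setting up the tree formalism.
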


\begin{proof}
For every component $C\in\mathcal{D}^+$, there are at most $\gamma(C)$ distinct characteristics with first entry $C$ because the possible values of the second entry belong to the set $\{1,\dots,\gamma(C)\}$. Summing over all such components yields at most $\sum\limits_{C \in\mathcal{D}^+}\gamma(C)$ distinct nontrivial characteristics. Clearly, $\sum\limits_{C \in\mathcal{D}^+}\gamma(C)=\sum\limits_{C \in\mathcal{D}}\gamma(C)$ as extending the summation to components from $\mathcal{D}\setminus\mathcal{D}^+=\mathcal{D}^0$ adds only zero summands. Let us compute the latter sum.

For every component $R \in \mathcal{D}$, we set
\begin{equation}\label{eq:F}
\Gamma(R):= \sum\limits_{\substack{C \in \mathcal{D} \\ C \subseteq R}} \gamma(C).
\end{equation}
We prove that $\Gamma(R) = |R| - 1$ by induction on $|R|$. The equality is immediate if $|R|=1$. Suppose that $|R|>1$. Then $|\mathcal{V}(R)|>1$, whence the inductive assumption applies to every component in $\mathcal{V}(R)$.
Clearly, for every $C\in\mathcal{D}$ such that $C\subsetneqq R$, there is a unique component $S\in\mathcal{V}(R)$ such that $C\subseteq S$. We then have
\begin{align*}
  \Gamma(R)&= \gamma(R) + \sum\limits_{\substack{C \in \mathcal{D} \\ C \subsetneqq R}} \gamma(C)
      = \gamma(R) + \sum\limits_{S \in \mathcal{V}(R)}\sum\limits_{\substack{C \in \mathcal{D} \\ C \subseteq S}} \gamma(C)\\
      &= \gamma(R) + \sum\limits_{S \in \mathcal{V}(R)}\Gamma(S)&&\text{by \eqref{eq:F}}\\
      &=\gamma(R) + \sum\limits_{S \in \mathcal{V}(R)}(|S| - 1)&&\text{by the inductive assumption}\\
      &=|\mathcal{V}(R)| - 1 + \sum\limits_{S \in \mathcal{V}(R)}(|S| - 1)&&\text{by \eqref{eq:gamma}}\\
      &=|\mathcal{V}(R)| - 1 + \sum\limits_{S \in \mathcal{V}(R)}|S| - |\mathcal{V}(R)|= |R| - 1&&\text{since $|R|=\sum\limits_{S \in \mathcal{V}(R)}|S|$.}
\end{align*}
In particular, $\Gamma(Q) = \sum\limits_{C \in \mathcal{D}} \gamma(C) = |Q| - 1$. This establishes the claimed bound on the number of different nontrivial characteristics.
\end{proof}

We say that a nonempty subset $A\subseteq Q$ \emph{avoids} a characteristic $\mu$ if $\mu\ne\chi(A{\cdot}w)$ for every word $w\in\Sigma^*$.

\begin{proposition}\label{prop:PRT5}
	For every $A_\omega$-automaton $(Q,\Sigma)$ and every subset $A\subseteq Q$ with $|A|>1$, there exists a letter $x\in \Sigma$ such that the set $A{\cdot}x$ avoids the characteristic $\chi(A)$.
\end{proposition}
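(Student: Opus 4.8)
The plan is to exhibit a single letter of Alice that permanently knocks the value of the characteristic below $\chi(A)$. Throughout, write $\ell:=h(C_A)$. Since $|A|>1$ we have $|C_A|>1$, so $\ell\ge 1$ and the characteristic $\chi(A)=(C_A,\gamma(A))$ is nontrivial; in particular $\gamma(A)\ge 1$. The engine of the argument is the remark, immediate from the \emph{Stability} clause of Proposition~\ref{prop:PRT1}, that the action of any word $v\in\Sigma^*$ is a graph endomorphism of $(Q,E_\ell)$: if $(s,t)\in E_\ell$ then $(s{\cdot}v,t{\cdot}v)\in E_\ell$. Hence for any connected graph $G=(V,E)$ with $E\subseteq E_\ell$, the image $G{\cdot}v:=(V{\cdot}v,\{(s{\cdot}v,t{\cdot}v):(s,t)\in E\})$ is again connected with edges in $E_\ell$; moreover, by (D3) of Proposition~\ref{prop:dproperties} we have $d(s{\cdot}v,t{\cdot}v)\le d(s,t)$, so an edge $e$ with $d(e)<\ell$ maps to an edge of the same kind, and therefore $\theta_\ell(G{\cdot}v)\le\theta_\ell(G)$.

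First I would fix a graph $G=(V,E)\in\mathcal{T}_A$ attaining $\theta_\ell(G)=\gamma(A)\ge 1$. Then $G$ carries an edge $e_0=(p,q)$ with $d(p,q)=\ell>0$, so (D4) of Proposition~\ref{prop:dproperties} furnishes a letter $x$ with $d(p{\cdot}x,\,q{\cdot}x)<\ell$; this is the letter I would claim works. Applying $x$ strips $e_0$ of its value: in $G{\cdot}x$ the image of $e_0$ has $d<\ell$, while every other edge can only decrease its $d$-value, so $\theta_\ell(G{\cdot}x)\le\gamma(A)-1$. Because $d$-values never rise under a further word $w$, this bound persists: writing $v:=xw$, we get $\theta_\ell(G{\cdot}v)=\theta_\ell((G{\cdot}x){\cdot}w)\le\theta_\ell(G{\cdot}x)\le\gamma(A)-1$.

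It remains to turn this into the assertion that $A{\cdot}x$ avoids $\chi(A)$. Given an arbitrary $w\in\Sigma^*$, set $v:=xw$ and compare $\chi(A{\cdot}v)$ with $\chi(A)$. If $C_{A{\cdot}v}\ne C_A$, the first coordinates differ and we are done. Otherwise $C_{A{\cdot}v}=C_A$, and here I would verify that $G{\cdot}v$ is a legitimate member of $\mathcal{T}_{A{\cdot}v}$. Indeed, $C_A$ is connected in $(Q,E_\ell)$, so $C_A{\cdot}v$ lies in a single component of $(Q,E_\ell)$, say $C'$; then $A{\cdot}v\subseteq C_A{\cdot}v\subseteq C'$, whence $C_{A{\cdot}v}\subseteq C'$. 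Since $C_{A{\cdot}v}=C_A$ and two components of $(Q,E_\ell)$ are nested or disjoint, $C'=C_A$, so $V{\cdot}v\subseteq C_A{\cdot}v\subseteq C_A=C_{A{\cdot}v}$ and $G{\cdot}v$ is connected with edges in $E_\ell=E_{h(C_{A{\cdot}v})}$. Consequently $\gamma(A{\cdot}v)\le\theta_\ell(G{\cdot}v)\le\gamma(A)-1<\gamma(A)$, so again $\chi(A{\cdot}v)\ne\chi(A)$. As $w$ was arbitrary, $A{\cdot}x$ avoids $\chi(A)$.

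The step I expect to be the crux is the last one. The apparent danger is that Bob, after Alice plays $x$, drives the tokens out of $C_A$ and later back into it, possibly restoring $\gamma$ to its original value $\gamma(A)$; naive monotonicity of $\gamma$ fails across such a detour, since the relevant level changes when the component does. What defuses this is that the count $\theta_\ell(G{\cdot}v)$ is governed by the \emph{single} endomorphism induced by the whole word $v$, not by the intermediate positions: once $e_0$ has been pushed below $d$-value $\ell$ by $x$, no later letter can lift it back, so the witnessing graph $G{\cdot}v$ certifies $\gamma(A{\cdot}v)\le\gamma(A)-1$ on exactly those words $v$ that return the least component to $C_A$. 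The only point requiring care is the nesting argument showing that $G{\cdot}v$ lands inside $C_A$ precisely when $C_{A{\cdot}v}=C_A$.
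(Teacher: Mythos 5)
Your proposal is correct and follows essentially the same route as the paper's proof: fix a graph $G\in\mathcal{T}_A$ attaining $\gamma(A)$, pick an edge of $d$-value $\ell=h(C_A)$, obtain the letter $x$ from (D4), and show that the image of $G$ under $xw$ lies in $\mathcal{T}_{A{\cdot}xw}$ with strictly fewer edges of $d$-value $\ell$, using stability and (D3). The only cosmetic difference is that you justify $V{\cdot}v\subseteq C_{A{\cdot}v}$ via the nested-or-disjoint structure of components rather than by tracing image paths, which is equally valid.
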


\begin{proof}
Denote $\ell = h(C_A)$ and choose a graph $G =(V,E)\in\mathcal{T}_A$ such that $\theta_\ell(G) = \gamma(A)$. Recall that by the definition of the family $\mathcal{T}_A$, the graph $G$ is connected and $E\subseteq E_\ell$. Choose an edge $(p_0,q_0)\in E$ such that $d(p_0,q_0) = \ell$. As registered in Proposition \ref{prop:dproperties}, item (D4), there exists a letter $x\in\Sigma$ such that $d(p_0{\cdot}x,\,q_0{\cdot}x)<\ell$. We now show that $A{\cdot}x$ avoids the characteristic $\chi(A)$.
	
Take an arbitrary word $w\in\Sigma^*$ and set $B:= A{\cdot}xw$. We aim to prove that $\chi(B)\ne\chi(A)$. This is immediate if $C_A\ne C_B$, so assume that $C_A=C_B$.

Consider the image $G'$ of the graph~$G$ under the action of the word $xw$, that is, $G':= (V',E')$, where $V':= V{\cdot}xw$ and $E':= \{(p{\cdot}xw,\,q{\cdot}xw) : (p,q)\in E\}$. We aim to prove that the graph $G'$ lies in the family $\mathcal{T}_B$, which amounts to verifying the following:
\begin{enumerate}
  \item[(i)] $E'\subseteq E_{h(C_B)}$;
  \item[(ii)] $G'$ is connected;
  \item[(iii)] $B\subseteq V'\subseteq C_B$.
\end{enumerate}

The inclusion (i) readily follows from the equality $h(C_B)=h(C_A)=\ell$ (ensured by the equality $C_A=C_B$), the inclusion $E\subseteq E_\ell$, and the stability of $E_\ell$ shown in Proposition~\ref{prop:PRT1}.

For (ii), it suffices to notice that if a sequence of consecutive edges
\[
(p_1,p_2),(p_2,p_3),\dots,(p_{m-1},p_m)\in E
\]
connects two states $p=p_1$ and $p_m=q$ from $A$, then, after removing from its image
 \[
(p_1{\cdot}xw,\,p_2{\cdot}xw),(p_2{\cdot}xw,\,p_3{\cdot}xw),\dots,(p_{m-1}{\cdot}xw\,,p_m{\cdot}xw)
 \]
the pairs belonging to $E_0$, if any, we obtain a sequence of consecutive edges in $E'$ connecting the states $p{\cdot}xw$ and $q{\cdot}xw$ from $B$.

A similar argument can be used to justify the inclusion $V'\subseteq C_B$ in (iii). Indeed, every state in $V'$ is of the form $q{\cdot}xw$ for some $q\in V$. Since $V\subseteq C_A$ and $C_A$ is the component of the graph $(Q,E_\ell)$ containing $A$, the state $q$ is connected by a sequence of consecutive edges in $E_\ell$ with a state $p\in A$. Then the image of this sequence connects $q{\cdot}xw$ with $p{\cdot}xw\in B$ and each edge in the image belongs to $E_\ell$ due to the stability of $E_\ell$. Hence, $q{\cdot}xw$ lies in the component of $(Q,E_\ell)$ containing $B$, that is, $q{\cdot}xw\in C_B$.

The inclusion $B\subseteq V'$ in (iii) is ensured by the inclusion $A\subseteq V$.

Thus, $G'$ is a graph in $\mathcal{T}_B$. By item (D3) of Proposition \ref{prop:dproperties}, $d(p{\cdot}xw,\,q{\cdot}xw)\le\ell$ for every edge $(p,q)\in E_\ell$. By the same property, for the edge $(p_0,q_0)$, this inequality is strict. Hence, $\theta_\ell(G') < \gamma(A)$, so $\gamma(B) < \gamma(A)$. Hence, $\chi(B)\ne\chi(A)$.
\end{proof}

We are ready for the proof of our main result.

\begin{theorem}
\label{thm:main}
	On every $A_\omega$-automaton $\mA=(Q,\Sigma)$, Alice has a strategy that allows her to win in less than $|Q|$ moves. In particular, the reset threshold of $\mA$ is less than $|Q|$.
\end{theorem}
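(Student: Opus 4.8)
The plan is to let Alice play a strategy governed by the characteristic $\chi(A)$ of the current position $A$, and to argue that along any play the non-singleton positions must carry pairwise distinct nontrivial characteristics; the move bound then drops out of the count in Proposition~\ref{prop:PRT4}, and the reset-threshold claim follows by specializing to a passive Bob. Concretely, I would describe Alice's strategy as follows. Play starts from the full position $A_0=Q$. Whenever the current position $A$ is not a singleton, Alice plays a letter $x$ supplied by Proposition~\ref{prop:PRT5}, namely one for which $A{\cdot}x$ avoids the characteristic $\chi(A)$; once the position is a singleton, Alice has already won. Recall that for $|A|>1$ the characteristic $\chi(A)$ is nontrivial, so every non-singleton position contributes a nontrivial characteristic.

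The key step is to show that the positions $A_0,A_1,A_2,\dots$ arising in any play against this strategy have pairwise distinct characteristics, as long as they remain non-singletons. Write $A_{i+1}=A_i{\cdot}x_iw_i$, where $x_i$ is Alice's $i$-th letter and $w_i\in\Sigma^*$ is Bob's response. Then any later position $A_j$ with $j>i$ is of the form $(A_i{\cdot}x_i){\cdot}w$ for the intervening word $w:=w_ix_{i+1}w_{i+1}\cdots x_{j-1}w_{j-1}$. Since by construction $A_i{\cdot}x_i$ avoids $\chi(A_i)$, the definition of ``avoids'' gives $\chi(A_i)\ne\chi\bigl((A_i{\cdot}x_i){\cdot}w\bigr)=\chi(A_j)$. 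This ``avoids-forever'' property is the heart of the matter: it upgrades the one-step strict decrease implicit in Proposition~\ref{prop:PRT5} to a global guarantee that no nontrivial characteristic is ever revisited, which is exactly what converts the proof-outline ``parameter'' into a rigorous termination argument.

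Because the non-singleton positions $A_0,\dots,A_{m-1}$ (where $A_m$ is the first singleton) carry pairwise distinct nontrivial characteristics, and Proposition~\ref{prop:PRT4} bounds the number of distinct nontrivial characteristics by $|Q|-1$, we get $m\le|Q|-1$. Since Alice plays exactly one letter at each of these $m$ positions, she wins in at most $|Q|-1<|Q|$ moves, irrespective of Bob's responses. For the reset threshold, I would specialize to the play in which Bob always answers with the empty word, so that every $w_i=\varepsilon$. Then after Alice's moves $x_0,\dots,x_{m-1}$ the history is simply $x_0\cdots x_{m-1}$ and the corresponding position is $Q{\cdot}(x_0\cdots x_{m-1})$; as Alice wins within $|Q|-1$ moves, this set is a singleton for some $m\le|Q|-1$, so $x_0\cdots x_{m-1}$ is a reset word of length less than $|Q|$, whence $\rt(\mA)<|Q|$.

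I expect the only delicate point to be the bookkeeping in the second paragraph: correctly identifying each later position $A_j$ as a descendant $(A_i{\cdot}x_i){\cdot}w$ and supplying the precise intervening word $w$, so that the ``avoids'' property of Proposition~\ref{prop:PRT5} can be invoked verbatim. Everything else, including the passage from ``winning in few moves'' to a short reset word, is an immediate consequence of the machinery already assembled, and no further case analysis on $|A|$ is needed because applying any word can only merge tokens and hence never increases the number of tokens.
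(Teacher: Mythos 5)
Your proposal is correct and follows essentially the same route as the paper's own proof: the same strategy of playing a letter given by Proposition~\ref{prop:PRT5} so that the image avoids the current characteristic, the same ``avoids-forever'' argument showing all characteristics along the play are pairwise distinct, the same count via the bound of $|Q|-1$ nontrivial characteristics (which is indeed Proposition~\ref{prop:PRT4}), and the same specialization to an empty-word-playing Bob to extract a short reset word.
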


\begin{proof}
We may assume that Bob moves first. Denote by $w_i$ the history up to Bob's $i$-th turn, and let $A_i:=Q{\cdot}w_i$. Alice's strategy is to respond with a letter $x_i$ such that the set $A_i{\cdot}x_i$ avoids the characteristic $\chi(A_i)$. By Proposition~\ref{prop:PRT5}, such a letter $x_i$ exists as long as $|A_i|>1$, that is, as long as $w_i$ is not a reset word for $\mA$.
	
The described strategy ensures that $\chi(A_i)\ne\chi(A_j)$ whenever $i\ne j$. Indeed, suppose that $i<j$; then $w_j = w_ix_iu$ for some word $u\in\Sigma^*$, and consequently $A_j=A_i{\cdot}x_iu$. By construction, $A_i{\cdot}x_i$ avoids $\chi(A_i)$, which by the definition of avoidability means that $\chi(A_i)\ne\chi(A_i{\cdot}xw)$ for every word $w\in\Sigma^*$. Taking $w=u$ yields $\chi(A_i)\ne\chi(A_j)$.  Thus, all characteristics $\chi(A_1),\chi(A_2),\dots$ are distinct.

By Proposition~\ref{prop:PRT5}, there are at most $|Q| - 1$ nontrivial characteristics. Hence, using the described strategy, Alice makes at most $|Q| - 1$ moves in order to win.

The history of the game up to its final move is a reset word for $\mA$, independent of which words Bob has chosen on his moves. In particular, the final history of the game in which Bob has chosen the empty word on each of his moves is a reset word whose length equals the number of Alice's moves and thus does not exceed $|Q| - 1$.
\end{proof}
	
\begin{remarks}
1. It is easy to see that the bound in Theorem~\ref{thm:main} is tight. As a simple witness, for every $n\ge3$, consider the DFA $(\mathbb{Z}_n,\{a\})$ in which $0{\cdot} a:=0$ and $m{\cdot} a:=m-1$ for $0<m<n$. This is an $A_\omega$-automaton with $n$ states and \rl{} $n-1$, and Alice needs exactly $n-1$ moves to win on it.

2. The converse of the second conclusion of Theorem~\ref{thm:main} does not hold. As a witness, for every $n\ge4$, consider the DFA $(\mathbb{Z}_n,,\{a,b\})$ in which the letters move the states in the opposite directions: $0{\cdot} a:=0$ and $m{\cdot}a:=m-1$ for $0<m<n$, whereas $(n-1){\cdot} b:=n-1$ and $m{\cdot}b:=m+1$ for $0\le m<n-1$. The DFA is synchronizing with \rl{} $n-1$ since each of $a^{n-1}$ and $b^{n-1}$ is a reset word of minimum length. On the other hand, Bob wins even the $2$-game on this DFA. Bob's winning strategy is to play $b$ whenever Alice plays $a$, and to play $a$ if she plays $b$. In this way, he can ensure that each of the `middle' states 1, 2, \dots, $n-2$ holds a token after his responses; see Figure~\ref{fig:Bobwins}.
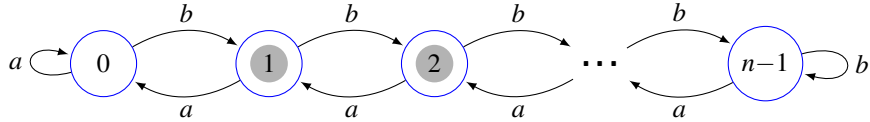
\begin{figure}[htb]
\begin{center}
\begin{tikzpicture}[->,>=latex,shorten >=1pt,auto,node distance=2.2cm,
                    every state/.style={circle, draw=blue, scale=1},
                    token/.style={circle,fill=gray!60,minimum size=5mm,inner sep=0pt},]

  \node[state] (0) {$0$};
  \node[state] (1) [right of=0] {};
  \node[token]  (t1) [right of=0] {$1$};
  \node[state] (2) [right of=1] {};
  \node[token]  (t2) [right of=1] {$2$};
  \node (dots)  [right of=2] {\raisebox{4pt}{$\mathbf{\centerdot\,\centerdot\,\centerdot}$}};
  \node[state] (n1) [right of=dots] {$n{-}1$};

  \path
    (1) edge[bend left] node[below] {$a$} (0)
    (2) edge[bend left] node[below] {$a$} (1)
    (n1) edge[bend left] node[below] {$a$} (dots)
    (dots) edge[bend left] node[below] {$a$} (2);
  \path
    (0) edge[bend left] node[above] {$b$} (1)
    (1) edge[bend left] node[above] {$b$} (2)
    (2) edge[bend left] node[above] {$b$} (dots)
    (dots) edge[bend left] node[above] {$b$} (n1);

  \path
    (0) edge[loop left] node {$a$} ()
    (n1) edge[loop right] node {$b$} ();

\end{tikzpicture}
\end{center}
\caption{An $n$-state DFA with \rl{} $n-1$ on which Bob wins by keeping tokens (shown in gray) on states 1, 2, \dots, $n-2$.}
\label{fig:Bobwins}
\end{figure}
\end{remarks}

\subsection{Reset threshold of DS-automata}
\label{subsec:dsautomata}

Theorem~\ref{thm:main} confirms the natural expectation that automata on which Alice wins against a powerful adversary should be quickly synchronizable. However, the property of being an $A_\omega$-automaton may appear too strong, and it is far from clear whether Theorem~\ref{thm:main} applies to any meaningful class of DFAs not defined in terms of synchronization games. Thanks to an observation made in \cite[Section 5.4]{JALC}, we are in a position to exhibit such a class. To introduce it, we recall the notion of the transition monoid of a DFA and a few basic concepts from semigroup theory.

For a DFA $\mathrsfs{A}=(Q,\Sigma)$, the map $\tau_a\colon Q\to Q$ defined by $q\tau_a:=q{\cdot}a$ is a transformation of the set $Q$. The \emph{transition monoid} of a DFA $\mA=(Q,\Sigma)$ is the submonoid of the monoid of all transformations on the set $Q$ generated by the set $\{\tau_a : a\in\Sigma\}$. We denote the transition monoid of a DFA $\mA$ by $T(\mA)$. In $T(\mA)$, any product of the form $\tau_{a_1}\tau_{a_2}\cdots\tau_{a_n}$ is precisely the transformation $\tau_w\colon Q\to Q$ defined by $q\tau_w:=q{\cdot}w$, where $w:=a_1a_2\cdots a_n$. Thus, the transition monoid $T(\mA)$ can equivalently be defined as the monoid of all transformations on the set $Q$ induced by the action of words over $\Sigma$.

If $\mathrsfs{A}=(Q,\Sigma)$ is a synchronizing DFA and $w$ is a reset word for $\mA$, then the transformation $\tau_w$ is a constant map on $Q$, that is, $Q\tau_w=\{q\}$ for some $q\in Q$. Thus, the transition monoid of a \san{} always contains a constant transformation. Conversely, if $\zeta\in T(\mA)$ is a constant transformation, then any word $w$ with $\tau_w=\zeta$ is a reset word for $\mA$, and hence $\mA$ is synchronizing. We therefore see that synchronization is actually a property of the transition monoid of an automaton rather than the automaton itself: for DFAs $\mathrsfs{A}=(Q,\Sigma)$ and $\mathrsfs{A}'=(Q,\Sigma')$ with the same states but different input alphabets, the equality $T(\mA)=T(\mA')$ guarantees that $\mA'$ is synchronizing if and only if $\mA$ is.

Recall the definition of \emph{Green's relations}  $\gR$, $\gL$, and $\gD$ on an arbitrary semigroup $S$:
\begin{itemize}
\item $a\gR b \Longleftrightarrow {}$ either $a=b$ or $a=bs$ and $b=at$ for some $s,t\in S$;
\item $a\gL \,b \Longleftrightarrow {}$ either $a=b$ or $a=sb$ and $b=ta$ for some $s,t\in S$;
\item $a\gD b \Longleftrightarrow {}$ $a\gR c$ and $c\gL b$ for some $c\in S$.
\end{itemize}
The relations $\gR$ and $\gL$ are clearly equivalences. The definition of ${\gD}$ means that ${\gD}$ is the product of $\gR$ and $\gL$ as binary relations. As observed in \cite{Green:1951}, ${\gD}$ is also the product of $\gL$ and $\gR$, which implies that $\gD$ is the least equivalence containing both $\gR$ and $\gL$.

A $\gD$-class $D$ is called \emph{regular} if it contains an element $a$ such that $aba=a$ for some $b\in D$. DFAs in whose transition monoids all regular $\gD$-classes are subsemigroups often appear in the literature; see, e.g., \cite{AMSV09,AlSt09}. In~\cite{NCMA}, such DFAs were coined \emph{DS-automata}, and we adopt this terminology here. We refer the reader to~\cite[Section 4.1]{NCMA} for an overview of studies of \sa{} in various subclasses of DS-automata.

The main result of~\cite{JALC} states that, on every synchronizing DS-automaton, Alice has a \emph{uniform} winning strategy in the synchronization game in which Bob's responses are restricted to letters. Here, uniformity means that Alice's winning sequence of moves depends only on the underlying automaton and does not depend on Bob's responses. In \cite[Section 5.4]{JALC}, this strategy is shown to persist for the $\omega$-game. Hence,  synchronizing DS-automata are $A_\omega$-automata, and Theorem~\ref{thm:main} yields the following:
\begin{corollary} \label{cor:ds}
The reset threshold of every synchronizing DS-automaton is less than the number of its states.
\end{corollary}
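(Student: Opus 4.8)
The plan is to derive the corollary from Theorem~\ref{thm:main} by first showing that every synchronizing DS-automaton is an $A_\omega$-automaton. Since Theorem~\ref{thm:main} bounds the reset threshold of any $A_\omega$-automaton by the number of its states, the entire argument reduces to establishing this membership; the numerical conclusion then follows with no further work.

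To prove that a synchronizing DS-automaton $\mA = (Q,\Sigma)$ lies in $\mathbf{A}_\omega$, I would invoke the main result of \cite{JALC}: on such an automaton Alice has a \emph{uniform} winning strategy in the game in which Bob may respond only with single letters, meaning that her winning sequence of moves depends on $\mA$ alone and not on Bob's replies. The plan is then to appeal to the refinement recorded in \cite[Section 5.4]{JALC}, where this same fixed sequence is shown to remain winning when Bob is permitted to answer with arbitrary words; uniformity is exactly the feature that makes the transition possible, since a predetermined sequence of Alice's moves that synchronizes $\mA$ whatever letters Bob interposes is shown there to synchronize it whatever words Bob interposes. Granting this, $\mA$ is an $A_\omega$-automaton, and Theorem~\ref{thm:main} immediately yields $\rt(\mA) < |Q|$, which is the assertion of the corollary.

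The main obstacle is entirely contained in the step imported from \cite{JALC}, namely the construction of Alice's uniform winning strategy on synchronizing DS-automata together with the verification that it withstands arbitrary words rather than only single letters. Reconstructing it from scratch would require the semigroup-theoretic machinery surrounding Green's relations---in particular, exploiting that in a DS-automaton every regular $\gD$-class of the transition monoid $T(\mA)$ is a subsemigroup---in order to steer Alice's moves down a chain of $\gD$-classes toward a constant transformation. Since the results of \cite{JALC} are available to us, we take the membership in $\mathbf{A}_\omega$ as given and present the corollary as a one-line consequence of Theorem~\ref{thm:main}.
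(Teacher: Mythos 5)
Your proposal is correct and follows exactly the paper's route: it invokes the uniform winning strategy from \cite{JALC}, its persistence for the $\omega$-game noted in \cite[Section 5.4]{JALC} to conclude that synchronizing DS-automata lie in $\mathbf{A}_\omega$, and then applies Theorem~\ref{thm:main}. No differences worth noting.
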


The result of Corollary~\ref{cor:ds} is contained in \cite[Theorem 2.6]{AlSt09}, where its proof was based on the classical theory of linear representations of semigroups (the Munn--Ponizovsky theory) and its enhancement in~\cite{AMSV09}. Our proof is quite different, as it is purely combinatorial, apart from invoking the results from~\cite{JALC}, which themselves use only a little bit of algebra.

Observe that an $A_\omega$-automaton need not be a DS-automaton. The simplest witness here is the automaton $\mB_2$ shown in Figure~\ref{fig:autB21}.
\begin{figure}[htb]
\begin{center}
\begin{tikzpicture}
	\node[fill=white, circle, draw=blue, scale=1] (0) {$0$};
	\node[fill=white, circle, draw=blue, scale=1, above = 0, xshift= -2cm, yshift= 0.5cm] (1) {$1$};
    \node[fill=white, circle, draw=blue, scale=1, above = 0, xshift= 2cm, yshift= 0.5cm] (2) {$2$};
	\draw
		(1) edge[-latex, below]  node{$b$} (0)
        (1) edge[-latex, bend left, above]  node{$a$} (2)
		(2) edge[-latex, above] node{$b$}(1)
		(2) edge[-latex, below] node{$a$} (0)
		(0) edge[-latex, loop below, below, in =-65, out = -120, distance = 30] node{$a,b$} (0);
		\end{tikzpicture}
\end{center}
\caption{An $A_\omega$-automaton which is not a DS-automaton}
\label{fig:autB21}
\end{figure}
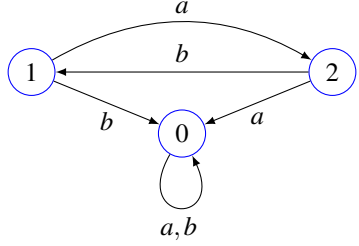
Its transition monoid $T(\mB_2)$ is easy to compute. It consists of six transformations of $\{0,1,2\}$: the identity transformation $\tau_{\varepsilon}$, the constant transformation $\tau_{a^2}$, and the transformations $\tau_a$, $\tau_b$, $\tau_{ab}$, and $\tau_{ba}$ that form a $\gD$-class in $T(\mB_2)$. This $\gD$-class is not a subsemigroup, since, for instance, it does not contain the transformation $\tau_a^2=\tau_{a^2}$. Thus, $\mB_2$ is not a DS-automaton. On the other hand, Alice wins the $\omega$-game on $\mB_2$. She may start with playing $a$, leaving tokens on states 0 and 2. After any non-losing response of Bob, tokens either remain on 0 and 2 or they move to states 0 and~1. Alice then wins by playing $a$ in the former case or $b$ in the latter.

Hence, Theorem~\ref{thm:main} not only provides an alternative proof of Corollary~\ref{cor:ds} but also extends its conclusion to the strictly larger class $\mathbf{A}_\omega$.

It is perhaps worth mentioning that $\mathbf{A}_\omega$ cannot be characterized solely in terms of transition monoids. In fact, for every \sDFA{} $\mathrsfs{A}=(Q,\Sigma)$, there exists an $A_\omega$-automaton $\mathrsfs{A}'=(Q,\Sigma')$  such that $T(\mA)=T(\mA')$. To see this, let $\Sigma':=\Sigma\cup\{c\}$ where the action of the added letter $c$ coincides with the action of a fixed reset word for $\mA$. The transformations induced by the letters in $\Sigma'$ generate the same submonoid of the monoid of all transformations of the set $Q$ as do the transformations induced by the letters in $\Sigma$. Still, Alice wins the $\omega$-game on $\mA'$ instantly by choosing the letter $c$ on her first move.

That said, the question of which properties of the monoid $T(\mA)$ ensure that $\mA$ is an $A_\omega$-automaton appears to be worth further exploration.

\subsection{Iterations of \sa}
\label{subsec:iteration}
What happens if Alice, like Bob, is allowed to play words rather than single letters in an $\omega$-game? Of course, if Alice may play arbitrary words, the game trivializes as Alice can instantly win on any \san{} by spelling out a reset word. The variant in which Alice's moves are subject to a length restriction is more interesting. To handle it, the following construction is well suited.

For $m\in\mathbb{N}$ and an alphabet $\Sigma$, let $\Sigma^{\le m}$ denote the set of all nonempty words over $\Sigma$ of length at most $m$. The $m$-\emph{th iteration} of a DFA $\mA=(Q,\Sigma)$ is the DFA $\mA^{(m)}:=(Q,\Sigma^{\le m})$, with the action of the `letters' in $\Sigma^{\le m}$ defined in a natural way: $q{\circ}w:=q{\cdot}w$ for all $q\in Q$ and $w\in\Sigma^{\le m}$, with the right-hand side denoting the action in $\mA$.

By an $m/\omega$-game we mean the variant in which Alice's moves are nonempty words of length at most $m$, whereas Bob may play arbitrary words. It is then immediate that Alice wins the $m/\omega$-game on a DFA $\mA$ if and only if she wins the standard $\omega$-game played on the DFA $\mA^{(m)}$. It follows that the \rl{} of any $n$-state \san{} on which Alice has a winning strategy in the $m/\omega$-game does not exceed $m(n-1)$. We now show that this upper bound can be slightly improved.

We need the functions $d$, $\mathcal{V}$, $h$, $\theta$, $\gamma$, $\chi$, $A\mapsto C_A$, and $A\mapsto \mathcal{T}_A$ introduced in Subsection~\ref{subsec:omegaspeed}. Note that they all depend on the automaton under consideration, and here we consider two automata: $\mA$ and $\mA^{(m)}$. To avoid any ambiguity, we stipulate that in the following two statements these functions and the sign $\cdot$ are always associated with $\mA^{(m)}$.

\begin{lemma}
\label{lem:first_move}
	If Alice can win the $m/\omega$-game on a DFA $\mA = (Q, \Sigma)$, then there exists a letter $x \in \Sigma$ such that the set $Q{\cdot}x$ avoids the characteristic $\chi(Q)$ in $\mA^{(m)}$.
\end{lemma}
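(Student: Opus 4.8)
The plan is to translate the characteristic $\chi$ (relative to $\mA^{(m)}$) into the combinatorics of the single partition $\mathcal{V}(Q)$, and then exploit a factorization that is unavailable for arbitrary words but forced for a composition of single letters. Since Alice wins the $m/\omega$-game, $\mA^{(m)}$ is an $A_\omega$-automaton, so the whole apparatus of Subsection~\ref{subsec:omegaspeed} applies to it; moreover a single letter $x\in\Sigma$ is a legitimate move of $\mA^{(m)}$, as $\Sigma=\Sigma^{\le 1}\subseteq\Sigma^{\le m}$, so $Q{\cdot}x$ is meaningful there. Write $\ell:=h(Q)$. Because $C_Q=Q$ is the full state set and the top component, $\mathcal{V}(Q)$ is precisely the set $\mathcal{D}_{\ell-1}$ of all components of the graph $(Q,E_{\ell-1})$, and these partition $Q$.

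The first step I would carry out is a clean reformulation of $\chi$ in block terms: for any nonempty $A\subseteq Q$ meeting exactly $k$ blocks of $\mathcal{V}(Q)$, one has $C_A=Q$ and $\gamma(A)=k-1$ when $k\ge 2$, while $C_A\subsetneq Q$ when $k=1$. This follows from Lemma~\ref{lem:gamma} together with the observation that an edge joining two points of the same block lies in $E_{\ell-1}$ and so contributes $0$ to $\theta_\ell$, whereas any edge joining distinct blocks has $d$-value exactly $\ell$; connecting $k$ blocks thus costs precisely $k-1$ heavy edges. In particular $\chi(A)=\chi(Q)$ if and only if $A$ meets \emph{every} block of $\mathcal{V}(Q)$.

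Next I would introduce the induced action on blocks. By the stability of $E_{\ell-1}$ (Proposition~\ref{prop:PRT1}), every word $u$ carries each block $V\in\mathcal{V}(Q)$ into a single block $\phi_u(V)$; this yields a self-map $\phi_u$ of the finite set $\mathcal{V}(Q)$, and these maps compose as $\phi_{uu'}=\phi_{u'}\circ\phi_u$. A short computation shows that the blocks met by $Q{\cdot}u$ are exactly those in $\operatorname{im}\phi_u$, so by the reformulation above $\chi(Q{\cdot}u)=\chi(Q)$ if and only if $\phi_u$ is surjective, equivalently (as $\mathcal{V}(Q)$ is finite) bijective. The decisive consequence is monotonicity: if $\phi_x$ fails to be surjective for a single letter $x$, then for every word $w$ the image of $\phi_{xw}=\phi_w\circ\phi_x$ has size at most $|\operatorname{im}\phi_x|<|\mathcal{V}(Q)|$, so $\phi_{xw}$ is never surjective. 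Unwinding the definitions, this says precisely that $Q{\cdot}x$ avoids $\chi(Q)$. It then remains to produce one such letter, and here the factorization does the work: $\mA^{(m)}$ is synchronizing, so $\mA$ admits a reset word $z=x_1\cdots x_s$ over $\Sigma$ (alternatively, Proposition~\ref{prop:PRT5} supplies a word $v\in\Sigma^{\le m}$ with $\chi(Q{\cdot}v)\ne\chi(Q)$); in either case the associated map is a composition $\phi_{x_s}\circ\cdots\circ\phi_{x_1}$ that is not bijective. Since a composition of self-maps of a finite set is bijective only if every factor is, some $\phi_{x_j}$ is not bijective, i.e.\ not surjective, and that letter $x_j\in\Sigma$ is the desired $x$.

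The main conceptual obstacle, and the reason the statement is not immediate from Proposition~\ref{prop:PRT5}, is exactly the gap between ``letters of $\mA^{(m)}$'' (arbitrary words of length at most $m$) and genuine single letters of $\Sigma$: Proposition~\ref{prop:PRT5} only guarantees a move of length $\le m$ that avoids $\chi(Q)$, and that a single letter already suffices is not a formal consequence. The block-map reformulation is what bridges this gap, since it recasts ``avoiding $\chi(Q)$'' as the purely combinatorial, and hereditary, condition that $\phi_x$ be non-surjective. I expect the steps requiring the most care to be the verification of the identity $\gamma(A)=(\text{number of blocks met})-1$ from Lemma~\ref{lem:gamma}, and the bookkeeping showing that the blocks met by $Q{\cdot}u$ coincide with $\operatorname{im}\phi_u$; once these are in place, the reduction to the single-letter case is forced by elementary finiteness.
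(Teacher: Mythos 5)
Your overall architecture --- reduce to the induced action on the blocks of $\mathcal{V}(Q)$, observe that non-surjectivity of the block map is hereditary under composition, and extract a single bad letter from a factorization of a reset word --- is appealing, but it rests on a reformulation of the characteristic that is unjustified and false in general. You claim that for $A$ meeting exactly $k\ge 2$ blocks of $\mathcal{V}(Q)$ one has $\gamma(A)=k-1$, so that $\chi(A)=\chi(Q)$ if and only if $A$ meets every block. But $\gamma(A)$ is the minimum number of edges from $E_\ell\setminus E_{\ell-1}$ in a connected graph whose vertex set may be \emph{any} set between $A$ and $Q$; in block terms it is the cost of a Steiner tree, in the quotient graph $H$ whose vertices are the blocks and whose edges join two blocks linked by some heavy edge, spanning the blocks met by $A$. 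Lemma~\ref{lem:gamma} only guarantees that $H$ is connected, not complete. If, say, $\mathcal{V}(Q)=\{V_1,V_2,V_3\}$ and the only heavy edges join $V_1$ with $V_2$ and $V_2$ with $V_3$, then a set meeting exactly $V_1$ and $V_3$ has $\gamma(A)=2=\gamma(Q)$, hence $\chi(A)=\chi(Q)$, although it misses $V_2$. Your assertion that ``connecting $k$ blocks costs precisely $k-1$ heavy edges'' silently assumes that every two blocks are joined by a heavy edge, which nothing in Subsection~\ref{subsec:omegaspeed} provides. Consequently the implication you actually need --- that $Q{\cdot}xw$ missing a block forces $\chi(Q{\cdot}xw)\ne\chi(Q)$ --- fails, and the letter you extract (any $x_j$ with $\phi_{x_j}$ non-bijective) need not witness avoidance. (A smaller imprecision: an $E_\ell$-edge between two points of the same block need not lie in $E_{\ell-1}$; only a path of $E_{\ell-1}$-edges between them is guaranteed.)

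The paper's proof works around exactly this obstruction: it does not argue via which blocks are hit, but picks two states $p_0,q_0$ in \emph{different} blocks that a single letter $x$ sends into the \emph{same} block (found by scanning a reset word of $\mA$ for the first letter that merges the blocks of a tracked pair --- this is the same use of the reset-word factorization as yours), and then shows directly that the image under $xw$ of a minimal witnessing graph, with one heavy edge on the $p_0$--$q_0$ path deleted and a path inside $E_{\ell-1}$ reinserted between $p_0{\cdot}xw$ and $q_0{\cdot}xw$, is a connected graph in $\mathcal{T}_{Q\cdot xw}$ with strictly fewer heavy edges. To salvage your approach you would need to replace ``misses a block'' by a condition that genuinely forces the Steiner cost to drop, which essentially brings you back to the paper's graph surgery.
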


\begin{proof}
Since $|Q| > 1$, the characteristic $\chi(Q)$ is nontrivial and $|\mathcal{V}(Q)| > 1$. For each $q \in Q$ we denote by $[q]$ the only component $C \in \mathcal{V}(Q)$ such that $q \in C$.
	
There exist states $p_0,q_0 \in Q$ and a letter $x \in \Sigma$ such that $[p_0] \ne [q_0]$ and $[p_0{\cdot} x] = [q_0{\cdot}x]$. Indeed, take arbitrary states $p,q$ in $Q$ such that $[p] \ne [q]$ and any reset word $w$ of the DFA $\mA$. Since $p{\cdot} w = q{\cdot} w$, the longest prefix $w'$ of $w$ such that $[p{\cdot}w'] \ne [q{\cdot}w']$ is not equal to $w$, and so it is followed in $w$ by a letter $x$. By the choice of $w'$, we have $[p{\cdot}w'x] = [q{\cdot}w'x]$. Then the states $p_0: = p{\cdot} w'$, $q_0: = q{\cdot} w'$, and the letter $x$ are as required.
	
It is the set $Q{\cdot}x$ that avoids the characteristic $\chi(Q)$. To prove this, we show that $\chi(Q{\cdot}xu) \ne \chi(Q)$ for each word $u\in\Sigma^*$. By definition $\chi(Q{\cdot}xu)=(C_{Q{\cdot}xu},\gamma(Q{\cdot}xu))$ and $\chi(Q)=(C_Q,\gamma(Q))$. If $C_{Q{\cdot}xu}\ne C_Q$, the inequality $\chi(Q{\cdot}xu) \ne \chi(Q)$ is immediate. Therefore we may consider only words $u$ such that $C_{Q{\cdot}xu} = C_Q$. Clearly, $C_Q = Q$, whence $C_{Q{\cdot}xu} = Q$ for any such word $u$. We aim to prove that $\gamma(Q{\cdot}xu) < \gamma(Q)$.

Let $\ell:=h(Q)$ and fix a graph $G_0 = (Q, E) \in \mathcal{T}_Q$ such that $\theta_\ell(G_0) = \gamma(Q)= \min\limits_{G \in \mathcal{T}_Q} \theta_\ell(G)$. Since the graph $G_0$ is connected, it contains a path $\pi$ connecting $p_0$ and $q_0$. By the choice of the states $p_0$ and $q_0$, the components $[p_0]$ and $[q_0]$ are different, whence the path $\pi$ must include an edge $e_0\in E_\ell\setminus E_{\ell-1}$. Let $G_1$ be the graph obtained from $G_0$ by removing the edge $e_0$. By construction $G_1$ has fewer than $\gamma(Q)$ edges $e$ with $d(e)=\ell$.

Consider the image $G'_1$ of the graph~$G_1$ under the action of the word $xu$, that is, $G'_1:= (V',E')$, where $V':= Q{\cdot}xu$ and
\[
E':= \left\{(p{\cdot}xu,\,q{\cdot}xu) : (p,q)\in E\setminus\{e_0\}\right\}.
\]
By item (D3) of Proposition \ref{prop:dproperties}, $d(p{\cdot}xu,\,q{\cdot}xu)\le\ell$ for every edge $(p,q)\in E_\ell$, so the graph $G'_1$ also has fewer than $\gamma(Q)$ edges $e'$ with $d(e')=\ell$. Since $[p_0{\cdot}xu] = [q_0{\cdot}xu]$, the graph $([p_0{\cdot}xu],E_{\ell-1})$, which is connected by the definition of a component, contains a path $\pi'$ connecting $p_0{\cdot}xu$ and $q_0{\cdot}xu$.

Let $G'_0$ be the graph obtained from $G'_1$ by adding all edges of the path $\pi'$ to the set $E'$. We claim that this graph belongs to the family $\mathcal{T}_{Q{\cdot}xu}$, that is,
\begin{enumerate}
  \item[(i)] the edge set of $G'_0$ is contained in $E_{h(Q{\cdot}xu)}$;
  \item[(ii)] $G'_0$ is connected;
  \item[(iii)] $Q{\cdot}xu\subseteq V'\subseteq C_{Q{\cdot}xu}$.
\end{enumerate}

The inclusions (iii) are immediate since $V'=Q{\cdot}xu$ by construction and $C_{Q{\cdot}xu} = Q$ for the word $u$ under consideration. The latter equality also implies that $h(Q{\cdot}xu)=h(Q)=\ell$. Thus, $E_{h(Q{\cdot}xu)}=E_\ell$ and the inclusion $E'\subseteq E_\ell$ follows from the inclusion $E\subseteq E_\ell$ and the stability of $E_\ell$ (Proposition~\ref{prop:PRT1}). Besides $E'$, the edge set of $G'_0$ includes also the edges of the path $\pi'$, but by the choice of $\pi'$, these edges all lie in $E_{\ell-1}\subseteq E_\ell$. This verifies (i).

It remains to verify that the graph $G'_0$ is connected. Recall that the graph $G_0$ was connected and $G_1$ was obtained from $G_0$ by removing a single edge. The graph $G_1$ cannot be connected; otherwise it would belong to $\mathcal{T}_Q$  and have fewer than $\gamma(Q)$ edges in $E_\ell\setminus E_{\ell-1}$, contradicting the definition of $\gamma(Q)$. Thus, $G_1$ has exactly two connected parts, one containing $p_0$ and the other containing $q_0$. The argument used in the proof of Proposition~\ref{prop:PRT5} to verify the connectivity of the graph $G'$ shows that the images of these parts under the action of $xu$ form two connected subgraphs of the graph $G'_1$, containing $p_0{\cdot}xu$ and $q_0{\cdot}xu$, respectively. Since the path $\pi'$ connects $p_0{\cdot}xu$ and $q_0{\cdot}xu$, adding its edges to those of $G'_1$ produces a connected graph.

Thus, we have shown that $G'_0\in\mathcal{T}_{Q{\cdot}xu}$. Then $\gamma(Q{\cdot}xu)\le\theta_{\ell}(G'_0)$ by the definition of the function $\gamma$. On the other hand, $\theta_{\ell}(G'_0)<\gamma(Q)$ since in the edge set of the graph $G'_0$, the set $E'$ has fewer than $\gamma(Q)$ edges in $E_\ell\setminus E_{\ell-1}$ and the edges of the path $\pi'$ all belong to $E_{\ell-1}$. Combining the two inequalities yields $\gamma(Q{\cdot}xu) < \gamma(Q)$, whence $\chi(Q{\cdot}xu) \ne \chi(Q)$.
\end{proof}

\begin{proposition}
\label{prop:iteration}
The \rl{} of an $n$-state automaton on which Alice has a winning strategy in the $m/\omega$-game does not exceed $m(n-2)+1$.
\end{proposition}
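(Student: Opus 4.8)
The plan is to transfer Alice's winning strategy from the proof of Theorem~\ref{thm:main} to the iteration $\mA^{(m)}$, and then to shorten the very first move by means of Lemma~\ref{lem:first_move}. First I would invoke the observation preceding the statement: Alice wins the $m/\omega$-game on $\mA$ if and only if she wins the ordinary $\omega$-game on $\mA^{(m)}=(Q,\Sigma^{\le m})$; in particular $\mA^{(m)}$ is an $A_\omega$-automaton, so the whole apparatus of Subsection~\ref{subsec:omegaspeed}, understood (as stipulated) with respect to $\mA^{(m)}$, is at our disposal. To extract a reset word for $\mA$, I would let Bob play the empty word on each of his turns, so that the history of the game is just the concatenation of Alice's moves, each of which is a `letter' of $\mA^{(m)}$, that is, a word in $\Sigma^{\le m}$ of length at most $m$.

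Next I would run Alice's strategy with a modified opening. Set $A_0:=Q$ and take $y_1:=x$, the single letter $x\in\Sigma$ supplied by Lemma~\ref{lem:first_move}, so that $A_1:=Q{\cdot}x$ avoids the characteristic $\chi(Q)$ in $\mA^{(m)}$. For each $i\ge 1$ with $|A_i|>1$, Alice plays a `letter' $y_{i+1}\in\Sigma^{\le m}$ such that $A_{i+1}:=A_i{\cdot}y_{i+1}$ avoids $\chi(A_i)$; such a `letter' exists by Proposition~\ref{prop:PRT5} applied to the $A_\omega$-automaton $\mA^{(m)}$. Exactly as in the proof of Theorem~\ref{thm:main}, the avoidance conditions force the characteristics $\chi(A_0),\chi(A_1),\chi(A_2),\dots$ to be pairwise distinct: the avoidance of $\chi(Q)$ by $A_1$ separates $\chi(A_0)$ from every later characteristic, while the avoidance of $\chi(A_i)$ by $A_{i+1}$ separates $\chi(A_i)$ from every $\chi(A_j)$ with $j>i$.

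The counting step then delivers the bound. Let $k$ be the number of Alice's moves, so that $A_k$ is the first singleton and the characteristics $\chi(A_0),\dots,\chi(A_{k-1})$ are nontrivial and pairwise distinct. By Proposition~\ref{prop:PRT4} there are at most $|Q|-1=n-1$ nontrivial characteristics, whence $k\le n-1$. The opening move contributes length $|y_1|=1$, while each of the remaining $k-1\le n-2$ moves contributes length at most $m$. Therefore the concatenation $y_1y_2\cdots y_k$, which is a reset word for $\mA$, has length at most $1+(k-1)m\le 1+(n-2)m=m(n-2)+1$.

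I do not anticipate a genuine obstacle: the substantive work has already been carried out in Theorem~\ref{thm:main} and in Lemma~\ref{lem:first_move}. The one point demanding care is the bookkeeping---checking that substituting the single letter of Lemma~\ref{lem:first_move} for the generic opening move of the strategy does not spoil the distinctness of the characteristics (it does not, precisely because that letter was selected so that $Q{\cdot}x$ avoids $\chi(Q)$), and then correctly charging length $1$ to the first of the at most $n-1$ moves and length at most $m$ to each of the others.
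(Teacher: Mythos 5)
Your proposal is correct and follows essentially the same route as the paper: pass to the iteration $\mA^{(m)}$, run the characteristic-avoiding strategy of Theorem~\ref{thm:main} with Bob playing empty words, invoke Lemma~\ref{lem:first_move} to make the opening move a single letter, and count at most $n-1$ moves via the distinctness of the nontrivial characteristics. Your version merely spells out the bookkeeping (including the correct citation of Proposition~\ref{prop:PRT4} for the count of nontrivial characteristics) that the paper leaves implicit.
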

\begin{proof}
As observed, if Alice has a winning strategy in the $m/\omega$-game on a DFA $\mA=(Q,\Sigma)$ with $|Q|=n$, then she also has a winning strategy in the $\omega$-game on the $m$-th iteration $\mA^{(m)}=(Q,\Sigma^{\le m})$ of $\mA$. Thus, to win in $n-1$ moves, Alice may use the same strategy as in the proof of Theorem~\ref{thm:main}. Namely, on her $i$-th turn she plays a word $u_i \in \Sigma^{\leq m}$ such that the set $A_i{\cdot}u_i$ avoids the characteristic $\chi(A_i)$, where $A_i$ denotes the position of the game before that turn. By Lemma \ref{lem:first_move}, Alice can choose the first word $u_1$ to consist of a single letter.

If Alice uses the strategy just described and Bob always responds with the empty word, then the concatenation $u_1 \cdots u_{n-1}$ is a reset word for $\mA$ of length at most $m(n - 2) + 1$.
\end{proof}

\begin{remark}
For $m=2$, the bound $2(n-2)+1=2n-3$ provided by Proposition~\ref{prop:iteration} is tight. To see this, for each $n\ge3$, consider the flower automaton $\mF_n: = (\mathbb{Z}_n, \{a_1,\dots,a_{n-1}\})$, in which the letters act as follows:
\[
	m{\cdot} a_1:=\begin{cases}
		0 &\text{if $m = 1$}, \\
		m &\text{if $m \ne 1$};
	\end{cases}\qquad
	m{\cdot} a_k:=\begin{cases}
        1 &\text{if $m = k$}, \\
		k &\text{if $m = 1$}, \\
		m &\text{if $m\ne 1,k$},
	\end{cases}\quad\text{for $2\le k\le n-1$}.
\]
Figure~\ref{fig:flower} shows the automaton $\mF_n$ (and explains its name).
\begin{figure}[ht]
  \begin{center}
  \begin{tikzpicture}[->,>=latex,shorten >=1pt,node distance=2.8cm,auto,every state/.style={circle, draw=blue, minimum size=8mm, inner sep=0pt}]
    \node[state] (1) {$1$};
    \node[state] (2) [above right=of 1] {$n{-}3$};
    \node[state] (4) [below right=of 1] {$n{-}1$};
    \node[state] (1a) [below left=of 1] {$2$};
    \node[state] (1b) [left=of 1] {$3$};
    \node[state] (1c) [above left=of 1] {$4$};
    \node[state] (1r) [right=of 1] {$n{-}2$};
    \node[state] (0) [below=of 1] {$0$};
    \node (dots) [above =of 1] {$\mathbf{\centerdot\,\centerdot\,\centerdot}$};

    \draw (1) edge[pos=0.7] node[right] {$a_1$} (0)
    (1) edge [bend left=15,sloped,pos=0.85] node {$a_{n-3}$} (2)
    (2) edge [bend left=15,sloped,pos=0.4] node {$a_{n-3}$} (1)
    (1) edge [bend left=15,sloped,pos=0.85] node {$a_2$} (1a)
    (1a) edge [bend left=15,sloped,pos=0.4] node {$a_2$} (1)
    (1) edge [bend left=15,sloped,pos=0.6] node {$a_3$} (1b)
    (1b) edge [bend left=15,sloped,pos=0.4] node {$a_3$} (1)
    (1) edge [bend left=15,sloped,pos=0.6] node {$a_4$} (1c)
    (1c) edge [bend left=15,sloped,pos=0.25] node {$a_4$} (1)
    (1) edge [bend left=15,sloped,pos=0.6] node {$a_{n-2}$} (1r)
    (1r) edge [bend left=15,sloped,pos=0.4] node {$a_{n-2}$} (1)
    (1) edge [bend left=15,sloped,pos=0.55] node {$a_{n-1}$} (4)
    (4) edge [bend left=15,sloped,pos=0.15] node {$a_{n-1}$} (1);
 \end{tikzpicture}
\end{center}
  \caption{The DFA $\mF_n$. When at a state $q$, no outgoing edge labeled by a letter $x$ is shown, then a loop labeled $x$ at $q$ is assumed; such loops are omitted to lighten the figure.}
  \label{fig:flower}
\end{figure}

Alice wins the $2/\omega$-game on the automaton $\mF_n$ by using the following strategy. After Bob's move, as long as there remains a token on a state $k\ne0$, she plays the letter $a_1$ if $k=1$ and the word $a_ka_1$ if $k>1$. In this way, Alice eventually brings all tokens to state 0, regardless of Bob's responses.

Clearly, the word $a_1a_2a_1a_3\cdots a_{n-1}a_1$ with $a_1$ in all odd positions and $a_k$ in position $2(k-1)$, $2\le k\le n-1$, has length $2n-3$ and resets $\mF_n$.
Now suppose that $w$ is a reset word of minimum length for $\mF_n$. For any nonempty subset $S$ of $\mathbb{Z}_n$, we have
\[
|S{\cdot}a_1|=\begin{cases}
		|S|-1 &\text{if $0,1\in S$}, \\
		|S| &\text{otherwise},
	\end{cases}\quad \text{and} \quad |S{\cdot}a_k|=|S|\ \text{ for \ $2\le k\le n-1$}.
\]
Therefore, to compress the $n$-element set $\mathbb{Z}_n$ to a singleton, the word $w$ must have at least $n-1$ occurrences of the letter $a_1$. Each two of these occurrences must be separated by a nonempty word. Indeed, if
$w$ contained a factor $a_1^i$ with $i>1$, then replacing this factor by a single occurrence of $a_1$ would produce a shorter word that acts in exactly the same way as $w$, contradicting the minimality of $w$. Thus, $w$ can be written in the form $w=w_0a_1w_1a_1\cdots w_{n-2}a_1w_{n-1}$ for some nonempty words $w_1,\dots,w_{n-2}$. The total length of $w_1,\dots,w_{n-2}$ is at least $n-2$, whence $ |w| \ge  n-1 + n-2 = 2n - 3$.

Thus, the \rl{} of the automaton $\mF_n$ is $2n-3$.
\end{remark}

We do not know whether the bound $m(n-2)+1$ given by Proposition~\ref{prop:iteration} is tight for $m\ge 3$, and we tend to believe that it is not.

Proposition~\ref{prop:iteration} also leads to a new class of \sa{} satisfying the \v{C}ern\'y conjecture bound:

\begin{corollary}
\label{cor:iteration}
The \rl{} of an $n$-state automaton on which Alice has a winning strategy in the $n/\omega$-game does not exceed $n(n-2)+1=(n-1)^2$.
\end{corollary}

Although the observation in Corollary~\ref{cor:iteration} appears to be worth recording, it is fair to say that, in contrast to Theorem~\ref{thm:main}, we have so far not found any application of this observation to DFAs not defined in terms of synchronization games.

\subsection{Reset thresholds of $A_k$-automata with $k\in\mathbb{N}$}
\label{subsec:kspeed}
Here, for each $k\in\mathbb{N}$, we present a series of examples showing that, in contrast to $A_\omega$-automata, $A_k$-automata with $k\in\mathbb{N}$ can have reset thresholds vastly exceeding the number of states. More precisely,  the \rl{} of the DFAs in our series is bounded from below by a quadratic function of the number of states, with leading coefficient $\frac{1}{2k}$.

Since the \v{C}ern\'{y} series $\{\mathrsfs{C}_{n}\}_{n=2,3,\dots}$ witnesses the claim for $k=1$, we assume that $k>1$. For each $m\in\mathbb{N}$, consider the automaton $\mL_m^k:=(\mathbb{Z}_n,\,\{a_1,\dots,a_{k+m}\})$ with $n=k+m+1$, in which each letter $a_i$ with $i=1,\dots,k$ sends state $i$ to 0 and fixes all other states, whereas each letter $a_{k+j}$ with $j=1,\dots,m$ cyclically permutes the states $j$, $j+1$, \dots, $k+j-1$, $k+j$ and fixes all other states. Thus, for each $q\in\mathbb{Z}_n$, we have
\[
q{\cdot}a_i =\begin{cases}0 & \text{if } q=i,\\
q & \text{otherwise},
\end{cases}
\text{$i=1,\dots,k$, \ and} \ \
q{\cdot} a_{k+j} =
\begin{cases}
q+1 & \text{if } j \le q < k+j,\\
j   & \text{if } q = k+j,\\
q   & \text{otherwise},
\end{cases}
\text{$j=1,\dots,m$.}
\]
Notice that, in particular,
\begin{equation}\label{eq:decrease}
q{\cdot}a_q =\begin{cases}0 & \text{if } 1\le q\le k,\\
q-k & \text{if } k+1\le q\le k+m,
\end{cases}\quad\text{and } \ p{\cdot}a_q\ge p \ \text{ whenever } \ p\ne q.
\end{equation}
Notice also that 0 is a \emph{sink} of $\mL_m^k$, that is, a state fixed by all letters.

For illustration, Figure~\ref{fig:lipinseries} shows the automaton $\mL_2^k$.
\begin{figure}[htb]
\begin{center}
\begin{tikzpicture}[  >=latex,shorten >=1pt,
  node distance=2.3cm,auto,
  every state/.style={circle, draw=blue, scale=1},
  ell/.style={inner sep=0pt}]

\node[state] (1) {$1$};
\node[state] (2) [right of=1] {$2$};
\node[ell] (d1) [right of=2] {$\cdots$};
\node[state] (k) [right of=d1] {$k$};
\node[state, right of=k] (kp1) {$k{+}1$};
\node[state, right of=kp1] (kp2) {$k{+}2$};
\node[state] (0) [below of=d1] {$0$};
\node[ell] (d2) [below of=d1,yshift=1.25cm] {$\cdots$};

\draw[->] (1) to[bend right=20] node[left] {$a_1$} (0);
\draw[->] (2) to[bend right=10] node[left] {$a_2$} (0);
\draw[->] (k) to[bend left=20] node[right] {$a_k$} (0);
\draw[->] (1) to node[above] {$a_{k+1}$} (2);
\draw[->] (2) to node[above] {$a_{k+1}$} node[below] {$a_{k+2}$} (d1);
\draw[->] (d1) to node[above] {$a_{k+1}$} node[below] {$a_{k+2}$} (k);
\draw[->] (k) to node[above] {$a_{k+1}$} node[below] {$a_{k+2}$}(kp1);
\draw[->] (kp1) to node[below] {$a_{k+2}$} (kp2);
\draw[->] (kp1) to[bend right=30] node[sloped, above, pos=.6] {$a_{k+1}$} (1);
\draw[->] (kp2) to[bend right=30] node[sloped, above, pos=.4] {$a_{k+2}$} (2);
\end{tikzpicture}
\end{center}
\caption{The DFA $\mL_2^k$. When at a state $q$, no outgoing edge labeled by a letter $x$ is shown, then a loop labeled $x$ at $q$ is assumed; such loops are omitted to lighten the figure.}
\label{fig:lipinseries}
\end{figure}

\begin{proposition}\label{prop:PRTk1}
The DFA $\mL_m^k$ is an $A_k$-automaton.
\end{proposition}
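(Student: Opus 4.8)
The plan is to apply the localization principle (Lemma~\ref{lem:localization}) and show that Alice wins the $k$-game from an arbitrary position carrying exactly two tokens. Two features of $\mL_m^k$, both read off from \eqref{eq:decrease} and the explicit action of the letters, drive the argument. First, for every nonzero state $s$ the letter $a_s$ pushes the token on $s$ strictly down---to $0$ when $s\le k$ and to $s-k$ when $s>k$---while fixing the sink $0$ and every state lying strictly above $s$ (the states moved by $a_s$ all lie in the block $\{s-k,\dots,s\}$). Second, no letter lifts any token by more than one position, since $a_i$ with $i\le k$ fixes all states other than $i$, and $a_{k+j}$ sends a non-index state to itself or to its successor. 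Hence a word of length at most $k-1$, which is all Bob may play on a single turn of the $k$-game, raises the position of any fixed token by at most $k-1$.

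Given a two-token position, let $s$ be the smallest nonzero state holding a token; Alice always answers with $a_s$. I would split the analysis into two phases. While both tokens sit on nonzero states $p<q$, we have $s=p$, and since $q$ lies above the block moved by $a_p$, Alice's move leaves the token on $q$ untouched and sends the other token to $0$ (if $p\le k$) or to $p-k$ (if $p>k$). In the first case a token reaches the sink and the phase ends; in the second, Bob's reply can lift the displaced token back only as far as $(p-k)+(k-1)=p-1$. Thus the \emph{minimum} of the two token positions strictly decreases each round---this holds whatever Bob does to the other token, as the displaced token alone already caps the new minimum at $p-1$---so after finitely many rounds the minimum falls to a value $\le k$, and Alice's next move parks a token on the sink. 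In the second phase one token rests on the sink $0$, which every letter fixes and so never disturbs; writing $r$ for the position of the surviving active token, the very same estimate (Alice subtracts $k$, Bob restores at most $k-1$) forces $r$ to decrease round by round until $r\le k$, whereupon $a_r$ sends the last token to $0$ and Alice has won.

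The one step demanding care---and the place where the construction of $\mL_m^k$ is really used---is the claim that $a_s$ never disturbs the token Alice is \emph{not} targeting. This rests on the block structure of $a_s$ together with the fact that the other token sits strictly above $s$ or on the sink, and it is exactly what makes the minimum position a legitimate monovariant. A cruder quantity such as the sum of the two positions would not work: when the two tokens are close together, Bob can shove both upward with a single letter and thereby inflate the sum faster than Alice deflates it. Once the non-interference is established, everything reduces to the two elementary inequalities above, which yield a strict per-round decrease whenever $k>1$ and hence termination of both phases.
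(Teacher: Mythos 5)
Your proof is correct and follows essentially the same route as the paper's: Alice targets the minimum nonzero token with the letter of that index, and the observation that every letter raises a state by at most one (so Bob's word of length less than $k$ cannot undo Alice's drop of $k$) makes that minimum a strictly decreasing monovariant. The only cosmetic difference is that you first reduce to two-token positions via Lemma~\ref{lem:localization} and split into two phases, whereas the paper runs the identical strategy directly on the full position $A_s=Q{\cdot}w_s$, terminating via the lexicographic pair $\bigl(|A_s|,\min(A_s\setminus\{0\})\bigr)$.
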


\begin{proof}
Assume that Bob moves first. Denote by $w_s$ the history up to Bob's $s$-th turn, and let $A_s:=Q{\cdot}w_s$. If $A_s=\{0\}$, Alice has already won. Otherwise, if $A_s\ne\{0\}$ and $q_s:=\min(A_s\setminus\{0\})$, Alice plays the letter $a_{q_s}$. Then $q_s{\cdot}a_{q_s} = 0$ or $q_s{\cdot}a_{q_s} = q_s - k$ by \eqref{eq:decrease}. Whichever word $v_{s+1}$ of length $<k$ Bob plays in response, we have $q_s{\cdot}a_{q_s}v_{s+1}<q_s$. Hence, either $q_s{\cdot}a_{q_s}v_{s+1} = 0$ (and then $|A_{s+1}| < |A_s|$) or $q_{s+1}<q_s$. Therefore, the strategy guarantees that eventually $A_t = \{0\}$ for some $t$, and Alice wins.
\end{proof}

\begin{proposition}\label{prop:quadratic}
The reset threshold of the automaton\/ $\mL_m^k$ \ is not less than \ $\frac{n(n-1)}{2k}$.
\end{proposition}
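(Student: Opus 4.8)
The plan is to exhibit a \emph{potential function} on nonempty subsets of $\mathbb{Z}_n$ that can drop by at most $1$ under the action of any single letter, yet whose value on the full state set $\mathbb{Z}_n$ already reaches $\tfrac{n(n-1)}{2k}$. Concretely, I would assign to each state $q\in\{0,1,\dots,n-1\}$ the weight $f(q):=\lceil q/k\rceil$ (so that $f(0)=0$ and $f$ is non-decreasing in $q$), and set $\Phi(S):=\sum_{q\in S}f(q)$ for every nonempty $S\subseteq\mathbb{Z}_n$. Since $0$ is a sink, it is fixed by every letter, so $0\in\mathbb{Z}_n{\cdot}w$ for all $w$; consequently any reset word $w$ must satisfy $\mathbb{Z}_n{\cdot}w=\{0\}$, and the potential must fall from $\Phi(\mathbb{Z}_n)$ all the way down to $\Phi(\{0\})=0$ as $w$ is read. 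The crux of the argument is therefore the claim that $\Phi(S{\cdot}\ell)\ge\Phi(S)-1$ for every letter $\ell$ and every $S$; granting this, reading $w$ requires at least $\Phi(\mathbb{Z}_n)$ steps, whence $|w|\ge\Phi(\mathbb{Z}_n)$.

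To establish the key inequality I would treat the two kinds of letters separately. For $a_i$ with $1\le i\le k$, the set $S{\cdot}a_i$ is obtained from $S$ by sending the state $i$ (if present) to the sink $0$, which already belongs to $S$; hence $\Phi$ decreases by exactly $f(i)=1$ when $i\in S$ and is unchanged otherwise. For a cycling letter $a_{k+j}$, which permutes $\{j,\dots,k+j\}$ via $q\mapsto q+1$ for $j\le q<k+j$ and $k+j\mapsto j$, I would use that it is a permutation to write
\[
\Phi(S{\cdot}a_{k+j})-\Phi(S)=\sum_{q\in S\cap\{j,\dots,k+j-1\}}\bigl(f(q+1)-f(q)\bigr)-\bigl[\text{$1$ if }k+j\in S,\ 0\text{ otherwise}\bigr].
\]
Every summand $f(q+1)-f(q)$ is non-negative because $f$ is non-decreasing, so the only negative contribution is the $-1$ coming from the top state, where $f(j)-f(k+j)=-1$ by the identity $\lceil(k+j)/k\rceil=\lceil j/k\rceil+1$. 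Thus the entire expression is at least $-1$, as required.

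It then remains to evaluate the starting potential, which is a routine estimate: $\Phi(\mathbb{Z}_n)=\sum_{q=1}^{n-1}\lceil q/k\rceil\ge\sum_{q=1}^{n-1}\tfrac{q}{k}=\tfrac{n(n-1)}{2k}$, giving $\rt(\mL_m^k)\ge\tfrac{n(n-1)}{2k}$ and closing the argument.

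The step I expect to be the main obstacle is verifying $\Phi(S{\cdot}\ell)\ge\Phi(S)-1$ for the cycling letters $a_{k+j}$: although exactly one token (the one sitting on the top state $k+j$) has its weight reduced, one must confirm that \emph{every} other token moved by the cyclic permutation has non-decreasing weight, so that no single letter can produce a net drop exceeding $1$. This is precisely where the specific choice $f(q)=\lceil q/k\rceil$ pays off: the function is monotone with unit jumps spaced $k$ apart, so the wrap-around step $k+j\mapsto j$, which lowers the state index by $k$, costs exactly one unit of potential, matching the maximal drop allowed per move.
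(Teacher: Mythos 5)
Your proof is correct, and it takes a genuinely different route from the paper's. The paper fixes a reset word $w$, follows the trajectory of each starting state $q$ through $w$, defines $E_q$ as the set of positions where that trajectory strictly decreases, bounds $|E_q|\ge q/k$ by a telescoping-sum argument, and then proves that the sets $E_q$ are pairwise disjoint (this last step needs the structural fact that two distinct trajectories in $\mL_m^k$ can only merge at the sink $0$, which requires a separate small verification). You instead collapse everything into a single potential $\Phi(S)=\sum_{q\in S}\lceil q/k\rceil$ and check the one local inequality $\Phi(S{\cdot}x)\ge\Phi(S)-1$ letter by letter; the only points needing care are that the cycling letters $a_{k+j}$ are permutations (so no merging occurs and the change is a sum of per-token weight differences, with the unique negative term $f(j)-f(k+j)=-1$ coming from the wrap-around), and that each $a_i$ with $i\le k$ merges only the weight-one state $i$ into the sink. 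Both points check out. Your argument is a standard weight-function lower bound in the style of Rystsov, and it buys two things: it avoids the disjointness lemma entirely, and it directly yields the sharper integral bound $\rt(\mL_m^k)\ge\sum_{q=1}^{n-1}\lceil q/k\rceil$, which the paper only extracts afterwards in a remark by observing that each $|E_q|$ is an integer. What the paper's trajectory decomposition buys in exchange is a more explicit picture of \emph{where} in the reset word the work for each individual state is performed. One cosmetic note: your parenthetical that $0$ ``already belongs to $S$'' is not needed (and not true for arbitrary $S$); since $f(0)=0$, the estimate for $a_i$ goes through regardless, and in the intended application $S=\mathbb{Z}_n{\cdot}u$ does always contain the sink.
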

\begin{proof}
Suppose $w = x_1\cdots x_\ell$ is a reset word for $\mL_m^k$. For each $s\in\{0,1,\dots,\ell\}$, denote by $u_s$ the prefix of $w$ of length $s$; in particular, $u_0=\varepsilon$.

The reset word $w$ brings all states of $\mL_m^k$ to a single state, and since 0 is fixed by all input letters, this state must be 0. Thus, $q{\cdot}w = 0$ for every state $q\in\mathbb{Z}_n$. Let $p_t:= q{\cdot}u_t$ and consider the sequence of states
\[
q=p_0,\ p_1,\ \dots \ p_{t-1},\ p_t, \dots\ p_{\ell}=q{\cdot}w=0.
\]
If $q\ne 0$, then some states in this sequence must be strictly greater than their immediate successors. Thus, for every $q\in\{1,\dots,k+m\}$, the set
\[
E_q:=\{s\in\{1,\dots,\ell\} : p_s < p_{s-1}\}
\]
is nonempty. For $s\in E_q$, we have $p_{s-1}\ne 0$. Since $p_s=p_{s-1}{\cdot}x_s$, the first part of \eqref{eq:decrease} implies that $p_{s-1}-p_s=k$ if $k+1\le p_{s-1}\le k+m$ and $p_{s-1}-p_s=p_{s-1}\le k$ if $1\le p_{s-1}\le k$. Thus, $0<p_{s-1}-p_s\le k$ for every $s\in E_q$, while $p_{t-1}-p_t\le 0$ for all $t\notin E_q$ by the inequality in~\eqref{eq:decrease}. Now we have
\begin{align*}
q&=\sum_{t=1}^\ell(p_{t-1}-p_t) &&\text{(telescopic sum)}\\
 &\le\sum_{s\in E_q}(p_{s-1}-p_s) &&\text{(dropping nonpositive summands)}\\
 &\le\sum_{s\in E_q}k=k|E_q| &&\text{(using that $p_{s-1}-p_s\le k$ for $s\in E_q$).}
\end{align*}
Hence $|E_q|\ge\frac{q}{k}$.

For any distinct $q,q'\in\{1,\dots,k+m\}$, the sets $E_q$ and $E_{q'}$ are disjoint. Indeed, suppose to the contrary that $E_q\cap E_{q'}\ne\varnothing$, and let $s\in E_q\cap E_{q'}$. Then $p_s=p_{s-1}{\cdot}x_s < p_{s-1}$ and, at the same time, $p'_s=p'_{s-1}{\cdot}x_s < p'_{s-1}$, where $p_t:= q{\cdot}u_t$ and $p'_t:= q'{\cdot}u_t$. By \eqref{eq:decrease}, for each letter $x\in\{a_1,\dots,a_{k+m}\}$, there is exactly one state $p\in\mathbb{Z}_n$ such that $p{\cdot}x < p$. Hence, $p_{s-1}= p'_{s-1}$, that is, $q{\cdot}u_{s-1}= q'{\cdot}u_{s-1}$. By the construction of the DFA $\mL_m^k$, if two distinct states are mapped to the same state by a word, then the resulting state is 0. Since $p_{s-1}\ne0$, this is a contradiction.

We thus have
\[
|w| \ge \left|\bigcup_{q=1}^{k+m}E_q\right|=\sum_{q=1}^{k+m}|E_q|\ge\sum_{q=1}^{k+m}\frac{q}{k}=\frac{(k+m+1)(k+m)}{2k}=\frac{n(n-1)}{2k}.
\]
Hence, $\rt(\mL_m^k)\ge\frac{n(n-1)}{2k}$, as claimed. 	
\end{proof}

\begin{remarks} 1. Question 2 in~\cite{NCMA} asked whether every A-automaton with $n$ states can be reset by a word of length $n-1$. (In our present terminology, the A-automata of~\cite{NCMA} are precisely DFAs on which Alice has a winning strategy in a variant of the 2-game where Bob is not allowed to skip moves, that is, to play the empty word.) Propositions~\ref{prop:PRTk1} and~\ref{prop:quadratic} imply a negative answer for $n>4$.

2. Since $|E_q|$ is an integer, the proof of Proposition~\ref{prop:quadratic} shows that in fact $|E_q|\ge\left\lceil\frac{q}{k}\right\rceil$ for each~$q$. Hence, $\rt(\mL_m^k)\ge L:=\sum\limits_{q=1}^{k+m} \left\lceil\frac{q}{k}\right\rceil$. We present a reset word of length $L$ for $\mL_m^k$, thus showing that $L$ is the precise value of $\rt(\mL_m^k)$.

Let $v_q:=a_q$ for $1\le q\le k$ and $v_q:= a_qa_{q-k}a_{q-2k}\cdots a_{t}$, where $1\le t\le k$, for $k+1\le q\le k+m$. Then $|v_q| = \left\lceil \frac{q}{k} \right\rceil$ and $q{\cdot}v_q=0$, whereas $r{\cdot}v_q=r$ for all $r>q$. Hence, the word $v_1v_2 \cdots v_{k+m}$ resets $\mL_m^k$  and has length $L$.
\end{remarks}

Combining the observation in the second remark with Proposition~\ref{prop:quadratic} yields
\[
\frac{n(n-1)}{2k} \le \rt(\mL_m^k) < \frac{n(n-1)}{2k} + n,
\]
since $\lceil\alpha\rceil<\alpha+1$ for every real number $\alpha$. Setting $e:=\left\lfloor\frac{n-1}{k}\right\rfloor$ and $r:=n-1-ke$, we can express $\rt(\mL_m^k)$ as follows:
\[
 \rt(\mL_m^k)=k\frac{e(e+1)}2+r(e+1).
\]
If $k$ divides $n-1$, that is, $r=0$ and $e=\frac{n-1}{k}$, the right-hand side simplifies to $\frac{(n-1+k)(n-1)}{2k}$.

If $\mathbf{S}$ is a class of DFAs, its \emph{\v{C}ern\'{y} function} $\mathfrak{C}_{\mathbf{S}}(n)$ is defined as the maximum \rl\ among all $n$-state \sa\ from $\mathbf{S}$. With this notation, Theorem~\ref{thm:main}, together with the remark following it, establishes the equality $\mathfrak{C}_{\mathbf{A}_\omega}(n)=n-1$, while Proposition~\ref{prop:quadratic} yields the inequality $\mathfrak{C}_{\mathbf{A}_k}(n)\ge\frac{n(n-1)}{2k}$. We do not know how close the lower bound $\frac{n(n-1)}{2k}$ is to the actual value of $\mathfrak{C}_{\mathbf{A}_k}(n)$.

We have assumed that $k>1$ to avoid overlaps with already known facts, but the construction of the automaton $\mL_m^k$ makes perfect sense also for $k=1$. For each $n\ge 3$, the DFA $\mL_{n-2}^1$ coincides with the automaton constructed by Igor Rystsov (see~\cite[Theorem 2]{Rystsov_rus:1977} or \cite[Theorem 6.1]{Rystsov:1997}) to establish the lower bound $\binom{n}{2}$ for the \v{C}ern\'{y} function of the class of all \sa{} with a sink; this bound is known to be tight. This gives some supporting evidence for the conjecture that $\frac{(n-1+k)(n-1)}{2k}$ may be the correct value of the \v{C}ern\'{y} function of the class of $A_k$-automata with a sink for $n-1$ divisible by $k$.

If $k=n-1$, the expression $\frac{(n-1+k)(n-1)}{2k}$ evaluates to $n-1$. We will show in Section~\ref{sec:nstates} that each $A_{n-1}$-automaton with a sink is an $A_{\omega}$-automaton, and so has \rl{} at most $n-1$ by Theorem~\ref{thm:main}. Thus, $\frac{(n-1+k)(n-1)}{2k}$ gives the correct value of the \v{C}ern\'{y} function of the class of $A_k$-automata with a sink for both extreme cases $k=1$ and $k=n-1$.

\section{Decidability}
\label{sec:decidability}

Here we address the natural question of how to efficiently determine whether Alice has a winning strategy in the $k$-game on a given DFA. We assume that the reader is familiar with basics of algorithm theory, including Big O notation and the concepts of a polynomial-time algorithm and polynomial-time reduction.

\subsection{The 2-subset automaton} Our analysis relies on the well-known construction of the 2-subset automaton, which we now recall. Given a DFA $\mA=(Q,\Sigma)$, let $Q^{[2]}$ denote the set of all 2-element subsets of $Q$, augmented with a fresh symbol $\mathbf{s}$. Define the action $\circ$ of the letters $a\in\Sigma$ on $Q^{[2]}$ by:
 \[
\mathbf{s}{\circ}a:=\mathbf{s}\  \text{ and } \ \{q,q'\}{\circ}a:=\begin{cases}
 \{q{\cdot}a,\,q'{\cdot}a\}&\text{if } q{\cdot}a\ne q'{\cdot}a,\\
 \mathbf{s} &\text{otherwise}.
 \end{cases}
 \]
This defines a DFA $(Q^{[2]},\Sigma)$, in which $\mathbf{s}$ is a sink. We denote this automaton by $\mA^{[2]}$. If $\mA$ has $n$ states, then $\mA^{[2]}$ has $\binom{n}{2}+1$ states.

It is known (see, e.g., \cite[Theorem 3]{Rystsov:2003}, and is easy to verify) that the automata $\mA$ and $\mA^{[2]}$ have the same set of reset words. (In particular, either both automata are synchronizing or neither is.)
We can therefore apply to $\mA^{[2]}$ the following observation.

\begin{lemma}
\label{lem:equivalence}
For any $k\in\overline{\mathbb{N}}$ and any \sa{} $\mA$ and $\mA'$ with the same sets of reset words, either both $\mA$ and $\mA'$ are $A_k$-automata or neither is.
\end{lemma}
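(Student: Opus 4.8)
The plan is to show that the property of being an $A_k$-automaton depends only on the set of reset words, so that two synchronizing automata sharing this set either both belong to $\mathbf{A}_k$ or both do not. The key conceptual point is that the outcome of the $k$-game is governed entirely by which words, when applied to the full state set, collapse it to a singleton, and this information is exactly captured by the set of reset words together with the (shared) transition structure on \emph{sets} of states. However, since $\mA$ and $\mA'$ may have different alphabets, I cannot directly transfer a strategy letter-by-letter. Instead I would argue via the characterization of reset words applied to arbitrary subsets.

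The crucial observation is the following: for a synchronizing DFA $\mA=(Q,\Sigma)$, a word $w$ resets a subset $P\subseteq Q$ (that is, $|P{\cdot}w|=1$) if and only if $w$ admits a factorization $w=w'w''$ where $w''$ is a reset word of $\mA$ and $w'$ is arbitrary, \emph{or} more precisely, $P{\cdot}w$ is a singleton iff some suffix of $w$ is a reset word \emph{relative to} $P{\cdot}w'$. This is too delicate to transfer. So the cleaner route is to reduce everything to two-token positions via Lemma~\ref{lem:localization}: Alice wins the $k$-game on $\mA$ iff she wins from every two-state position, i.e.\ iff for every pair $\{p,q\}$ she has a strategy collapsing it. The behaviour of a single pair of tokens under the action of words is recorded by the 2-subset automaton $\mA^{[2]}$, whose sink $\mathbf{s}$ is reached from $\{p,q\}$ by $w$ precisely when $pw=qw$. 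Thus what matters is, for each pair, the set of words sending it to the sink.

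First I would note that since $\mA$ and $\mA'$ have the same reset words, a counting/covering argument shows they have the same set of \emph{collapsing words for each individual pair}: indeed, if $w$ collapses $\{p,q\}$ in $\mA$ but the analogous statement fails in $\mA'$, one builds a reset word witnessing the discrepancy, contradicting equality of reset-word sets. (This is the content of the standard fact, cited in the excerpt, that $\mA$ and $\mA^{[2]}$ have the same reset words; the same reasoning localizes to pairs.) The main obstacle is precisely making this localization rigorous when the alphabets differ: a ``word'' of $\mA$ is not literally a word of $\mA'$. The resolution is to phrase the $k$-game abstractly in terms of the action of words on $2$-element subsets, observing that Alice's legal moves are single letters and Bob's are words in $\Sigma^{<k}$, and that the \emph{winning condition}—reaching a singleton—is determined by the collapsing-word data, which by the above is identical for $\mA$ and $\mA'$.

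Concretely, I would fix $k$ and argue that Alice's having a winning strategy in the $k$-game is equivalent to a purely combinatorial condition on the family of collapsing words of pairs, phrased using the relations $E_\ell$ (for $k=\omega$) or their finite-horizon analogues (for finite $k$); by the \textit{Criterion} in Proposition~\ref{prop:PRT1} and the reduction of Lemma~\ref{lem:localization}, these relations depend only on which words collapse which pairs. Since that data coincides for $\mA$ and $\mA'$, the relations coincide, and hence the winning condition holds for $\mA$ iff it holds for $\mA'$. The expected main difficulty is verifying that the collapsing-word data for pairs—not merely for the full set—is forced to agree; I would handle this by a direct argument: given a pair $\{p,q\}$ collapsed by $w$ in $\mA$, prepend a word steering $Q$ onto (a set projecting to) $\{p,q\}$ to manufacture a genuine reset word, and conversely, thereby transferring the information across the two automata through their common reset-word set.
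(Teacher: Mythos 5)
There is a genuine gap. Your route passes through the claim that two \sa{} with the same set of reset words must have the same collapsing-word data for individual \emph{pairs} of states, and that claim is false. It is not even well-formed at the level of generality of the lemma, since $\mA$ and $\mA'$ may have entirely different state sets (the paper applies the lemma to $\mA$ and its 2-subset automaton $\mA^{[2]}$, which have $n$ and $\binom{n}{2}+1$ states, respectively), so there is no correspondence between pairs of $\mA$ and pairs of $\mA'$. Even when the state sets coincide the claim fails: take $Q=\{1,2,3\}$, let $a$ send every state to $1$ in both automata, and let $b$ act as $1\mapsto2$, $2\mapsto1$, $3\mapsto3$ in $\mA$ but as $1\mapsto2$, $2\mapsto2$, $3\mapsto3$ in $\mA'$. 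Both automata are synchronizing with the same reset words (exactly the words containing an occurrence of $a$), yet $b$ collapses the pair $\{1,2\}$ in $\mA'$ and not in $\mA$. Your proposed repair---prepending a word steering $Q$ onto $\{p,q\}$ to manufacture a genuine reset word---does not recover the pairwise data: such a steering word need not exist (in the example above no word maps $Q$ onto $\{1,2\}$ in either automaton), and even when it does, the manufactured reset word only certifies that it collapses the \emph{whole} state set of the other automaton, not that its suffix collapses any particular pair there. So the relations $E_\ell$ (and their finite-horizon analogues) need not coincide for $\mA$ and $\mA'$, and the reduction via Lemma~\ref{lem:localization} does not go through.

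The difficulty you flag at the outset---that the alphabets might differ---is illusory: appending any letter to a reset word yields a reset word, so every letter of $\Sigma$ occurs in some reset word of $\mA$, and equality of the reset-word sets forces the alphabets to coincide. Once this is noted, the lemma has a one-paragraph proof that needs neither the localization to pairs nor the relations $E_\ell$: the $k$-games on $\mA$ and $\mA'$ have identical move sets for both players (letters from $\Sigma$ for Alice, words in $\Sigma^{<k}$ for Bob) and identical winning conditions (the history must be a reset word, and the reset-word sets coincide by hypothesis), so Alice simply replays on $\mA'$, move for move, the winning strategy she has on $\mA$ and wins at the same turn; the converse holds by symmetry. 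This direct strategy transfer is the paper's proof, and it is where your proposal should have stopped.
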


\begin{proof}
Let $\mathfrak{S}$ be Alice's winning strategy in the $k$-game on $\mA$. Alice can mimic $\mathfrak{S}$ in the $k$-game on $\mA'$ as follows: whenever Bob makes a move in the game on $\mA'$, she reproduces the same move on $\mA$, determines which letter $\mathfrak{S}$ prescribes as her response, and plays the same letter in the game on $\mA'$. Then, in the games on $\mA$ and $\mA'$, the histories at each turn coincide. If the final history of the game on $\mA$ is a reset word for $\mA$, it also is such for $\mA'$. Hence $\mA'$ is an $A_k$-automaton whenever $\mA$ is, and converse follows by symmetry.
\end{proof}

As mentioned, Lemma~\ref{lem:equivalence} applies to $\mA$ and $\mA^{[2]}$, yielding the following useful fact.

\begin{lemma}
\label{lem:2subsets}
For any $k\in\overline{\mathbb{N}}$ and any DFA $\mA$, either both $\mA$ and $\mA^{[2]}$ are $A_k$-automata or neither is.
\end{lemma}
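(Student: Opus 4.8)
The plan is to reduce the statement to Lemma~\ref{lem:equivalence}. That lemma already delivers the desired dichotomy for any two \emph{synchronizing} automata sharing the same set of reset words, so the only work is to supply its two hypotheses for the particular pair $\mA$ and $\mA^{[2]}$. The sharing of reset words is precisely the cited fact that $\mA$ and $\mA^{[2]}$ have identical reset-word sets, which I would simply invoke; the synchronizability hypothesis requires a brief case distinction, since Lemma~\ref{lem:equivalence} presupposes that the automata in question are synchronizing.

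First I would dispose of the non-synchronizing case. Because $\mA$ and $\mA^{[2]}$ have the same reset words, either both are synchronizing or neither is. If neither is synchronizing, then by observation~(O1) neither is an $A_1$-automaton; and since the hierarchy~\eqref{eq:hierarchy} gives $\mathbf{A}_k\subseteq\mathbf{A}_1$ for every $k\in\overline{\mathbb{N}}$, it follows that neither $\mA$ nor $\mA^{[2]}$ is an $A_k$-automaton. Thus the conclusion holds in this case, with ``neither is'' being the outcome.

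In the remaining case both $\mA$ and $\mA^{[2]}$ are synchronizing, and they have the same sets of reset words. These are exactly the hypotheses of Lemma~\ref{lem:equivalence}, which I would apply with the second automaton taken to be $\mA^{[2]}$, concluding that either both are $A_k$-automata or neither is. Combining the two cases establishes the lemma.

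I do not expect any real obstacle: the substance of the statement is carried entirely by Lemma~\ref{lem:equivalence} together with the standard reset-word equivalence between $\mA$ and $\mA^{[2]}$. The single subtle point is that Lemma~\ref{lem:equivalence} is formulated for synchronizing automata, so the non-synchronizing case must be handled separately—but that case is immediate from (O1) and the inclusions in~\eqref{eq:hierarchy}, as above.
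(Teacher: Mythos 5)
Your proof is correct and follows essentially the same route as the paper, which likewise derives the statement immediately from Lemma~\ref{lem:equivalence} together with the fact that $\mA$ and $\mA^{[2]}$ share the same reset words. Your explicit treatment of the non-synchronizing case via (O1) and the inclusion $\mathbf{A}_k\subseteq\mathbf{A}_1$ is a small point the paper leaves implicit, and it is handled correctly.
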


For any $k\in\overline{\mathbb{N}}$, Lemma~\ref{lem:2subsets} provides a polynomial-time reduction from the problem of recognizing $A_k$-automata to the problem of recognizing $A_k$-automata with a sink. Therefore, any polynomial-time algorithm for the latter immediately gives a polynomial-time algorithm for the former. Since automata with a sink are technically simpler to handle, we approach the recognition problem for $A_k$-automata via its restriction to DFAs with a sink, both for $k=\omega$ and for $k\in\mathbb{N}$.

\subsection{Polynomial decidability of $\mathbf{A}_\omega$} We begin with a characterization of $A_\omega$-automata with a sink. Its statement involves the concept of a \scc, which we now recall. Two states $q$ and $q'$ of a DFA $\mA=(Q,\Sigma)$ are said to be \emph{mutually reachable} if $q'=q{\cdot}w$ and $q=q{'\cdot}w'$ for some words $w,w'\in\Sigma^*$. The mutual reachability relation is an equivalence on $Q$, and its classes are called the \emph{strongly connected components} of $\mA$.
\begin{proposition}
\label{prop:criterion}
A DFA $\mA=(Q,\Sigma)$ with a sink $\mathbf{s}$ is an $A_\omega$-automaton if and only if for every state $q\ne\mathbf{s}$ there exists a letter $a\in\Sigma$ such that the states $q$ and $q{\cdot}a$ belong to distinct \scc{} of $\mA$.
\end{proposition}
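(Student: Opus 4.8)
The plan is to prove the two implications separately, translating the condition on \scc{}s into statements about the two-token game via Lemma~\ref{lem:localization} and the Criterion of Proposition~\ref{prop:PRT1}.

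For the \emph{sufficiency} direction, I would show that the \scc{} condition forces $\mA\in\mathbf{A}_\omega$ by passing to the condensation of $\mA$, that is, the acyclic digraph whose vertices are the \scc{}s. Note first that $\{\mathbf{s}\}$ is a \scc{} and, under the hypothesis, it is the \emph{only} terminal vertex of the condensation: any terminal \scc{} $C$ consists of states all of whose out-edges stay in $C$, and a non-sink such state would violate the hypothesis; since a finite acyclic digraph has at least one terminal vertex, this terminal \scc{} is exactly $\{\mathbf{s}\}$. Define $r(q)$ to be the length of the longest directed path in the condensation starting at the \scc{} of $q$; then $r(q)=0$ if and only if the \scc{} of $q$ is terminal, i.e.\ if and only if $q=\mathbf{s}$. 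Two monotonicity facts are immediate: no letter can increase $r$ (a transition either stays inside a \scc{} or descends in the condensation), and for every non-sink state $q$ the hypothesis furnishes a letter $a$ with $r(q{\cdot}a)<r(q)$. By Lemma~\ref{lem:localization} it suffices for Alice to win from a position with two tokens, on states $p$ and $q$. I would use $r(p)+r(q)$ as a potential: whenever the tokens are not both at $\mathbf{s}$, at least one of the two states is non-sink, and Alice plays the descending letter there, which strictly decreases the potential, while no response of Bob can increase it. As the potential is a nonnegative integer, it reaches $0$ after finitely many of Alice's moves, at which point both tokens sit on $\mathbf{s}$ and Alice has won.

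For the \emph{necessity} direction I would argue the contrapositive: if some non-sink state $q$ has $q{\cdot}a$ in its own \scc{} $C$ for every letter $a$, then $\mA\notin\mathbf{A}_\omega$. Here $C\ne\{\mathbf{s}\}$ and $\mathbf{s}\notin C$, since $\mathbf{s}$ reaches only itself. I would show that the pair $(q,\mathbf{s})$ lies in no relation $E_\ell$ by giving Bob a strategy that keeps these two tokens apart forever; then $\bigcup_{\ell\ge0}E_\ell\ne Q\times Q$, and the Criterion of Proposition~\ref{prop:PRT1} yields that $\mA$ is not an $A_\omega$-automaton. Bob maintains the invariant that after each of his moves the tokens occupy $q$ and $\mathbf{s}$. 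Starting from $(q,\mathbf{s})$, any letter $x$ that Alice plays sends the first token to $q{\cdot}x\in C$ (still distinct from $\mathbf{s}$, as $\mathbf{s}\notin C$) and leaves the second at $\mathbf{s}$; Bob then exploits the strong connectivity of $C$ to choose a word $w$ with $(q{\cdot}x){\cdot}w=q$, which fixes $\mathbf{s}$ and restores the invariant. Since the tokens are distinct after every move, they never merge, so Alice cannot synchronize them even when Bob moves first, which is precisely what the membership of $(q,\mathbf{s})$ in some $E_\ell$ would require.

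The conceptual core is the sufficiency argument, and the point that needs the most care is the claim that $\{\mathbf{s}\}$ is the \emph{unique} terminal \scc{}: this is exactly what upgrades the per-token descent furnished by the hypothesis into the guarantee that the potential $r(p)+r(q)$ vanishes only at full synchronization rather than stalling at some lower terminal component. The necessity direction is then a short trapping argument whose sole ingredient is the strong connectivity of the offending component $C$.
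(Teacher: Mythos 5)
Your proposal is correct and follows essentially the same route as the paper: for necessity, Bob traps a token inside the offending strongly connected component (using its strong connectivity to return the token to $q$ after each of Alice's letters), and for sufficiency, Alice descends through the condensation, whose unique terminal component is $\{\mathbf{s}\}$. The only cosmetic differences are that you reduce to two tokens via Lemma~\ref{lem:localization} and package the descent as a potential function $r(p)+r(q)$, whereas the paper escorts one token at a time to $\mathbf{s}$ using the observation that the sequence of visited components can never repeat; both hinge on exactly the same structural facts.
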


\begin{proof}
We argue using the graphical representation of synchronization games, in which all states initially hold tokens. The token on the state $\mathbf{s}$ never moves in the course of the game.

Suppose that there exists a state $q\ne\mathbf{s}$ such that for every $a\in\Sigma$ the state $q{\cdot}a$ lies in the \scc{} of $q$. Then, for each $a\in\Sigma$, there exists a word $w_a\in\Sigma^*$ such that $q{\cdot}aw_a=q$. Therefore, in the $\omega$-game on $\mA$, Bob can always return the token to $q$ by responding with $w_a$ whenever Alice plays $a$. Hence, after each of Bob's moves, at least the states $\mathbf{s}$ and $q$ hold tokens, and Bob wins.

Conversely, suppose that for every state $q\ne\mathbf{s}$ there exists a letter $a\in\Sigma$ such that  the states $q$ and $q{\cdot}a$ belong to distinct \scc{} of $\mA$. Then Alice wins the $\omega$-game on $\mA$ from any position in which some states other than $\mathbf{s}$ still hold tokens, by repeatedly using the following procedure. She chooses an arbitrary state $q_0\ne\mathbf{s}$ holding a token and plays a letter $a_1$ such that $q_0{\cdot}a_1\notin K_0$ where $K_0$ is the \scc{} containing $q_0$. When Bob plays a word $w_1$ and $q_0{\cdot}a_1w_1\ne\mathbf{s}$, Alice responds with a letter $a_2$ such that $q{\cdot}a_1w_1a_2\notin K_1$ where $K_1$ is the \scc{} containing $q_0{\cdot}a_1w_1$, and so on. Since the \scc{}s in the sequence $K_0,K_1,\dots$ can never repeat, this process must eventually terminate, which is only possible if $q_0{\cdot}a_1w_1a_2\cdots a_\ell w_\ell=\mathbf{s}$ for some $\ell$. The token that traveled from $q_0$ to $\mathbf{s}$ is then removed. Acting in this way, Alice removes all tokens except the one on $\mathbf{s}$ and wins.
\end{proof}

\begin{proposition}
\label{prop:omega-algorithm-sink}
There exists an algorithm that, given a DFA $\mathrsfs{A}=(Q,\Sigma)$ with a sink, decides the winner in the $\omega$-game on $\mathrsfs{A}$ in $O(|Q|\cdot|\Sigma|)$ time.
\end{proposition}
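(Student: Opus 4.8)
The plan is to turn the combinatorial characterization of Proposition~\ref{prop:criterion} into a concrete algorithm. That proposition reduces the question of whether a DFA $\mathrsfs{A}=(Q,\Sigma)$ with a sink $\mathbf{s}$ is an $A_\omega$-automaton to a purely graph-theoretic condition: Alice wins if and only if every non-sink state $q$ has some outgoing edge $q\xrightarrow{a}q{\cdot}a$ that leaves the \scc{} of $q$. Once the decomposition of $\mathrsfs{A}$ into \scc{}s is known, this condition can be tested by a single scan over the edges. Hence the whole task amounts to (i) computing the \scc{}s and (ii) performing this scan, and the running time is dominated by the cost of step~(i).

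For step~(i), I would run any standard linear-time algorithm for computing \scc{}s, such as Tarjan's or Kosaraju's. Applied to the digraph of $\mathrsfs{A}$, which has $|Q|$ vertices and exactly $|Q|\cdot|\Sigma|$ edges (one edge $q\xrightarrow{a}q{\cdot}a$ for each state--letter pair, counted with multiplicity), such an algorithm labels each state with the index of its \scc{} in time $O(|Q|+|Q|\cdot|\Sigma|)=O(|Q|\cdot|\Sigma|)$.

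For step~(ii), I iterate over all non-sink states $q$ and, for each, over all letters $a\in\Sigma$, using the labels computed in step~(i) to check in constant time whether $q$ and $q{\cdot}a$ carry different labels. For each $q$, a Boolean flag records whether at least one such letter has been found; if after processing $q$ the flag is still false, the algorithm halts and declares that Bob wins. If every non-sink state passes the test, it declares that Alice wins. Correctness of this output is exactly the content of Proposition~\ref{prop:criterion}. This scan inspects each of the $|Q|\cdot|\Sigma|$ edges once and so runs in $O(|Q|\cdot|\Sigma|)$ time, which gives the claimed overall bound.

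There is no genuine obstacle here: the only nontrivial ingredient is the existence of a linear-time algorithm for computing \scc{}s, which is classical. The one point worth stating carefully is the edge count, since the digraph of $\mathrsfs{A}$ is in general a multigraph---parallel edges arise whenever distinct letters act identically on a state. As the standard algorithms handle multigraphs without modification and run in time linear in the true number of edges, namely $|Q|\cdot|\Sigma|$, the complexity analysis is unaffected.
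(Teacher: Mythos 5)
Your proof is correct and follows essentially the same route as the paper: apply Proposition~\ref{prop:criterion}, compute the \scc{}s with a linear-time algorithm such as Tarjan's on the digraph with $|Q|$ vertices and $|Q|\cdot|\Sigma|$ edges, and then verify the criterion by a single $O(|Q|\cdot|\Sigma|)$ scan over state--letter pairs. The only cosmetic difference is that the paper adds an explicit early exit when the \scc{} computation reveals more than one sink, but this case is already handled correctly by your scan, since any additional sink fails the outgoing-edge test.
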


\begin{proof}
Let $n:=|Q|$ and $m:=|\Sigma|$. We show that the condition of Proposition~\ref{prop:criterion} can be verified in $O(nm)$ time. It is well known that the \scc{}s of a digraph can be computed in time linear in the numbers of vertices and edges (for example, by Tarjan's algorithm~\cite[Section 4]{Tarjan:1972}). The digraph of $\mA$ has $n$ vertices and $nm$ edges so its \scc{}s can be computed in $O(nm)$ time.

If the computation reveals that $\mA$ has more than one sink, then $\mA$ cannot be a \san, and Bob wins. If $\mA$ has exactly one sink, $O(nm)$ checks suffice to verify, for each non-sink state $q$ and each letter $a\in\Sigma$, whether the states $q$ and $q{\cdot}a$ belong to the same \scc.
\end{proof}

\begin{theorem}
\label{thm:omega-algorithm}
There exists an algorithm that, given a DFA $\mathrsfs{A}=(Q,\Sigma)$, decides the winner in the $\omega$-game on $\mathrsfs{A}$ in $O(|Q|^2\cdot|\Sigma|)$ time.
\end{theorem}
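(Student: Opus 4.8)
The plan is to reduce the general case to the case of a DFA with a sink, which is already handled by Proposition~\ref{prop:omega-algorithm-sink}, by means of the 2-subset construction. Given $\mathrsfs{A}=(Q,\Sigma)$ with $n:=|Q|$ and $m:=|\Sigma|$, the algorithm first builds the 2-subset automaton $\mathrsfs{A}^{[2]}=(Q^{[2]},\Sigma)$ and then runs the sink-case procedure of Proposition~\ref{prop:omega-algorithm-sink} on it. By Lemma~\ref{lem:2subsets}, $\mathrsfs{A}$ is an $A_\omega$-automaton if and only if $\mathrsfs{A}^{[2]}$ is, so the verdict returned on $\mathrsfs{A}^{[2]}$ correctly identifies the winner of the $\omega$-game on $\mathrsfs{A}$. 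Note that $\mathrsfs{A}^{[2]}$ always carries the sink $\mathbf{s}$ by construction, so Proposition~\ref{prop:omega-algorithm-sink} is indeed applicable.

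First I would bound the cost of constructing $\mathrsfs{A}^{[2]}$. Its state set $Q^{[2]}$ has $\binom{n}{2}+1$ elements, and for each of the $\binom{n}{2}$ two-element subsets $\{q,q'\}$ and each letter $a\in\Sigma$ the value $\{q,q'\}{\circ}a$ is computed in constant time by reading off $q{\cdot}a$ and $q'{\cdot}a$ and comparing them. Hence the digraph of $\mathrsfs{A}^{[2]}$, which has $\binom{n}{2}+1=O(n^2)$ vertices and $\bigl(\binom{n}{2}+1\bigr)m=O(n^2m)$ edges, is assembled in $O(n^2m)$ time.

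Then I would invoke Proposition~\ref{prop:omega-algorithm-sink} with $\mathrsfs{A}^{[2]}$ in place of $\mathrsfs{A}$. Since $|Q^{[2]}|=\binom{n}{2}+1=O(n^2)$ while the alphabet is still $\Sigma$ with $m$ letters, that proposition decides the winner of the $\omega$-game on $\mathrsfs{A}^{[2]}$ in $O\bigl(|Q^{[2]}|\cdot m\bigr)=O(n^2m)$ time. Adding the construction cost leaves the total at $O(n^2m)=O(|Q|^2\cdot|\Sigma|)$, as required.

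There is no genuine obstacle here: the argument is a direct composition of Lemma~\ref{lem:2subsets} and Proposition~\ref{prop:omega-algorithm-sink}. The only point deserving care is the bookkeeping of the blow-up incurred by the 2-subset construction. Passing from $\mathrsfs{A}$ to $\mathrsfs{A}^{[2]}$ squares the number of states (from $n$ to $\Theta(n^2)$) while preserving the alphabet, and the sink-case algorithm is linear in the product of the number of states and the number of letters; multiplying these out yields exactly the target bound $O(|Q|^2\cdot|\Sigma|)$, with the quadratic dependence on $|Q|$ being an artifact of the reduction rather than of the sink-case procedure itself.
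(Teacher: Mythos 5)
Your proposal is correct and follows exactly the paper's own argument: reduce to the 2-subset automaton $\mA^{[2]}$ (which has a sink and $\binom{|Q|}{2}+1$ states), invoke Lemma~\ref{lem:2subsets} for correctness and Proposition~\ref{prop:omega-algorithm-sink} for the running time, and account for the $O(|Q|^2\cdot|\Sigma|)$ construction cost. Your additional bookkeeping of the blow-up is a slightly more explicit version of what the paper states in one line.
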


\begin{proof}
By Lemma~\ref{lem:2subsets} (with $k=\omega$) it suffices to recognize whether the 2-subset automaton $\mA^{[2]}$ is an $A_\omega$-automaton. Since $\mA^{[2]}$ has a sink, Proposition~\ref{prop:omega-algorithm-sink} applies, providing an answer in $O(|Q^{[2]}|\cdot|\Sigma|)= O(|Q|^2\cdot|\Sigma|)$ time. The preprocessing overhead (namely, constructing $\mA^{[2]}$ from $\mA$) is also $O(|Q|^2\cdot|\Sigma|)$, so the total running time is $O(|Q|^2\cdot|\Sigma|)$.
\end{proof}

\subsection{Polynomial decidability of $\mathbf{A}_k$ with $k\in\mathbb{N}$}
As in the previous subsection, we first provide an algorithm for DFAs with a sink. Unlike the algorithm of Proposition~\ref{prop:omega-algorithm-sink}, which is simple enough to admit an informal description, the present algorithm is more involved and therefore requires a more formal presentation. We begin with pseudocode and then give a justification and a complexity analysis.

We introduce notation used in the following pseudocode. For a nonempty subset $F\subseteq Q$ and a state $q\in Q$ of a DFA $\mA=(Q,\Sigma)$, define
\[
F{\cdot}\Sigma^{-1}:=\bigcup_{a\in\Sigma}\{q\in Q:q{\cdot}a\in F\}\quad \text{and} \quad q{\cdot}\Sigma^{<k}:=\{q{\cdot}w:w\in\Sigma^{<k}\}.
\]
In terms of the graphical representation of the $k$-game on $\mA$, the set $F{\cdot}\Sigma^{-1}$ consists of all states from which Alice can move a token to a state in $F$ in a single move, whereas $q{\cdot}\Sigma^{<k}$ collects all states to which Bob can move the token from $q$ in a single move.
\begin{algorithm}
  \setlength{\commentspace}{6cm}
  \begin{algorithmic}[1]
    \FUNC{Input: {\normalfont DFA $\mA=(Q,\Sigma)$ with a sink $\mathbf{s}$}}
    \STATE\algcomment{0}{Initialization}$F,P\gets\{\mathbf{s}\}$
    \WHILE{$F\subsetneqq Q$}
    \STATE\algcomment{1}{Preliminary marking}$P\gets F{\cdot}\Sigma^{-1}$
    \IF{$(\forall q\in P\setminus F)\ q{\cdot}\Sigma^{<k}\setminus P\ne\varnothing$}
    \RETURN $\mA$ is not an $A_k$-automaton
    \ELSE\STATE\algcomment{2}{Firm marking}$F\gets F\cup\{q\in P\setminus F: q{\cdot}\Sigma^{<k}\subseteq P\}$
    \ENDIF
    \ENDWHILE
    \RETURN $\mA$ is an $A_k$-automaton
  \end{algorithmic}
\caption{Checking whether Alice has a winning strategy in the $k$-game on a given DFA with a sink.}\label{alg:checkforAk}
\end{algorithm}

The reader may find it instructive to trace Algorithm~\ref{alg:checkforAk} on the automaton $\mL^3_2$ from the series introduced in Subsection~\ref{subsec:kspeed}. Figures~\ref{fig:algorithm} and~\ref{fig:algorithm-end} illustrate the execution of the algorithm under the assumption that $k=3$.

Algorithm~\ref{alg:checkforAk} operates with two sets of states: the set $F$ of firmly marked states and the set $P$ of preliminarily marked states. Lines 3 and 7 show how these sets are updated. (In Figures~\ref{fig:algorithm} and~\ref{fig:algorithm-end} the states in $F$ are shown in dark blue, whereas states in $P$ are shown in light blue, with the darker color overriding the lighter.) The algorithm initializes $F$ as the singleton~$\{\mathbf{s}\}$. In the main loop, the set $P$ is first updated to consist of all tails of the edges in $\mA$ whose heads lie in the current set $F$. Then each state in $P\setminus F$ is tested for the property that all its images under the action of an arbitrary word of length less than $k$ belong to $P$. If no such states exist, the algorithm reports that $\mA$ is not an $A_k$-automaton and stops; otherwise, all states that pass the test are added to the set $F$. The main loop repeats until $F=Q$, at which point the algorithm reports that $\mA$ is an $A_k$-automaton.

In the top picture in Figure~\ref{fig:algorithm}, state 0 is marked dark blue, since 0 is a sink of the DFA $\mL^3_2$. The next picture shows states 1, 2, and 3 marked light blue due to the edges $1\xrightarrow{a_1}0$,
$2\xrightarrow{a_2}0$, and $3\xrightarrow{a_3}0$, respectively. Next, the test is performed, and only state 1 passes, whereas states 2 and 3 do not, since $2{\cdot}a_4^2=3{\cdot}a_4=4\notin\{0,1,2,3\}$. At the next step, state~4 is marked light blue due to the edge $4\xrightarrow{a_4}1$. Next, states 2, 3, and 4 are tested, and this time state 2 passes---the equality $2{\cdot}a_4^2=4$ is no longer an obstacle, since state~4 has been added to the set of preliminarily marked states. States 3 and 4 do not pass, since $3{\cdot}a_5^2=4{\cdot}a_5=5\notin\{0,1,2,3,4\}$. This outcome is shown in the bottom picture in Figure~\ref{fig:algorithm}.
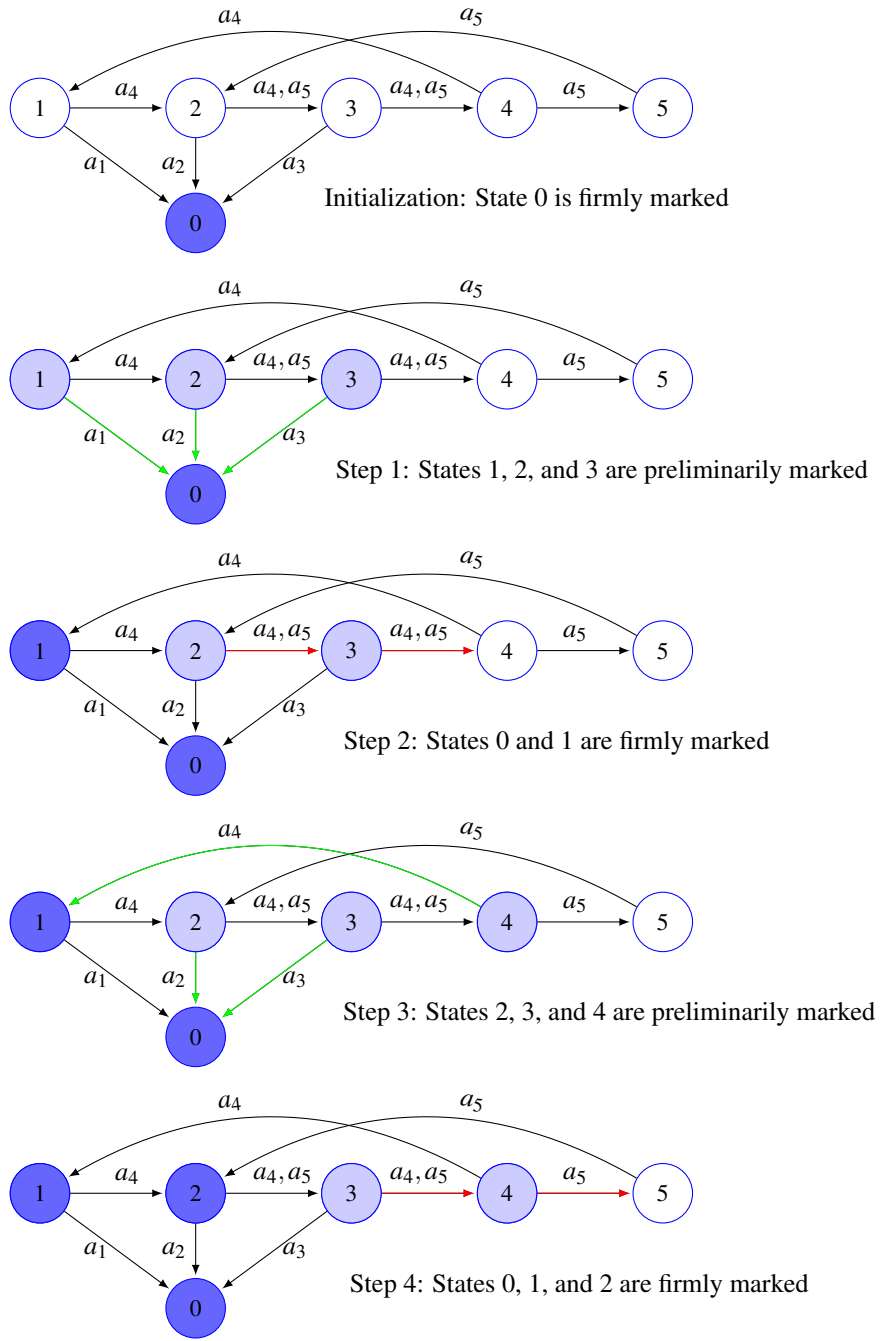
\begin{figure}[p]
\begin{center}
\begin{tikzpicture}[
  >=latex, shorten >=1pt,
  node distance=2.3cm, auto,
  every state/.style={circle, draw=blue, scale=0.9},
  ell/.style={inner sep=0pt}
]

\def\dy{3.61cm}

\foreach \i in {0,1,2,3,4} {
  \begin{scope}[yshift=-\i*\dy]

    \node[state] (1-\i) {$1$};
    \node[state] (2-\i) [right of=1-\i] {$2$};
    \node[state] (3-\i) [right of=2-\i] {$3$};
    \node[state] (4-\i) [right of=3-\i] {$4$};
    \node[state] (5-\i) [right of=4-\i] {$5$};
    \node[state, fill=blue!60] (0-\i) [below of=2-\i,yshift=0.6cm] {$0$};

    \draw[->] (1-\i) to node[left] {$a_1$} (0-\i);
    \draw[->] (2-\i) to node[left] {$a_2$} (0-\i);
    \draw[->] (3-\i) to node[right] {$a_3$} (0-\i);

    \draw[->] (1-\i) to node[above, pos=0.6] {$a_4$} (2-\i);
    \draw[->] (2-\i) to node[above, pos=0.6] {$a_4,a_5$} (3-\i);
    \draw[->] (3-\i) to node[above, pos=0.4] {$a_4,a_5$} (4-\i);
    \draw[->] (4-\i) to node[above, pos=0.4] {$a_5$} (5-\i);

    \draw[->] (4-\i) to[bend right=30]
      node[sloped, above, pos=.6] {$a_4$} (1-\i);
    \draw[->] (5-\i) to[bend right=30]
      node[sloped, above, pos=.4] {$a_5$} (2-\i);

  \end{scope}
}

\draw[->,green] (1-1) to (0-1);
\draw[->,green] (2-1) to (0-1);
\draw[->,green] (3-1) to (0-1);

\draw[->,red] (2-2) to (3-2);
\draw[->,red] (3-2) to (4-2);

\draw[->,green] (2-3) to (0-3);
\draw[->,green] (3-3) to (0-3);
\draw[->,green] (4-3) to[bend right=30] (1-3);

\draw[->,red] (4-4) to (5-4);
\draw[->,red] (3-4) to (4-4);

\node (s0) [right of=0-0,xshift=2.1cm,yshift=0.3cm] {Initialization: State 0 is firmly marked};
\node (s1) [right of=0-1,xshift=3.1cm,yshift=0.3cm] {Step 1: States 1, 2, and 3 are preliminarily marked};
\node (s2) [right of=0-2,xshift=2.5cm,yshift=0.3cm] {Step 2: States 0 and 1 are firmly marked};
\node (s3) [right of=0-3,xshift=3.2cm,yshift=0.3cm] {Step 3: States 2, 3, and 4 are preliminarily marked};
\node (s4) [right of=0-4,xshift=2.8cm,yshift=0.3cm] {Step 4: States 0, 1, and 2 are firmly marked};

\node[state, fill=blue!20] (s11)  [left of=2-1] {$1$};
\node[state, fill=blue!20] (s12)  [right of=s11] {$2$};
\node[state, fill=blue!20] (s13)  [right of=s12] {$3$};

\node[state, fill=blue!60] (s21)  [left of=2-2] {$1$};
\node[state, fill=blue!20] (s22)  [right of=s21] {$2$};
\node[state, fill=blue!20] (s23)  [right of=s22] {$3$};

\node[state, fill=blue!60] (s31)  [left of=2-3] {$1$};
\node[state, fill=blue!20] (s32)  [right of=s31] {$2$};
\node[state, fill=blue!20] (s33)  [right of=s32] {$3$};
\node[state, fill=blue!20] (s34)  [right of=s33] {$4$};

\node[state, fill=blue!60] (s41)  [left of=2-4] {$1$};
\node[state, fill=blue!60] (s42)  [right of=s41] {$2$};
\node[state, fill=blue!20] (s43)  [right of=s42] {$3$};
\node[state, fill=blue!20] (s44)  [right of=s43] {$4$};

\end{tikzpicture}
\end{center}
\caption{The first steps of Algorithm~\ref{alg:checkforAk} (with $k=3$) on the DFA $\mL_2^3$, with loops omitted. States in $F$ (firmly marked) are shown in dark blue, whereas states in $P\setminus F$ (preliminarily but not yet firmly marked) are shown in light blue. Green edges witness preliminary markings and red paths originate from the states that fail the test for firm marking.}
\label{fig:algorithm}
\end{figure}

The two final steps are illustrated in Figure~\ref{fig:algorithm-end}.

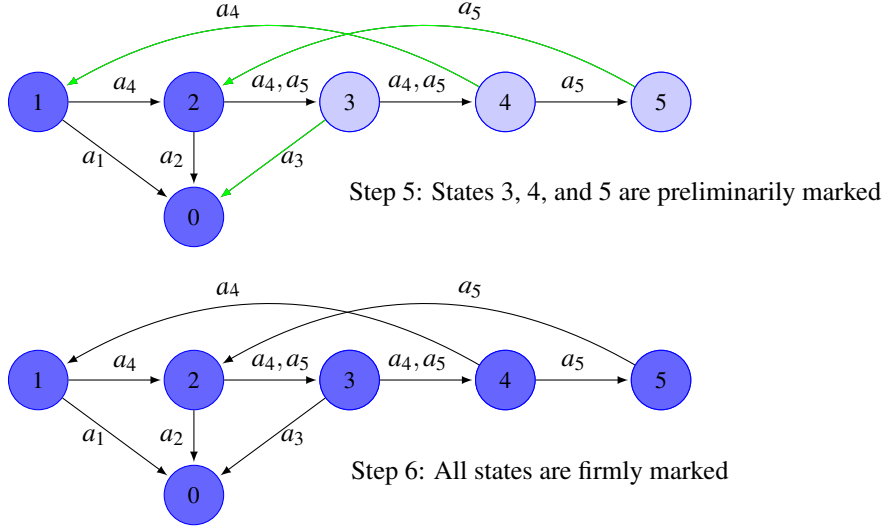
\begin{figure}[ht]
\begin{center}
\begin{tikzpicture}[
  >=latex, shorten >=1pt,
  node distance=2.3cm, auto,
  every state/.style={circle, draw=blue, scale=0.9},
  ell/.style={inner sep=0pt}
]

\def\dy{3.7cm}

\foreach \i in {0,1} {
  \begin{scope}[yshift=-\i*\dy]

    \node[state, fill=blue!60] (1-\i) {$1$};
    \node[state, fill=blue!60] (2-\i) [right of=1-\i] {$2$};
    \node[state] (3-\i) [right of=2-\i] {$3$};
    \node[state] (4-\i) [right of=3-\i] {$4$};
    \node[state] (5-\i) [right of=4-\i] {$5$};
    \node[state, fill=blue!60] (0-\i) [below of=2-\i,yshift=0.6cm] {$0$};

    \draw[->] (1-\i) to node[left] {$a_1$} (0-\i);
    \draw[->] (2-\i) to node[left] {$a_2$} (0-\i);
    \draw[->] (3-\i) to node[right] {$a_3$} (0-\i);

    \draw[->] (1-\i) to node[above, pos=0.6] {$a_4$} (2-\i);
    \draw[->] (2-\i) to node[above, pos=0.6] {$a_4,a_5$} (3-\i);
    \draw[->] (3-\i) to node[above, pos=0.4] {$a_4,a_5$} (4-\i);
    \draw[->] (4-\i) to node[above, pos=0.4] {$a_5$} (5-\i);

    \draw[->] (4-\i) to[bend right=30]
      node[sloped, above, pos=.6] {$a_4$} (1-\i);
    \draw[->] (5-\i) to[bend right=30]
      node[sloped, above, pos=.4] {$a_5$} (2-\i);

  \end{scope}
}

\draw[->,green] (3-0) to (0-0);
\draw[->,green] (4-0) to[bend right=30] (1-0);
\draw[->,green] (5-0) to[bend right=30] (2-0);

\node (s5) [right of=0-0,xshift=3.3cm,yshift=0.3cm] {Step 5: States 3, 4, and 5 are preliminarily marked};
\node (s6) [right of=0-1,xshift=2.3cm,yshift=0.3cm] {Step 6: All states are firmly marked};

\node[state, fill=blue!20] (s53)  [right of=2-0] {$3$};
\node[state, fill=blue!20] (s54)  [right of=s53] {$4$};
\node[state, fill=blue!20] (s55)  [right of=s54] {$5$};

\node[state, fill=blue!60] (s63)  [right of=2-1] {$3$};
\node[state, fill=blue!60] (s64)  [right of=s63] {$4$};
\node[state, fill=blue!60] (s65)  [right of=s64] {$5$};
\end{tikzpicture}
\end{center}
\caption{The final steps of Algorithm~\ref{alg:checkforAk} (with $k=3$) on the DFA $\mL_2^3$.}
\label{fig:algorithm-end}
\end{figure}

For comparison, when executed on the same DFA $\mL_2^3$ with $k=4$, Algorithm~\ref{alg:checkforAk} stops at Step~2 because each of the states 1, 2, and 3 can be sent to state 4 by a word of length less than 4, and hence none of these states passes the test.

\begin{proposition}
\label{prop:correctness}
Algorithm~\ref{alg:checkforAk} correctly determines the winner in the $k$-game on a given DFA with a sink.
\end{proposition}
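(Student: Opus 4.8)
The plan is to read off the game-theoretic meaning of the two sets maintained by the algorithm and then appeal to Lemma~\ref{lem:localization}. Let $F$ and $P$ denote their values when the main loop halts, and write $\{\mathbf s\}=F_0\subseteq F_1\subseteq\cdots$ for the successive values taken by $F$, with $P_i:=F_i\cdot\Sigma^{-1}$. First I would record the invariant $F\subseteq P$: indeed $\mathbf s\cdot a=\mathbf s\in F$ gives $\mathbf s\in P$, and a state entering $F$ at a stage $i+1$ lay in $P_i=F_i\cdot\Sigma^{-1}\subseteq F\cdot\Sigma^{-1}=P$. Next I would reduce the whole game to a one-token pursuit of the sink. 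Since $\mathbf s$ permanently carries a token, a two-token position $\{\mathbf s,q\}$ is won by Alice exactly when she can force the token on $q$ to reach $\mathbf s$; conversely, the ability to force any single token to $\mathbf s$ lets Alice win from an arbitrary two-token position $\{p,q\}$ by driving $p$ to $\mathbf s$ and then $q$ to $\mathbf s$. By Lemma~\ref{lem:localization}, it therefore suffices to prove that the loop halts with $F=Q$ if and only if Alice can force a single token to $\mathbf s$ from every state.

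For soundness, suppose the algorithm reports that $\mA$ is an $A_k$-automaton, i.e.\ the loop halts with $F=Q$. Assign to each state the rank equal to the least index $i$ with the state in $F_i$, so that $\mathrm{rank}(\mathbf s)=0$. If a token occupies a state $q$ of rank $i+1$, the firm-marking step (line 7) guarantees $q\cdot\Sigma^{<k}\subseteq P_i$; hence, assuming as we may that Bob moves first, whatever word of length ${<}k$ Bob plays leaves the token in $P_i=F_i\cdot\Sigma^{-1}$, and a single letter of Alice's then sends it into $F_i$, a state of strictly smaller rank. Iterating this one-round rank descent drives the token to rank $0$, namely to $\mathbf s$, in finitely many rounds. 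Thus Alice forces every single token to $\mathbf s$, and by the reduction above she wins the $k$-game.

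For completeness, suppose the algorithm reports that $\mA$ is not an $A_k$-automaton. The halting condition gives $F\subsetneqq Q$ together with $q\cdot\Sigma^{<k}\setminus P\ne\varnothing$ for every $q\in P\setminus F$. I would first observe $Q\setminus P\ne\varnothing$: if $P\setminus F=\varnothing$ then $P=F\ne Q$, and otherwise any escape word witnessing the halting condition lands outside $P$. Fix $q_0\in Q\setminus P$; as $\mathbf s\in P$ we have $q_0\ne\mathbf s$, so $\{\mathbf s,q_0\}$ is a genuine two-token position. Bob's strategy is to maintain, after each of his moves, the invariant that the roaming token lies outside $P$. This is sustainable: if the token is outside $P=F\cdot\Sigma^{-1}$ then no letter of Alice's can carry it into $F$, so after her move it sits on some $q'\notin F$; if $q'\notin P$ Bob passes with $\varepsilon$, while if $q'\in P\setminus F$ the halting condition supplies a word of length ${<}k$ returning the token to $Q\setminus P$. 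Since $\mathbf s\in P$, the token never reaches $\mathbf s$, the two tokens survive forever, and by Lemma~\ref{lem:localization} Alice has no winning strategy.

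I expect the crux to be the completeness direction, and specifically the choice of Bob's invariant. The tempting invariant ``keep the token out of $F$'' fails, since Alice can push a token from $P\setminus F$ straight into $F$; the correct, slightly stronger invariant ``keep the token out of $P$ after each of Bob's moves'' is precisely what the halting test $q\cdot\Sigma^{<k}\setminus P\ne\varnothing$ was designed to sustain, and recognizing this---together with the auxiliary facts $F\subseteq P$ and $Q\setminus P\ne\varnothing$---is the heart of the argument. A minor point I would check separately is termination: each iteration that does not return enlarges $F$ by at least one state, and $F\subseteq Q$ is finite, so the loop halts after at most $|Q|$ passes.
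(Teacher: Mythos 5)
Your proof is correct and follows essentially the same route as the paper's: your rank-descent argument for soundness is the paper's induction on the stage at which a state enters $F$, and your completeness argument uses exactly the paper's strategy of Bob keeping a token outside $P$ (the paper's $P_{i+1}$), with the reduction to single-token pursuit of the sink playing the role of Lemma~\ref{lem:localization}. The extra observations you flag ($F\subseteq P$, $Q\setminus P\ne\varnothing$, termination) are sound and only make explicit what the paper leaves implicit.
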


\begin{proof}
We keep the notation of Algorithm~\ref{alg:checkforAk}. Let $F_i$ and $P_i$, $i=0,1,2,\dotsc$, denote the sets $F$ and $P$, respectively, after the $i$-th iteration of the algorithm when it is executed on the DFA $\mA=(Q,\Sigma)$. We prove by induction on $i$ that whenever a state $q\in F_i$ holds a token, then for every move $w\in\Sigma^{<k}$ played by Bob from $q$, Alice has a strategy to move the token to the sink~$\mathbf{s}$ from the state $q{\cdot}w$.

The claim trivially holds for $i=0$. Assume now that the claim holds for all states in $F_i$ and prove that it holds for each state $q\in F_{i+1}$. Clearly, we may assume that $q\in F_{i+1}\setminus F_i$. By the construction of the algorithm, we have $q{\cdot}\Sigma^{<k}\subseteq P_{i+1}$. Hence, if Bob plays a word $w\in\Sigma^{<k}$ from $q$, the state $q{\cdot}w$ belongs to $P_{i+1}$. By definition $P_{i+1}=F_i{\cdot}\Sigma^{-1}$, so there exists a letter $a\in\Sigma$ such that $(q{\cdot}w){\cdot}a\in F_i$. Alice plays this letter $a$, thereby moving the token into a state of $F_i$. By the inductive assumption, Alice can then force the token to reach the sink~$\mathbf{s}$, regardless of Bob's subsequent moves.

It follows that if the algorithm terminates with $F=Q$, then Alice has a winning strategy from every state. Combining these strategies, she can remove the tokens one by one from all states except the sink. Hence $\mA$ is an $A_k$-automaton.

Conversely, suppose that the algorithm stops at some $i$-th iteration without reaching the equality $F=Q$. Then $F_i\subsetneqq Q$. If $P_{i+1}=Q$, the condition $q{\cdot}w\in P_{i+1}$ holds for all states $q$ and words $w$, and the $(i+1)$-st iteration of the algorithm would proceed, contradicting termination. Hence, $P_{i+1}\subsetneqq Q$. Bob can maintain a token outside $P_{i+1}$ indefinitely by the following strategy. Place a token on a state $q\notin P_{i+1}$ and let Alice play an arbitrary letter $a\in\Sigma$. Then $q{\cdot}a\notin F_i$ by the definition of $P_{i+1}$. If $q{\cdot}a\notin P_{i+1}$, Bob skips his turn. If $q{\cdot}a\in P_{i+1}\setminus F_i$, then by the termination condition (line 4), there exists a word $w\in\Sigma^{<k}$ such that $(q{\cdot}a){\cdot}w\notin P_{i+1}$. Bob plays this word, thereby returning the token outside $P_{i+1}$. Hence a token is always present on some state different from~$\mathbf{s}$. Since Bob has a winning strategy, $\mA$ is not an $A_k$-automaton.
\end{proof}

For the complexity analysis of Algorithm~\ref{alg:checkforAk}, let $n:=|Q|$ and $m:=|\Sigma|$. The most time-consuming step of the algorithm is the test in line 4 of whether $q{\cdot}\Sigma^{<k}\subseteq P$ for all $q\in P\setminus F$. This test can be performed by using a breadth-first search (BFS) in the digraph of $\mA$, starting at $q$. In any graph $G=(V,E)$, BFS finds shortest-path distances from a given vertex in time $O(|V|+|E|)$; see \cite[Section 20.2]{Cormen&Leiserson&Rivest&Stein:2022}. Since the digraph of $\mA$ has $n$ vertices and $nm$ edges, a single BFS takes $O(nm)$ time. Once the BFS is completed, checking whether all states at distance less than $k$ from $q$ belong to $P$ can be done in~$O(n)$ time, assuming that the set $P$ is represented as a Boolean array indexed by states, via direct addressing; see \cite[Section 11.1]{Cormen&Leiserson&Rivest&Stein:2022}.

From this, one might conclude that the total time spent on the tests needed for firm marking on the entire execution of the algorithm is $O(n^2m)$ if  each non-sink state $q$ is tested against the condition $q{\cdot}\Sigma^{<k}\subseteq P$ only once. However, it is not immediately clear how to implement Algorithm~\ref{alg:checkforAk} as to guarantee this property. Observe that a naive implementation, as illustrated by the above example of running the algorithm on the DFA $\mL_2^3$, does not satisfy this requirement: for instance, state~3 is tested three times, failing the test on Steps~2 and 4 and passing it on Step 6.

To bypass this bottleneck, we introduce an auxiliary data structure that, for each non-sink state $q$, stores the set $R[q]:=q{\cdot}\Sigma^{<k}\setminus P$ consisting of all currently unmarked states reachable from $q$ by words of length less than $k$. Then the condition  $q{\cdot}\Sigma^{<k}\subseteq P$ is equivalent to the emptiness of $R[q]$.

The initial values of the sets $R[q]$ can be precomputed by BFS. Each iteration of the main loop then updates these sets by removing, from those that are still nonempty, the states the algorithm adds to $P$ during this iteration. The updates are made `on the fly', that is, as soon as a state is added to $P$, it is removed from each the sets $R[q]$ in which it occurs. A state $q\in P\setminus F$ is added to $F$ once the set $R[q]$ becomes empty.

The following presents the proposed modification of Algorithm~\ref{alg:checkforAk} in pseudocode.
\begin{algorithm}
  \setlength{\commentspace}{6cm}
  \begin{algorithmic}[1]
    \FUNC{Input: {\normalfont DFA $\mA=(Q,\Sigma)$ with a sink $\mathbf{s}$}}
    \STATE\algcomment{0}{Initialization}$F,P\gets\{\mathbf{s}\}$
    \FOR{each $q\in Q\setminus\{\mathbf{s}\}$}
        \STATE\algcomment{1}{Precomputation} $R[q]\gets q{\cdot}\Sigma^{<k}\setminus\{\mathbf{s}\}$
      \ENDFOR
    \WHILE{$F\subsetneqq Q$}
      \STATE\algcomment{1}{Preliminary marking}$P\gets F{\cdot}\Sigma^{-1}$
      \FOR{each $q\in P\setminus F$}
        \STATE $R[q]\gets R[q]\setminus P$
      \ENDFOR
      \IF{$(\forall q\in P\setminus F)\ R[q]\neq\varnothing$}
        \RETURN $\mA$ is not an $A_k$-automaton
      \ELSE
        \STATE\algcomment{2}{Firm marking}
        $F\gets F\cup\{q\in P\setminus F: R[q]=\varnothing\}$
      \ENDIF
    \ENDWHILE
    \RETURN $\mA$ is an $A_k$-automaton
  \end{algorithmic}
\caption{A modification of Algorithm~\ref{alg:checkforAk} with $O(|Q|^2\cdot|\Sigma|)$ time implementation}\label{alg:quadratic}
\end{algorithm}

For illustration, the following table shows the evolution of the sets $R[q]$, $P$, and $F$ during the execution of Algorithm~\ref{alg:quadratic} on the DFA $\mL_2^3$. The reader may refer again to Figures~\ref{fig:algorithm} and~\ref{fig:algorithm-end} for a visual support of this table.
\begin{center}
\begin{tabular}{c|c@{}c@{}c@{}c@{}c@{}c@{}c}
& Step 0 & Step 1 & Step 2 & Step 3 & Step 4 & Step 5 & Step 6\\
\hline
$R[1]$ & \{1,2,3\} && $\varnothing$ && &&\\
$R[2]$ & \{2,3,4\} && \{4\} && $\varnothing$  &&\\
$R[3]$ & \{3,4,5\} && \{4,5\} && \{5\} && $\varnothing$ \\
$R[4]$ & \{1,2,4,5\} && \{4,5\} && \{5\} && $\varnothing$ \\
$R[5]$ & \{2,3,5\} && \{4,5\} && \{5\} && $\varnothing$ \\
$P$ & \{0\} & 1,2,3 join $P$ && 4 joins $P$ && 5 joins $P$ &\\
$F$ & \{0\} && 1 joins $F$ && 2 joins $F$ && 3,4,5 join $F$
\end{tabular}
\end{center}

\begin{proposition}\label{prop:Ak-complexity}
Algorithm~\ref{alg:quadratic} runs on a DFA $Q,\Sigma)$ in time $O(|Q|^2\cdot|\Sigma|)$.
\end{proposition}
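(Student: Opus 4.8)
The plan is to write $n:=|Q|$ and $m:=|\Sigma|$ and to split the running time into four contributions: the precomputation of the sets $R[q]$, the number of iterations of the main loop, the per-iteration cost of recomputing $P=F\cdot\Sigma^{-1}$, and the \emph{cumulative} cost of maintaining the sets $R[q]$ throughout the whole execution. Since $m\ge 1$, any purely combinatorial $O(n^2)$ term is absorbed into the target bound $O(n^2m)$, so it suffices to show that each contribution is $O(n^2m)$.

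First I would treat the precomputation. For a fixed non-sink state $q$, a single breadth-first search in the digraph of $\mA$ computes the shortest-path distance from $q$ to every state; because $\mA$ is a complete DFA, a state $p$ lies in $q\cdot\Sigma^{<k}$ if and only if this distance is less than $k$, so one BFS yields $q\cdot\Sigma^{<k}$, and discarding $\mathbf{s}$ gives $R[q]$. As the digraph has $n$ vertices and $nm$ edges, each BFS costs $O(nm)$, and the whole precomputation loop costs $O(n\cdot nm)=O(n^2m)$. Building, alongside these searches, the reverse index described below costs $O(\sum_q|R[q]|)=O(n^2)$.

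Next I would bound the main loop. Its guard is $F\subsetneqq Q$, and every iteration that does not return enlarges $F$, since the firm-marking branch is entered precisely when some $q\in P\setminus F$ has $R[q]=\varnothing$ and then adds at least this $q$ to $F$. Hence there are at most $n$ iterations. Within one iteration, recomputing $P\gets F\cdot\Sigma^{-1}$ reduces to scanning the $nm$ edges and collecting the tails of those whose head lies in $F$ (marking the newly added states of $P$ in a Boolean array), which is $O(nm)$; summed over all iterations this is $O(n^2m)$. The emptiness test and the firm marking reduce, by keeping a counter $c[q]:=|R[q]|$ for each state, to $O(1)$ work per state and thus $O(n)$ in total.

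The crux, which I expect to be the main obstacle, is bounding the total cost of the updates $R[q]\gets R[q]\setminus P$; a literal re-examination of each set at every iteration (as in Algorithm~\ref{alg:checkforAk}) would be too slow, and avoiding it is exactly the purpose of Algorithm~\ref{alg:quadratic}. The key observation is that $P=F\cdot\Sigma^{-1}$ grows monotonically together with $F$, so no element ever leaves $P$ and hence no element is ever reinserted into any $R[q]$ once removed. I would therefore store each $R[q]$ as a doubly linked list permitting $O(1)$ deletion by pointer, and maintain a \emph{reverse index} recording, for each state $p$, the occurrences of $p$ among the sets $R[q]$. When a state $p$ newly enters $P$, the algorithm traverses the reverse index of $p$, deletes each occurrence in $O(1)$, decrements the affected counters, and moves $q$ into $F$ as soon as $c[q]$ drops to $0$; this maintains the invariant $R[q]=(q\cdot\Sigma^{<k}\setminus\{\mathbf{s}\})\setminus P$ for every $q$, which is exactly what the tests require. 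Because each deletion removes a distinct pair $(p,R[q])$ that is never recreated, the number of deletions over the entire run is at most $\sum_q|R[q]|\le n(n-1)<n^2$, each costing $O(1)$, so this maintenance contributes only $O(n^2)$. Summing the four contributions and using $m\ge 1$ yields a total running time of $O(n^2m)=O(|Q|^2\cdot|\Sigma|)$, as claimed.
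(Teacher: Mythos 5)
Your proposal is correct and follows essentially the same decomposition as the paper's proof: the $O(n^2m)$ cost is dominated by the $n$ BFS precomputations of the sets $R[q]$, while the maintenance of these sets is amortized to $O(n^2)$ via the observation that $P$ only grows, so each pair $(p,q)$ with $p\in R[q]$ is processed at most once. Your implementation details (reverse index, linked lists, counters) are a slightly more explicit realization of the paper's direct-addressing argument, and your cruder $O(n^2m)$ bound for recomputing $P$ still lands within the target, so there is no substantive difference.
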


\begin{proof}
Let $n := |Q|$ and $m := |\Sigma|$.

For each state $q \in Q$, the algorithm initializes the set $R[q]$ by performing a BFS, starting at $q$. As discussed above, the BFS takes $O(nm)$ time. As this precomputation is carried out for all $q \in Q$, initializing all sets $R[q]$ requires $O(n^2m)$ time in total.

During the execution of the main loop, the algorithm updates the sets $R[q]$ only by removing states that are added to the set $P$. Each state $p \in Q$ is added to $P$ at most once, and whenever this happens, $p$ is immediately removed from every set $R[q]$ in which it occurs. Hence, each pair $(p,q)$ such that $p\in R[q]$ is processed at most once. Assuming that the sets $P$ and $R[q]$ are represented as Boolean arrays indexed by states, each addition or removal takes $O(1)$ time via direct addressing. Therefore, the cumulative cost of all updates to the sets $R[q]$ does not exceed the number of pairs $(p,q)$, and is thus $O(n^2)$.

In each iteration of the main loop, the algorithm checks whether $R[q]=\varnothing$ for all $q \in P\setminus F$. These checks take time linear in $|P \setminus F|$, and since each state enters $P\setminus F$ at most once, the total time spent on these tests is $O(n)$.

All remaining operations of the algorithm, including the preliminary marking step computing $F{\cdot}\Sigma^{-1}$, take $O(nm)$ time in total, as each transition is examined at most once.

Combining the above bounds, we conclude that the overall running time is dominated by the time spent on the precomputation of the sets $R[q]$, and is therefore $O(n^2m)$.
\end{proof}

The next result can be deduced from Lemma~\ref{lem:2subsets} (with $k\in\mathbb{N}$) in exactly the same way as Theorem~\ref{thm:omega-algorithm}, except that Propositions~\ref{prop:correctness} and~\ref{alg:quadratic} are used in place of Proposition~\ref{prop:omega-algorithm-sink}.

\begin{theorem}
\label{thm:k-algorithm}
There exists an algorithm that, given a DFA $\mathrsfs{A}=(Q,\Sigma)$, decides the winner in the $k$-game on $\mathrsfs{A}$ in $O(|Q|^4\cdot|\Sigma|)$ time.
\end{theorem}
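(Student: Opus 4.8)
The plan is to reduce the general $k$-game to the sink case in exactly the same manner as in the proof of Theorem~\ref{thm:omega-algorithm}, and then to invoke the sink-case subroutine of Algorithm~\ref{alg:quadratic}. First, given an input DFA $\mA=(Q,\Sigma)$ with $n:=|Q|$ and $m:=|\Sigma|$, I would construct its 2-subset automaton $\mA^{[2]}=(Q^{[2]},\Sigma)$. By Lemma~\ref{lem:2subsets}, applied with the given $k\in\mathbb{N}$, the automaton $\mA$ is an $A_k$-automaton if and only if $\mA^{[2]}$ is; hence it suffices to decide the latter. Since $\mathbf{s}$ is a sink of $\mA^{[2]}$ by construction, Algorithm~\ref{alg:quadratic} is applicable, and its verdict on $\mA^{[2]}$ settles the question for $\mA$. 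Correctness of the whole procedure then follows by combining Lemma~\ref{lem:2subsets} with Proposition~\ref{prop:correctness}, which guarantees that Algorithm~\ref{alg:checkforAk}---and hence its behaviourally equivalent reformulation Algorithm~\ref{alg:quadratic}---correctly decides the winner in the $k$-game on any DFA with a sink.

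The second step is the running-time bookkeeping. The automaton $\mA^{[2]}$ has $N:=\binom{n}{2}+1=O(n^2)$ states and the same alphabet $\Sigma$ of size $m$. Constructing $\mA^{[2]}$ from $\mA$ costs $O(Nm)=O(n^2 m)$ time, since each of the $N$ states has $m$ outgoing transitions, each computable in constant time from the action in $\mA$. Running Algorithm~\ref{alg:quadratic} on $\mA^{[2]}$ costs $O(N^2 m)$ by Proposition~\ref{prop:Ak-complexity}. Substituting $N=O(n^2)$ yields $O((n^2)^2 m)=O(n^4 m)=O(|Q|^4\cdot|\Sigma|)$, and the $O(n^2 m)$ preprocessing overhead is dominated by this main term. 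Hence the total running time is $O(|Q|^4\cdot|\Sigma|)$, as claimed.

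I do not anticipate any genuine difficulty, as the argument is a direct transcription of the proof of Theorem~\ref{thm:omega-algorithm} with the sink-case subroutine of Proposition~\ref{prop:omega-algorithm-sink} replaced by Algorithm~\ref{alg:quadratic}. The only point that warrants care is the complexity arithmetic: passing to $\mA^{[2]}$ squares the state count, so the quadratic-in-$N$ bound of Proposition~\ref{prop:Ak-complexity} becomes quartic in $|Q|$, and one should verify explicitly that the $O(n^2 m)$ cost of building $\mA^{[2]}$ is not the dominant contribution (it is not). Everything else is routine assembly of results already established earlier in the paper.
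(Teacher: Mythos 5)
Your proposal is correct and follows exactly the route the paper takes: reduce to the sink case via the 2-subset automaton and Lemma~\ref{lem:2subsets}, then invoke Algorithm~\ref{alg:quadratic} with correctness from Proposition~\ref{prop:correctness} and the $O(N^2 m)$ bound from Proposition~\ref{prop:Ak-complexity}, which becomes $O(|Q|^4\cdot|\Sigma|)$ after substituting $N=O(|Q|^2)$. The complexity bookkeeping, including the observation that the $O(|Q|^2\cdot|\Sigma|)$ construction of $\mA^{[2]}$ is dominated, matches the paper's (very terse) argument precisely.
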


The time bound stated in Theorem~\ref{thm:k-algorithm} is likely non-optimal.

\section{Synchronization games on fixed size automata}
\label{sec:nstates}

\subsection{Relations between $\mathbf{A}_k$ with $k\in\mathbb{N}$ and $\mathbf{A}_\omega$ for fixed size automata}

Given a DFA $\mathrsfs{A}=(Q,\Sigma)$ and $k\ge 0$, let $T_k(\mA)$ denote the set of all transformations in the transition monoid $T(\mA)$ induced by the words in $\Sigma^{<k}$. We then have the ascending chain of subsets:
\begin{equation}\label{eq:chaininT(A)}
T_0(\mA)\subseteq T_1(\mA)\subseteq\dots\subseteq T_k(\mA)\subseteq T_{k+1}(\mA)\subseteq\dotsc,
\end{equation}
whose union is the entire monoid $T(\mA)$. Since $T(\mA)$ is finite, not all inclusions in~\eqref{eq:chaininT(A)} can be strict.  Once the first equality $T_k(\mA)=T_{k+1}(\mA)$ occurs in~\eqref{eq:chaininT(A)}, the chain stabilizes permanently. (Indeed, the equality means that composing transformations from $T_k(\mA)$ with those induced by letters from $\Sigma$ creates no new transformations, so $T_{k+1}(\mA)=T_{k+2}(\mA)$, and so on.) Hence, $T(\mA)=T_k(\mA)$ for some $k\in\mathbb{N}$.

The equality $T(\mA)=T_k(\mA)$ implies that for every word $w\in\Sigma^*$, there exists a word $w'\in\Sigma^{<k}$ such that $w$ and $w'$ induce the same transformation of $Q$. Hence, if Alice has a winning strategy $\mathfrak{S}$ in the $k$-game on $\mA$, she can use $\mathfrak{S}$ to win the $\omega$-game on $\mA$ as follows: whenever Bob plays a word $w\in\Sigma^*$, she responds with a letter prescribed by $\mathfrak{S}$ in response to Bob playing $w'$. We thus conclude that if $\mA$ is an $A_k$-automaton for some sufficiently large $k\in\mathbb{N}$, then $\mA$ is also an $A_\omega$-automaton.

Here, we determine the least number $k(n)\in\mathbb{N}$ such that every $A_{k(n)}$-automaton with $n$ states is an $A_\omega$-automaton. As in the previous section, it is convenient to begin with automata with a sink and extend the analysis to the general case via Lemma~\ref{lem:2subsets}.

\begin{proposition}\label{prop:PD2}
Every $n$-state $A_{n-1}$-automaton with a sink is an $A_\omega$-automaton.
\end{proposition}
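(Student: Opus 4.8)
The plan is to prove the contrapositive: I will show that if an $n$-state DFA $\mA=(Q,\Sigma)$ with a sink $\mathbf{s}$ is \emph{not} an $A_\omega$-automaton, then Bob wins the $(n-1)$-game on $\mA$, so that $\mA\notin\mathbf{A}_{n-1}$. Since being an $A_\omega$-automaton with a sink is characterized by Proposition~\ref{prop:criterion}, the failure of that property hands me a state $q_0\ne\mathbf{s}$ with the property that $q_0{\cdot}a$ lies in the same \scc\ as $q_0$ for \emph{every} letter $a\in\Sigma$. I let $K$ denote this \scc, and the whole argument will revolve around exhibiting, for each $a$, a short word returning the token from $q_0{\cdot}a$ to $q_0$.

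The crux is a length bound on Bob's returning moves. First I note that, since $\mathbf{s}$ is a sink, it cannot reach any other state, so $\{\mathbf{s}\}$ is a \scc\ by itself; as $q_0\ne\mathbf{s}$ this gives $\mathbf{s}\notin K$ and hence $|K|\le n-1$. Next I use that \scc{}s form a directed acyclic condensation, so any directed path in $\mA$ joining two states of $K$ stays entirely inside $K$. Consequently, for each letter $a$ the state $q_0{\cdot}a\in K$ is joined to $q_0$ by a \emph{simple} directed path confined to $K$, which has at most $|K|-1\le n-2$ edges. Reading the labels along such a path produces a word $w_a\in\Sigma^{<n-1}$ with $q_0{\cdot}aw_a=q_0$ (and when $q_0{\cdot}a=q_0$ I simply take $w_a=\varepsilon$).

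With these returning words in hand, Bob's strategy is to maintain tokens on the two distinct states $\mathbf{s}$ and $q_0$: whenever Alice plays a letter $a$, the token on $q_0$ slides to $q_0{\cdot}a\in K$, and Bob replies with the legal move $w_a$, bringing it back to $q_0$, while the token on $\mathbf{s}$ never moves. As in the proof of Lemma~\ref{lem:localization}, Bob plays this strategy on the two chosen tokens as if no others were present, removing any third token that collides with them. After each of Bob's turns the states $\mathbf{s}$ and $q_0$ both carry tokens, so the history is never a reset word and Bob wins the $(n-1)$-game, contradicting $\mA\in\mathbf{A}_{n-1}$.

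I expect the delicate point to be precisely the length estimate, and it is what pins down the threshold: the returning path is trapped inside the sink-free \scc\ $K$ of size at most $n-1$, so it has at most $n-2$ edges, which is strictly less than $n-1$. This is exactly tight—any smaller bound on $k$ would no longer guarantee Bob a returning move within the allotted word length—so $n-1$ is the correct value for automata with a sink.
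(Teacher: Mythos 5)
Your proof is correct, and it reaches the same quantitative conclusion as the paper---Bob can always respond with a word of length at most $n-2$---but by a genuinely different route. The paper argues directly from an abstract winning strategy of Bob in the $\omega$-game: it extracts a state $r$ whose token Bob can keep away from the sink, and then compresses each of Bob's (possibly long) responses $v$ to a minimal word $u$ with the same effect on that token; the bound $|u|\le n-2$ falls out because the states visited by the token under $u$ are pairwise distinct and avoid $\mathbf{s}$. You instead invoke the characterization of Proposition~\ref{prop:criterion} to pin down a concrete state $q_0$ trapped in its own \scc{} $K$, and obtain the returning words $w_a$ as labels of shortest paths inside $K$, of length at most $|K|-1\le n-2$ since $\mathbf{s}\notin K$. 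All the steps you rely on check out: $\{\mathbf{s}\}$ is a \scc{} by itself, a directed path between two states of $K$ cannot leave $K$ (otherwise the intermediate \scc{} would merge with $K$), and a shortest returning path is simple. Your version buys an explicit, positional strategy for Bob---one fixed returning word per letter---and makes the role of the \scc{} structure transparent; the paper's version is self-contained (it does not use Proposition~\ref{prop:criterion}) and shows the slightly more general fact that any $\omega$-game strategy of Bob on a sink automaton can be simulated within the $(n-1)$-game. One caveat: your closing claim that the threshold $n-1$ is ``exactly tight'' does not follow from your length estimate being unimprovable; the actual optimality is a separate matter, witnessed in the paper by the automata $\mL_m^{k}$ in the remark following the proposition.
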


\begin{proof}
Arguing by contraposition, assume that $\mA = (Q,\Sigma)$ is a DFA with $n$ states, including a sink~$\mathbf{s}$, on which Bob has a winning strategy in the $\omega$-game. Our goal is to show that Bob can also win the $(n-1)$-game on $\mA$.

If, for each $q\in Q$, Alice had a strategy to move the token placed on $q$ to $\mathbf{s}$, then using the fact that a token on $\mathbf{s}$ never moves, she could win by removing tokens one by one from the other states. Hence, there exists a state $r\in Q$ such that, in the $\omega$-game, Bob can prevent the token initially placed on $r$ from reaching $\mathbf{s}$. Every word $w\in\Sigma^*$ can be viewed as the history of the $\omega$-game on $\mA$ up to some turn. Therefore, for every $w\in\Sigma^*$, there is a word $v\in\Sigma^*$ that Bob can play so that $(r{\cdot}w){\cdot}v\ne\mathbf{s}$. Denote $r{\cdot}w$ by $p$ and let $u=u_1\cdots u_\ell$ be a word of minimum length with $p{\cdot}u=p{\cdot}v$. Consider the sequence
\[
p,\ p{\cdot}u_1,\ p{\cdot}u_1u_2,\ \dots,\ p{\cdot}u_1\cdots u_\ell=p{\cdot}u.
\]	
This sequence omits the sink $\mathbf{s}$ since $p{\cdot}u\ne\mathbf{s}$. Moreover, it contains no repeated states. Indeed,  because if $p{\cdot}u_1\cdots u_i=p{\cdot}u_1\cdots u_j$ for some $0\le i<j\le\ell$, then the shorter word $u':=u_1\cdots u_iu_{j+1}\cdots u_\ell$ would satisfy $p{\cdot}u'=p{\cdot}v$, contradicting the minimality of~$u$. Hence, the sequence consists of $\ell+1$ distinct non-sink states. Since $\mA$ has exactly $n-1$ non-sink
states, we obtain $\ell+1\le n-1$, and therefore $|u|=\ell< n-1$.

We now describe Bob's winning strategy in the $(n-1)$-game on $\mA$. Bob follows the token $T$ initially placed on the state $r$. Whenever, after Alice's move, $T$ reaches a state $r{\cdot}w$, Bob consults his winning strategy in the $\omega$-game to choose a word $v$. He then plays a word $u$ of length less than $n-1$ such that $(r{\cdot}w){\cdot}u=(r{\cdot}w){\cdot}v$. By construction, this guarantees that the token $T$ never reaches the sink~$\mathbf{s}$, and thus at least two tokens ($T$ and the one on $\mathbf{s}$) are never removed.
\end{proof}

\begin{remark}
Let $\mathbf{D}_n^0$ denote the class of all $n$-state DFAs with a sink. Proposition~\ref{prop:PD2} implies that the hierarchy~\eqref{eq:hierarchy}, when restricted to $\mathbf{D}_n^0$, collapses at level $n-1$:
\[
\mathbf{A}_1\cap\mathbf{D}_n^0\supseteq\mathbf{A}_2\cap\mathbf{D}_n^0\supseteq\dots\supseteq\mathbf{A}_{n-1}\cap\mathbf{D}_n^0=\dots=\mathbf{A}_\omega\cap\mathbf{D}_n^0.
\]
This result is optimal in the sense that, for each $1\le k\le n-2$, the inclusion
\[
\mathbf{A}_k\cap\mathbf{D}_n^0\supseteq\mathbf{A}_{k+1}\cap\mathbf{D}_n^0
\]
is strict. This is witnessed by the DFA $\mL_m^{k}=(\mathbb{Z}_n,\,\{a_1,\dots,a_{n-1}\})$ introduced in Subsection~\ref{subsec:kspeed}. This DFA has a sink and is an $A_k$-automaton by Proposition~\ref{prop:PRTk1}. On the other hand, Bob has a simple winning strategy in the $(k+1)$-game on $\mL_m^{k}$: assuming that Alice moves first, he responds with the word $x^{k}$ whenever Alice plays a letter~$x$. Since $(n-1){\cdot}x^{k+1}=n-1$ for each $x\in\{a_1,\dots,a_{n-1}\}$, this  guarantees that the token initially placed on state $n-1$ remains on that state after each move of Bob.
\end{remark}

We now consider the general case.

 \begin{theorem}
\label{thm:k(n)}
The least number $k(n)\in\mathbb{N}$ such that every $A_{k(n)}$-automaton with $n$ states is an $A_\omega$-automaton equals $\binom{n}{2}$.
\end{theorem}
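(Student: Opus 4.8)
The plan is to prove the two inequalities $k(n)\le\binom{n}{2}$ and $k(n)\ge\binom{n}{2}$ separately, reducing both to the already-settled sink case of Proposition~\ref{prop:PD2} through the $2$-subset construction and Lemma~\ref{lem:2subsets}.

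For the upper bound $k(n)\le\binom{n}{2}$, I would reason as follows. Let $\mA=(Q,\Sigma)$ be an $n$-state $A_{\binom{n}{2}}$-automaton. By Lemma~\ref{lem:2subsets} (with $k=\binom{n}{2}$), the $2$-subset automaton $\mA^{[2]}$ is also an $A_{\binom{n}{2}}$-automaton. Now $\mA^{[2]}$ has exactly $N:=\binom{n}{2}+1$ states and possesses a sink, and since $\binom{n}{2}=N-1$, Proposition~\ref{prop:PD2} applies and shows that $\mA^{[2]}$ is an $A_\omega$-automaton. Invoking Lemma~\ref{lem:2subsets} once more, this time with $k=\omega$, we conclude that $\mA$ itself is an $A_\omega$-automaton. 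Hence every $n$-state $A_{\binom{n}{2}}$-automaton lies in $\mathbf{A}_\omega$, which is precisely the assertion $k(n)\le\binom{n}{2}$.

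For the lower bound $k(n)\ge\binom{n}{2}$, it suffices to exhibit, for each $n$, an $n$-state DFA whose level in the hierarchy~\eqref{eq:hierarchy} is exactly $\binom{n}{2}-1$; that is, a DFA $\mA$ with $\mA\in\mathbf{A}_{\binom{n}{2}-1}\setminus\mathbf{A}_\omega$. By Lemma~\ref{lem:2subsets} this is equivalent to building an $n$-state $\mA$ whose $2$-subset automaton lies in $\mathbf{A}_{\binom{n}{2}-1}\setminus\mathbf{A}_\omega$, and the tight sink examples $\mL_1^{\,\binom{n}{2}-1}$ of Subsection~\ref{subsec:kspeed} indicate exactly what to aim for. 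The $\binom{n}{2}$ non-sink states of $\mA^{[2]}$ (the $2$-element subsets of $Q$) should form a single directed Hamiltonian cycle $K$ in the digraph of $\mA^{[2]}$, with one distinguished pair $P_0\in K$ from which no letter collapses the pair or leaves $K$ (so that, by Proposition~\ref{prop:criterion}, Bob wins the $\omega$-game and $\mA\notin\mathbf{A}_\omega$), while from every other pair some letter collapses it to the sink. The decisive point is purely quantitative: returning a token from the successor of $P_0$ back to $P_0$ along $K$ costs exactly $\binom{n}{2}-1$ steps. Thus in the $\binom{n}{2}$-game Bob has just enough budget (words of length $<\binom{n}{2}$, i.e.\ up to $\binom{n}{2}-1$) to restore $P_0$ after each of Alice's moves and therefore wins, whereas in the $(\binom{n}{2}-1)$-game his responses are one letter too short, so the token is always left strictly before $P_0$. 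By Lemma~\ref{lem:localization} one then verifies that, from every pair, Alice can keep pushing the token forward until it reaches a state from which she collapses it, giving $\mA\in\mathbf{A}_{\binom{n}{2}-1}$.

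The main obstacle is the explicit construction of such a witness together with the verification of the two strategies. The naive idea of realizing $K$ by a single rotation letter fails: any letter, indeed any word inducing a permutation of $Q$, has order at most Landau's function $g(n)=e^{(1+o(1))\sqrt{n\ln n}}$, which already for $n\ge4$ lies far below $\binom{n}{2}$, so no single permutation can act as a $\binom{n}{2}$-cycle on pairs (an $n$-cycle, for instance, merely splits the pairs into $\lfloor n/2\rfloor$ gap-classes). One must therefore weave the Hamiltonian cycle on pairs out of several letters, using a cyclic-shift letter to traverse each gap-class, auxiliary letters to pass between consecutive gap-classes, and collapsing letters at the appropriate pairs, while arranging that $P_0$ is the unique pair with no escaping letter and that no letter provides a chord shortcutting the return path along $K$. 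The delicate verification is that Alice wins the $(\binom{n}{2}-1)$-game from \emph{every} starting pair: since Bob's budget is exactly one short of the cycle length, each round advances the token strictly along $K$, so Bob cannot indefinitely avoid the pairs carrying a collapsing letter, and Alice eventually synchronizes. Confirming these two facts for an explicit $n$-state family is the technical heart of the argument; the matching bounds then yield $k(n)=\binom{n}{2}$.
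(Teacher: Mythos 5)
Your upper bound is exactly the paper's: pass to the $2$-subset automaton $\mA^{[2]}$, which has $\binom{n}{2}+1$ states and a sink, apply Proposition~\ref{prop:PD2}, and transfer back and forth via Lemma~\ref{lem:2subsets}. That half is complete and correct.

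The lower bound, however, is a blueprint rather than a proof. Everything you defer as ``the technical heart of the argument''---an explicit $n$-state family whose pair automaton contains a chord-free return path of length exactly $\binom{n}{2}-1$ into a unique pair that no letter collapses, together with the verification of both players' strategies---is precisely what the inequality $k(n)\ge\binom{n}{2}$ requires, and it is left undone. You correctly diagnose the main obstruction (no single permutation letter can realize a $\binom{n}{2}$-cycle on pairs) and the right quantitative mechanism (Bob's budget of $\binom{n}{2}-1$ letters against a return path of that exact length), but no witness is produced. For comparison, the paper's witness is $\mD_n$: the \v{C}ern\'y automaton $\mC_n$ augmented with two almost-constant letters $c$ and $d$ whose kernels jointly collapse every pair except $\{0,\lceil\frac{n}{2}\rceil\}$ and which send every non-collapsed pair back to $\{0,1\}$. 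The combinatorial core is Lemma~\ref{lem:HamiltonForward}: the unique shortest word over $\{a,b\}$ mapping $\{0,1\}$ to $\{0,\lceil\frac{n}{2}\rceil\}$ has length exactly $\binom{n}{2}-1$ and traces a Hamiltonian path in the union of orbits of $\mC_n^{[2]}$. With that in hand, Alice's win in the $(\binom{n}{2}-1)$-game is a two-move affair (play $c$ to compress $Q$ to $\{0,1\}$; any response of Bob of length less than $\binom{n}{2}-1$ misses $\{0,\lceil\frac{n}{2}\rceil\}$, so the resulting pair omits $0$ or $\lceil\frac{n}{2}\rceil$ and is killed by $c$ or $d$), which is considerably simpler than the multi-round ``keep pushing the token forward'' argument you sketch; Bob's win in the $\binom{n}{2}$-game is to replay the appropriate suffix of that Hamiltonian word. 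Your abstract requirements (one uncollapsible pair, no shortcut back to it) are the right ones, but realizing them inside a digraph that actually arises as a $2$-subset automaton is the whole difficulty, and your proposal stops exactly there.
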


\begin{proof}
Let $\mA$ be an $A_{\binom{n}{2}}$-automaton with $n$ states. Its 2-subset automaton $\mA^{[2]}$ possesses a sink and has $\binom{n}{2}+1$ states. Besides that, by Lemma~\ref{lem:2subsets}, $\mA^{[2]}$ is an $A_{\binom{n}2}$-automaton. By Proposition \ref{prop:PD2} we conclude that $\mA^{[2]}$ is an $A_\omega$-automaton, and so is $\mA$ by Lemma~\ref{lem:2subsets}. We thus have established that $k(n)\le\binom{n}{2}$.

To prove the opposite inequality we need a series of $n$-state automata which belong to $\mathbf{A}_K\setminus\mathbf{A}_\omega$, where $K:=\binom{n}{2}-1$. We obtain such automata by adding two extra letters to the \v{C}ern\'{y} automata $\mC_n=(\mathbb{Z}_n,\{a,b\})$. Recall that the letters $a$ and $b$ act as follows:
\[
0{\cdot} a:=1,\ \ m{\cdot} a:=m\ \text{ for \ $0<m<n$,}\qquad m{\cdot} b:=m+1\kern-8pt\pmod{n}.
\]
We set $\mD_n:=(\mathbb{Z}_n,\{a,b,c,d\})$, where the newly added letters act as follows:
\[
m{\cdot}c:= \begin{cases}
			0 &\text{if } m=0,\\
			1 &\text{otherwise};
\end{cases}
\qquad
m{\cdot}d:= \begin{cases}
			1 &\text{if } m=\lceil\frac{n}2\rceil,\\
			0 &\text{otherwise}.
	\end{cases}
\]
For illustration, Figure~\ref{fig:D5} shows the automaton $\mD_5$.
\begin{figure}[hbt]
\begin{center}
\begin{tikzpicture}[scale=0.11,>=latex,auto,
  every node/.style={circle,minimum size=8mm,inner sep=0pt}]

\node[draw=blue] (n10) at (66,54) {0};
\node[draw=blue] (n11) at (44,38) {4};
\node[draw=blue] (n12) at (90,38) {1};
\node[draw=blue] (n13) at (54,18) {3};
\node[draw=blue] (n14) at (78,18) {2};

\draw[->,bend left=14] (n10) to node[sloped, above, yshift=-1ex] {$a,b$} (n12);
\draw[->,bend left=14] (n12) to node[sloped, below, yshift=1ex] {$b$} (n14);
\draw[->,bend left=14] (n14) to node[sloped, above, yshift=-1.5ex] {$c$} (n12);
\draw[->] (n14) -- node[sloped, below, yshift=1ex] {$b$} (n13);
\draw[->] (n13) -- node[sloped, below, yshift=1ex] {$b$} (n11);
\draw[->] (n11) -- node[sloped, above, yshift=-1ex] {$b,d$} (n10);
\draw[->,bend left=12] (n13) to node[sloped, above, yshift=-1ex, pos=0.4] {$c,d$} (n12);
\draw[->,bend left=14] (n12) to node[sloped, below, yshift=1ex] {$d$} (n10);
\draw[->] (n14) to node[sloped, below, yshift=1ex, pos=0.2] {$d$} (n10);
\draw[->] (n11) -- node[sloped, above, yshift=-1.5ex, pos=0.4] {$c$} (n12);

\draw[->] (n10) to [out=110,in=70,looseness=8] node[above,yshift=-1mm] {$c,d$} (n10);
\draw[->] (n11) to [out=170,in=130,looseness=8] node[above,yshift=-1mm] {$a$} (n11);
\draw[->] (n12) to [out=50,in=10,looseness=8] node[above,yshift=-1mm] {$a,c$} (n12);
\draw[->] (n13) to [out=-130,in=-170,looseness=8] node[below,yshift=1mm] {$a$} (n13);
\draw[->] (n14) to [out=-10,in=-50,looseness=8] node[below,yshift=1mm] {$a$} (n14);

\end{tikzpicture}
\caption{The DFA $\mD_5$} \label{fig:D5}
\end{center}
\end{figure}

We aim to show that the automaton $\mD_n$ belongs to $\mathbf{A}_K\setminus\mathbf{A}_{K+1}$. To this end, we need a property of the 2-subset automaton $\mC_n^{[2]}$ of the \v{C}ern\'{y} automaton $\mC_n$. This property is likely known but since we have been unable to locate any convenient reference, we include a complete proof, without claiming any originality.

First, we detail the structure of $\mC_n^{[2]}$. Under the action of $b$ on the non-sink states of $\mC_n^{[2]}$, these states, that is, the 2-element subsets $\{p,q\}\subseteq\mathbb{Z}_n$, are partitioned into \emph{orbits}, on each of which $b$ acts as a cyclic permutation. Each orbit consists of the sets $\{p,q\}$ with a fixed \emph{circular distance} defined as
\[
\min\{\ell\in\mathbb{N}: p{\cdot}b^\ell=q\ \text{ or } \ q{\cdot}b^\ell=p\}=\min\bigl\{|p-q|,\; n-|p-q|\bigr\}.
\]
Clearly, the circular distance can take any value from 1 to $\lfloor\frac{n}2\rfloor$, so there are $\lfloor\frac{n}2\rfloor$ orbits. If $n$ is even, all but one orbits have size $n$; the exceptional orbit (called \emph{antipodal}) has size $\frac{n}2$. If $n$ is odd, all orbits have size $n$. We denote by $\mathcal{O}_d$ the orbit consisting of the subsets with circular distance $d$.

Since the action of $a$ in $\mC_n$ fixes all states except 0, the action of $a$ in the 2-subset automaton $\mC_n^{[2]}$ fixes all subsets except those containing 0. For $d<\frac{n}2$, the orbit $\mathcal{O}_d$ includes two subsets containing 0: namely, $\{0,d\}$ and $\{n-d,0\}$; for $n$ even, the antipodal orbit $\mathcal{O}_{\frac{n}2}$ includes only $\{0,\frac{n}2\}$. We have
\[
\{0,d\}{\circ}a=\begin{cases}
\mathbf{s} &\text{if } d=1,
\\
\{0,d-1\}&\text{if }\ d=2,\dots,\lfloor\frac{n}2\rfloor;
\end{cases}\quad \text{and}\quad \{n-d,0\}{\circ}a=\{n-d,1\}.
\]
We see that $\{0,d\}{\circ}a\in\mathcal{O}_{d-1}$ for $d\ge 2$, while $\{n-d,0\}{\circ}a\in\mathcal{O}_{d+1}$ except for two cases: when $n$ is even and $d=\frac{n}2$ in which case $\{n-d,0\}{\circ}a=\{0,d\}{\circ}a$ goes down to $\mathcal{O}_{d-1}$ or when $n$ is odd and $d=\frac{n-1}2$ in which case $\{n-d,0\}{\circ}a$ stays in $\mathcal{O}_{d}$. Notice that each orbit $\mathcal{O}_{d}$ with $1<d<\lfloor\frac{n}2\rfloor$ is connected with both its `neighboring' orbits $\mathcal{O}_{d\pm1}$. Therefore, the union of all orbits forms a \scc{} of $\mC_n^{[2]}$.

For illustration, Figure~\ref{fig:2subsetsC6} shows the 2-subset automaton of the \v{C}ern\'{y} automaton $\mC_6$.
\begin{figure}[htb]
\begin{center}
\begin{tikzpicture}[ >=latex,shorten >=1pt,
  scale=1,
  every state/.style={circle, draw=blue, scale=1},
  every node/.style={inner sep=2pt,font=\small}]

\begin{scope}[shift={(-5,0)}]
  \node[state] (a01) at (90:1.6)  {$\{0,1\}$};
  \node[state] (a12) at (30:1.6)  {$\{1,2\}$};
  \node[state] (a23) at (-30:1.6) {$\{2,3\}$};
  \node[state] (a34) at (-90:1.6) {$\{3,4\}$};
  \node[state] (a45) at (-150:1.6){$\{4,5\}$};
  \node[state] (a50) at (150:1.6) {$\{5,0\}$};
  \node[state] (s)  [above left of=a01, xshift=-0.75cm] {$\mathbf{s}$};

  \draw[->,dashed,red] (a01) to (a12);
  \draw[->,dashed,red] (a12) to (a23);
  \draw[->,dashed,red] (a23) to (a34);
  \draw[->,dashed,red] (a34) to (a45);
  \draw[->,dashed,red] (a45) to (a50);
  \draw[->,dashed] (a50) to (a01);
  \draw[->] (a01) to (s);

\end{scope}

\begin{scope}[shift={(0,0)}]
  \node[state] (b02) at (90:1.6)  {$\{0,2\}$};
  \node[state] (b13) at (30:1.6)  {$\{1,3\}$};
  \node[state] (b24) at (-30:1.6) {$\{2,4\}$};
  \node[state] (b35) at (-90:1.6) {$\{3,5\}$};
  \node[state] (b40) at (-150:1.6){$\{4,0\}$};
  \node[state] (b51) at (150:1.6) {$\{5,1\}$};

  \draw[->,dashed,red] (b02) to (b13);
  \draw[->,dashed,red] (b13) to (b24);
  \draw[->,dashed,red] (b24) to (b35);
  \draw[->,dashed,red] (b35) to (b40);
  \draw[->,dashed] (b40) to (b51);
  \draw[->,dashed,red] (b51) to (b02);
  \draw[->,bend right=12] (b02) to (a12);
  \draw[->,bend right=25,red] (a50) to (b51);
\end{scope}

\begin{scope}[shift={(4,0)}]
  \node[state] (c03) at (90:1.3)  {$\{0,3\}$};
  \node[state] (c14) at (-30:1.3) {$\{1,4\}$};
  \node[state] (c25) at (210:1.3){$\{2,5\}$};

  \draw[->,dashed] (c03) to (c14);
  \draw[->,dashed,red] (c14) to (c25);
  \draw[->,dashed,red] (c25) to (c03);
  \draw[->,bend right=10] (c03) to (b13);
  \draw[->,bend left=23,red] (b40) to (c14);
\end{scope}

\end{tikzpicture}
\end{center}
\caption{The 2-subset automaton $\mC_6^{[2]}$. Solid and dashed edges show the action of $a$ and, resp., $b$; loops are omitted. Red-colored edges form a Hamiltonian path in the union of orbits of $\mC_6^{[2]}$.}
\label{fig:2subsetsC6}
\end{figure}
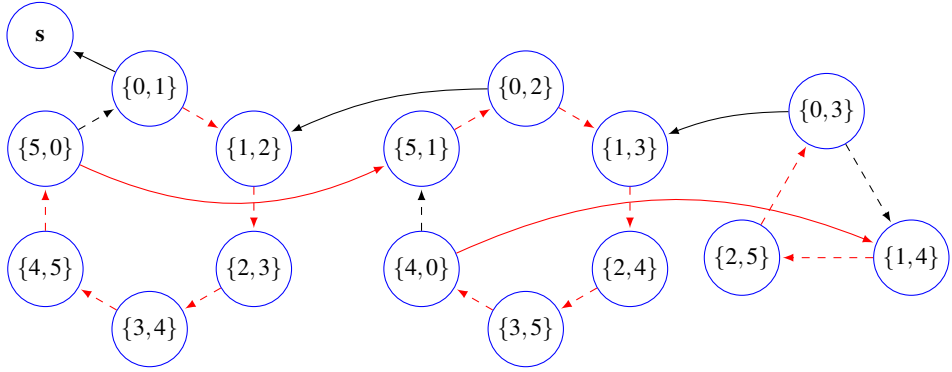

Recall that a \emph{Hamiltonian path} in a digraph is a directed path that visits each vertex of the digraph exactly once. The property of $\mC_n^{[2]}$ that we require is as follows:

\begin{lemma}\label{lem:HamiltonForward}
The union of orbits of $\mC_n^{[2]}$ has a Hamiltonian path starting at the set $\{0,1\}$, and the labels along this path form a unique word of minimum length that maps the set $\{0,1\}$ to the set $\{0,\lceil\frac{n}2\rceil\}$.
\end{lemma}

\begin{proof}
Let $u\in\{a,b\}^*$ be a word of minimum length mapping $\{0,1\}$ to $\{0,\lceil\frac{n}2\rceil\}$ (such a word exists due to the strong connectivity of the union of orbits).

The word $u$ maps a set from $\mathcal{O}_1$ to a set from $\mathcal{O}_{\lfloor\frac{n}2\rfloor}$. For each $d=1,\dots,\lfloor\frac{n}2\rfloor-1$, the only transition that moves a set in the orbit $\mathcal{O}_d$ to a set in an orbit with larger cyclic distance is the transition $\{n-d,0\}{\circ}a=\{n-d,1\}\in\mathcal{O}_{d+1}$. The shortest word mapping $\{0,1\}$ to $\{n-1,0\}$ is $b^{n-1}$, and the same word is also the shortest one mapping the `entry' set $\{n-d,1\}$ of the orbit $\mathcal{O}_d$, where $1<d<\lfloor\frac{n}2\rfloor$, to the `exit' set $\{n-d-1,0\}$ of this orbit.

It follows that the shortest word mapping $\{0,1\}$ to a set in $\mathcal{O}_{\lfloor\frac{n}2\rfloor}$ is $\left(b^{n-1}a\right)^{\lfloor\frac{n}2-1\rfloor}$ which is $\left(b^{n-1}a\right)^{\frac{n-1}2-1}$ if $n$ is odd and $\left(b^{n-1}a\right)^{\frac{n}2-1}$ if $n$ is even. The `entry' set of the orbit $\mathcal{O}_{\lfloor\frac{n}2\rfloor}$ is $(n-\lfloor\frac{n}2\rfloor+1,1)$ that can be written as $\{\lceil\frac{n}2\rceil+1,1\}$, since $n=\lfloor\frac{n}2\rfloor+\lceil\frac{n}2\rceil$. The shortest word reaching the set $\{0,\lceil\frac{n}2\rceil\}$ from $\{\lceil\frac{n}2\rceil+1,1\}$ is $b^{n-1}$ if $n$ is odd and $b^{\frac{n}2-1}$ if $n$ is even (in the latter case $\mathcal{O}_{\frac{n}2}$ is the antipodal orbit).

We conclude that the word $u$ must coincide with the word
\begin{equation}\label{eq:shortestwordinC[2]}
\overrightarrow{w}:=\begin{cases}
\left(b^{n-1}a\right)^{\frac{n-1}2-1}b^{n-1} &\text{if $n$ is odd},\\
\left(b^{n-1}a\right)^{\frac{n}2-1}b^{\frac{n}2-1} &\text{if $n$ is even}.
\end{cases}
\end{equation}

It is straightforward to verify that $|\overrightarrow{w}|=\binom{n}2-1=K$, independent of the parity of $n$.

For each $i\in\{0,1,\dots,K\}$, denote by $u_i$ the prefix of length $i$ of the word $\overrightarrow{w}$ and consider the following sequence of sets:
\begin{equation}\label{eq:setsequence}
\{0,1\}=\{0,1\}{\cdot}u_0,\  \{0,1\}{\cdot}u_1, \ \dots, \ \{0,1\}{\cdot}u_K=\{0,\left\lceil\frac{n}2\right\rceil\}.
\end{equation}
The minimality of $\overrightarrow{w}$ implies that \eqref{eq:setsequence} contains no repeated sets. The number of sets in \eqref{eq:setsequence} is $K+1=\binom{n}2$, which equals the number of 2-element subsets of $\mathbb{Z}_n$. Hence, every 2-element subset occurs in \eqref{eq:setsequence} exactly once, and the path labeled $\overrightarrow{w}$ is Hamiltonian.
\end{proof}

For illustration, Figure~\ref{fig:2subsetsC6} shows in red the Hamiltonian path in the union of orbits of $\mC_6^{[2]}$ constructed according to the recipe of Lemma~\ref{lem:HamiltonForward}.

\medskip

Now we are ready to prove that the automaton $\mD_n$ belongs to $\mathbf{A}_K\setminus\mathbf{A}_{K+1}$.

To win the $K$-game on $\mD_n$, Alice plays the letter $c$ on her first turn, thereby compressing the entire state set to $\{0,1\}$. If Bob responds with a word $v$ of length less than $K$, Lemma~\ref{lem:HamiltonForward} guarantees that $\{0,1\}{\cdot}v\ne\{0,\lceil\frac{n}2\rceil\}$. If $\{0,1\}{\cdot}v$ omits 0, Alice wins by playing~$c$; if $\{0,1\}{\cdot}v$ omits $\lceil\frac{n}2\rceil$, she wins by playing $d$.

Bob's winning strategy in the $(K+1)$-game on $\mD_n$ is to maintain tokens on states 0 and $\lceil\frac{n}2\rceil$, which he can always achieve. If Alice moves the tokens from the set $\{0,\lceil\frac{n}2\rceil\}$ to some set $\{p,q\}\subseteq\mathbb{Z}_n$, Bob locates the latter set in the sequence \eqref{eq:setsequence}, where it appears as $\{0,1\}{\cdot}u_i$ for some $i\in\{0,1,\dots,K\}$. Then
$\{p,q\}{\cdot}v_i=\{0,\lceil\frac{n}2\rceil\}$, where $v_i$ is the suffix of $w$ following $u_i$. Since $|v_i|\le|w|=K$, Bob can play $v_i$, thereby returning the tokens to states 0 and $\lceil\frac{n}2\rceil$.
\end{proof}

\begin{remark}
In terms of the hierarchy~\eqref{eq:hierarchy} restricted to the class $\mathbf{D}_n$ all $n$-state DFAs, Theorem~\ref{thm:k(n)} means that
\[
\mathbf{A}_1\cap\mathbf{D}_n\supseteq\mathbf{A}_2\cap\mathbf{D}_n\supseteq\dots\supseteq\mathbf{A}_{\binom{n}{2}-1}\cap\mathbf{D}_n\supsetneqq\mathbf{A}_{\binom{n}{2}}\cap\mathbf{D}_n=\dots=\mathbf{A}_\omega\cap\mathbf{D}_n.
\]
In fact, for $1\le k\le\binom{n}{2}-2$, all inclusions
\[
\mathbf{A}_k\cap\mathbf{D}_n\supseteq\mathbf{A}_{k+1}\cap\mathbf{D}_n
\]
are strict. This is witnessed by the automata $\mL_m^{k}$ for $1\le k\le n-2$ (see the remark following Proposition~\ref{prop:PD2}) and by the automaton $\mE_n$ for $k=n-1$ (see Lemma~\ref{lem:strict}). For each $n\le k\le \binom{n}{2}-2$, a witness is obtained by modifying the automaton from $\mathbf{A}_K\setminus\mathbf{A}_{K+1}$ constructed in the proof of Theorem~\ref{thm:k(n)}. In the following description of this modification, we use the notation from that proof; in particular, $K=\binom{n}{2}-1$.

Fix $k\in\{n,n+1,\dots,K\}$ and let $\{p_k,q_k\}:=\{0,1\}{\cdot}u_k$, where $u_k$ is the length $k$ prefix of the word $\overrightarrow{w}$ defined in \eqref{eq:shortestwordinC[2]}. (For instance, $\{p_n,q_n\}=\{0,1\}{\cdot}u_n= \{n-1,1\}$ and $\{p_K,q_K\}=\{0,1\}{\cdot}u_K=\{0,\left\lceil\frac{n}2\right\rceil\}$.) Consider the DFA $\mD_n^k:=(\mathbb{Z}_n,\{a,b,c_k,d_k\})$ obtained by adding to the \v{C}ern\'{y} automaton $\mC_n=(\mathbb{Z}_n,\{a,b\})$ two extra letters $c_k$ and $d_k$ that act as follows:
\[
m{\cdot}c_k:= \begin{cases}
			0 &\text{if } m=p_k,\\
			1 &\text{otherwise};
\end{cases}
\qquad
m{\cdot}d_k:= \begin{cases}
			1 &\text{if } m=q_k,\\
			0 &\text{otherwise}.
	\end{cases}
\]
In particular, $\mD_n^K$ coincides with the automaton $\mD_n\in\mathbf{A}_K\setminus\mathbf{A}_{K+1}$. The proof that the DFA $\mD_n^k$ lies in $\mathbf{A}_k\setminus\mathbf{A}_{k+1}$ for each $n\le k\le \binom{n}{2}-2$ essentially repeats the proof of Theorem~\ref{thm:k(n)} modulo Lemma~\ref{lem:HamiltonForward}, and is therefore omitted.
\end{remark}

\subsection{$m/\omega$-games on fixed size \sa}
Recall that in an $m/\omega$-game, Alice may play nonempty words of length at most $m$, whereas Bob may play arbitrary words. Here, we show that the least number $m(n)\in\mathbb{N}$ such that Alice can win the $m(n)/\omega$-game on every \san{} with $n$ states is $\binom{n}2$. Thus, $m(n)$ coincides with the parameter $k(n)$ considered in the preceding subsection. This may seem somewhat surprising, since the parameters $m(n)$ and $k(n)$ appear to be completely unrelated at first glance. In fact, they are related: both are equal to the diameter of the 2-subset automaton of the \v{C}ern\'{y} automaton $\mC_n$.

\begin{proposition}
\label{prop:m(n)}
The least number $m(n)\in\mathbb{N}$ such that Alice can win the $m(n)/\omega$-game on every \san{} with $n$ states equals $\binom{n}{2}$.
\end{proposition}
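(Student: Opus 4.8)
The plan is to reformulate the $m/\omega$-game through the $2$-subset automaton, exactly as in Section~\ref{sec:decidability}, and thereby reduce the whole statement to a single metric quantity attached to $\mC_n^{[2]}$. First I would record the governing equivalence: Alice wins the $m/\omega$-game on $\mA=(Q,\Sigma)$ iff she wins the $\omega$-game on the iteration $\mA^{(m)}$, iff (by Lemma~\ref{lem:2subsets} with $k=\omega$) the automaton $(\mA^{(m)})^{[2]}$ is an $A_\omega$-automaton, iff (by Proposition~\ref{prop:criterion}, since $(\mA^{(m)})^{[2]}$ has a sink) every non-sink $2$-subset can be moved to a different strongly connected component by some letter of $(\mA^{(m)})^{[2]}$. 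The crucial observation is that a letter of $(\mA^{(m)})^{[2]}$ is precisely the action of a word $u\in\Sigma^{\le m}$ on a $2$-subset, and that reachability in $(\mA^{(m)})^{[2]}$ coincides with reachability in $\mA^{[2]}$ (every word splits into blocks of length at most $m$, e.g.\ into single letters); hence the two automata have the same strongly connected components. This yields the clean criterion: Alice wins the $m/\omega$-game on a synchronizing $\mA$ iff from every non-sink $2$-subset $S$ of $\mA^{[2]}$ there is a word of length at most $m$ taking $S$ out of its component. Writing $\rho(\mA)$ for the maximum, over non-sink $2$-subsets $S$, of the shortest length of a word leaving the component of $S$, the proposition becomes the claim that the maximum of $\rho(\mA)$ over $n$-state synchronizing DFAs equals $\binom{n}{2}$.

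For the upper bound $m(n)\le\binom{n}{2}$ I would show that $\rho(\mA)\le\binom{n}{2}$ for every synchronizing $\mA$. Fix a non-sink $2$-subset $S$ and a shortest word $u$ with $S{\cdot}u$ outside the component $C$ of $S$; such a $u$ exists because $\mA$ is synchronizing, so $S$ reaches the sink, which lies in a different (singleton) component. The states traversed strictly before the exit step are all distinct---otherwise a loop could be excised, contradicting minimality---and they all lie in $C$, so their number, which equals $|u|$, is at most $|C|\le\binom{n}{2}$. By the criterion, Alice therefore wins the $\binom{n}{2}/\omega$-game on every synchronizing $n$-state automaton.

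For the lower bound I would take $\mA=\mC_n$ and prove $\rho(\mC_n)=\binom{n}{2}$. As recalled before Lemma~\ref{lem:HamiltonForward}, the non-sink states of $\mC_n^{[2]}$ form a single component $C$ of size $\binom{n}{2}$, and (since $b$ is a bijection and $a$ identifies only $0$ and $1$) the only edge into the sink is $\{0,1\}\xrightarrow{a}\mathbf{s}$. Consequently every word leaving $C$ must first reach $\{0,1\}$ and then apply $a$, so the shortest exit word from a $2$-subset $S$ has length $\mathrm{dist}_C(S,\{0,1\})+1$; thus $\rho(\mC_n)=\binom{n}{2}$ is equivalent to the existence of a $2$-subset at distance $\binom{n}{2}-1=K$ from $\{0,1\}$ in $C$. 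This last point is the main obstacle: Lemma~\ref{lem:HamiltonForward} furnishes a Hamiltonian shortest path emanating \emph{from} $\{0,1\}$, whereas here I need a shortest path of full length \emph{into} $\{0,1\}$, and I must rule out shortcuts. I would establish it by a descending-orbit analysis mirroring the proof of Lemma~\ref{lem:HamiltonForward}: the only transition lowering the circular distance is $\{0,d\}\xrightarrow{a}\{0,d-1\}$, so any path reaching $\{0,1\}$ from the top orbit $\mathcal{O}_{\lfloor n/2\rfloor}$ is forced through the gates $\{0,d\}$ in turn and must traverse each intervening orbit by powers of $b$; minimality together with the uniqueness of these gates forces a shortest path through all $\binom{n}{2}$ subsets. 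This produces a subset $S^{\ast}$ (an entry point of the top orbit, e.g.\ $\{1,3\}$ when $n=4$) with $\mathrm{dist}_C(S^{\ast},\{0,1\})=K$, whence $\rho(\mC_n)=\binom{n}{2}$. Combining the two bounds gives $m(n)=\binom{n}{2}$, and since both $m(n)$ and $k(n)$ are ultimately read off from the single exit edge $\{0,1\}\to\mathbf{s}$ and the distances within $\mC_n^{[2]}$, this accounts for the coincidence noted before the statement.
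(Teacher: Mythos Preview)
Your proposal is correct and reaches the same conclusion as the paper, but the packaging differs in a way worth noting. The paper argues both bounds directly in game-theoretic terms: for the upper bound, Alice simply plays a shortest merging word from any two-token position (length $\le\binom{n}{2}$ by the standard no-repeat argument); for the lower bound, the paper states Lemma~\ref{lem:HamiltonBackward}---a Hamiltonian shortest path from $\{\lceil n/2\rceil+1,1\}$ to the sink in $\mC_n^{[2]}$---and from it reads off Bob's strategy of returning to $\{1,\lceil n/2\rceil+1\}$ after each of Alice's moves. You instead compose the iteration--$2$-subset--Proposition~\ref{prop:criterion} chain to reduce everything to the single parameter $\rho(\mA)$, then bound $\rho$ above by $\binom{n}{2}$ and realize it on $\mC_n$ via the same descending-orbit/Hamiltonian analysis that the paper isolates as Lemma~\ref{lem:HamiltonBackward}. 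The substantive content of both arguments is identical (same counting for the upper bound, same structural fact about $\mC_n^{[2]}$ for the lower bound); what your route buys is a uniform criterion that makes the coincidence $m(n)=k(n)$ visible as a statement about distances in $\mC_n^{[2]}$, whereas the paper's direct strategy description is shorter and avoids invoking Proposition~\ref{prop:criterion} at all.
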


\begin{proof}
The standard argument showing that $m(n)\le\binom{n}{2}$ is included for completeness only.

Let $\mA=(Q,\Sigma)$ be a \san{} with $n$ states. As in Lemma~\ref{lem:localization}, it suffices to prove that Alice can win the $\binom{n}{2}/\omega$-game on $\mA$ from every position with two tokens. Consider such a position $P$ in which states $q$ and $q'$ hold tokens. Since the DFA $\mA$ is synchronizing, there exists a word whose action brings the tokens to same state. Let $v\in\Sigma^*$ be such a word of minimum length and for $i=0,1\dots,|v|$, let $v_i$ denote the prefix of $v$ of length $i$. The minimality of $v$ implies that all sets in the sequence
\[
\{q,q'\}=\{q,q'\}{\cdot}v_0,\ \{q,q'\}{\cdot}v_1, \dots, \{q,q'\}{\cdot}v_{|v|-1},
\]
are distinct and non-singletons. Hence, the length $|v|$ of this sequence does not exceed the number of 2-element subsets of $Q$, namely, $\binom{n}{2}$. Therefore, $v$ is a legitimate move for Alice in $\binom{n}{2}/\omega$-game on $\mA$, and by playing this word Alice instantly wins the game from the position $P$.

On the other hand, Bob can win the $K/\omega$-game on the \v{C}ern\'{y} automaton~$\mC_n$, where $K=\binom{n}{2}-1$. To show this, we use a Hamiltonian path in $\mC_n^{[2]}$ that is, in a sense, opposite to the Hamiltonian path of Lemma~\ref{lem:HamiltonForward}.

\begin{lemma}\label{lem:HamiltonBackward}
The automaton $\mC_n^{[2]}$ has a Hamiltonian path starting at the set $\{\lceil\frac{n}2\rceil+1,1\}$, and the labels along this path form a unique word of minimum length that maps the set $\{\lceil\frac{n}2\rceil+1,1\}$ to the sink of $\mC_n^{[2]}$.
\end{lemma}

For illustration, Figure~\ref{fig:2subsetsC6back} shows in red the Hamiltonian path of Lemma~\ref{lem:HamiltonBackward} in $\mC_6^{[2]}$.

\begin{figure}[htb]
\begin{center}
\begin{tikzpicture}[ >=latex,shorten >=1pt,
  scale=1,
  every state/.style={circle, draw=blue, scale=1},
  every node/.style={inner sep=2pt,font=\small}]

\begin{scope}[shift={(-5,0)}]
  \node[state] (a01) at (90:1.6)  {$\{0,1\}$};
  \node[state] (a12) at (30:1.6)  {$\{1,2\}$};
  \node[state] (a23) at (-30:1.6) {$\{2,3\}$};
  \node[state] (a34) at (-90:1.6) {$\{3,4\}$};
  \node[state] (a45) at (-150:1.6){$\{4,5\}$};
  \node[state] (a50) at (150:1.6) {$\{5,0\}$};
  \node[state] (s)  [above left of=a01, xshift=-0.75cm] {$\mathbf{s}$};

  \draw[->,dashed] (a01) to (a12);
  \draw[->,dashed,red] (a12) to (a23);
  \draw[->,dashed,red] (a23) to (a34);
  \draw[->,dashed,red] (a34) to (a45);
  \draw[->,dashed,red] (a45) to (a50);
  \draw[->,dashed,red] (a50) to (a01);
  \draw[->,red] (a01) to (s);

\end{scope}

\begin{scope}[shift={(0,0)}]
  \node[state] (b02) at (90:1.6)  {$\{0,2\}$};
  \node[state] (b13) at (30:1.6)  {$\{1,3\}$};
  \node[state] (b24) at (-30:1.6) {$\{2,4\}$};
  \node[state] (b35) at (-90:1.6) {$\{3,5\}$};
  \node[state] (b40) at (-150:1.6){$\{4,0\}$};
  \node[state] (b51) at (150:1.6) {$\{5,1\}$};

  \draw[->,dashed] (b02) to (b13);
  \draw[->,dashed,red] (b13) to (b24);
  \draw[->,dashed,red] (b24) to (b35);
  \draw[->,dashed,red] (b35) to (b40);
  \draw[->,dashed,red] (b40) to (b51);
  \draw[->,dashed,red] (b51) to (b02);
  \draw[->,bend right=12,red] (b02) to (a12);
  \draw[->,bend right=25] (a50) to (b51);
\end{scope}

\begin{scope}[shift={(4,0)}]
  \node[state] (c03) at (90:1.3)  {$\{0,3\}$};
  \node[state] (c14) at (-30:1.3) {$\{1,4\}$};
  \node[state] (c25) at (210:1.3){$\{2,5\}$};

  \draw[->,dashed] (c03) to (c14);
  \draw[->,dashed,red] (c14) to (c25);
  \draw[->,dashed,red] (c25) to (c03);
  \draw[->,bend right=10,red] (c03) to (b13);
  \draw[->,bend left=23] (b40) to (c14);
\end{scope}

\end{tikzpicture}
\end{center}
\caption{The 2-subset automaton $\mC_6^{[2]}$. Solid and dashed edges show the action of $a$ and, resp., $b$; loops are omitted. Red-colored edges form a Hamiltonian path in $\mC_6^{[2]}$.}
\label{fig:2subsetsC6back}
\end{figure}
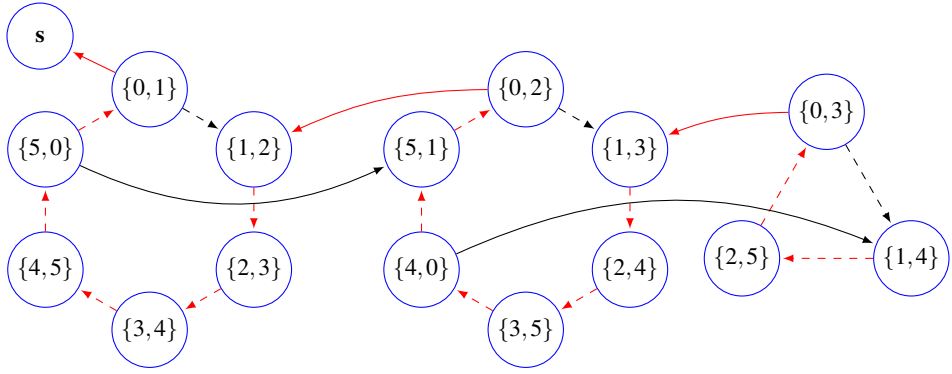

The proof of Lemma~\ref{lem:HamiltonBackward} is very similar to that of Lemma~\ref{lem:HamiltonForward} and is therefore omitted. We only mention that the word whose existence is claimed in the lemma is
\[
\overleftarrow{w}:=\begin{cases}
\left(b^{n-1}a\right)^{\frac{n-1}2} &\text{if $n$ is odd},\\
b^{\frac{n}2-1}a\left(b^{n-1}a\right)^{\frac{n}2-1} &\text{if $n$ is even}.
\end{cases}
\]
One can observe that the word $\overleftarrow{w}$ is obtained by appending $a$ to the word $\overrightarrow{w}$ of \eqref{eq:shortestwordinC[2]}, read backwards.

Bob's winning strategy in the $K/\omega$-game on $\mC_n$ is to maintain tokens on states 1 and $\lceil\frac{n}2\rceil+1$, which he can always achieve. Indeed, Lemma~\ref{lem:HamiltonBackward} implies that $\{\lceil\frac{n}2\rceil+1,1\}{\circ}u\ne\mathbf{s}$ for every word $u$ of length at most $K$. Hence, after Alice's move, the tokens originating from the set $\{\lceil\frac{n}2\rceil+1,1\}$ arrive at some set $\{p,q\}\subseteq\mathbb{Z}_n$, and Bob can send them back to states 1 and $\lceil\frac{n}2\rceil+1$ due to the strong connectivity of the union of orbits.
\end{proof}

\begin{remark}
For the case of automata with a sink, one can easily prove that Alice can win the $(n-1)/\omega$-game on every $n$-state \san{} with a sink. The bound $n-1$ is tight which is witnessed by the DFA $\mL_{n-2}^{1}$ from the series $\mL_m^{k}$ introduced in Subsection~\ref{subsec:kspeed}.
\end{remark}

\section{Conclusion and future work}
\label{sec:conclude}

We summarize our results and highlight questions that have so far remained open.

Theorem~\ref{thm:main}, together with the remark following it, provides the exact value of the \v{C}ern\'{y} function for the class of all $A_\omega$-automata. For $A_k$-automata with $k\in\mathbb{N}$, we currently have only a lower bound, obtained by combining Propositions~\ref{prop:PRTk1} and~\ref{prop:quadratic}. For each $k\in\mathbb{N}$, the class $\mathbf{A}_k$ is not contained in any class of \sa{}, for which a quadratic upper bound on the \v{C}ern\'{y} function has been established so far; see \cite{List} for a catalogue of such classes. Therefore, finding an $O(n^2)$ upper bound on $\mathfrak{C}_{\mathbf{A}_k}(n)$ is of significant interest.

We believe that the lower bound $\frac{n(n-1)}{2k}$ for $\mathfrak{C}_{\mathbf{A}_k}(n)$, with $k\in\mathbb{N}$, can be improved. Recall that the DFAs witnessing this bound are automata with a sink. For several classes of automata with known lower bounds on their \v{C}ern\'{y} functions, the bound for the subclass consisting of automata with a sink is around half of the bound for the entire class. If this empirical observation extends to the class $\mathbf{A}_k$, one might therefore expect a lower bound with leading term $\frac{n^2}k$ rather than $\frac{n^2}{2k}$.

With regard to \rl{}s of $A_k$-automata, we also pose two particular questions.
\begin{question}\label{que:nnn}
Is the reset threshold of every $A_k$-automaton with $k$ states less than $k$?
\end{question}

The answer is affirmative for automata with a sink, by Proposition~\ref{prop:PD2} and Theorem~\ref{thm:main}.

\begin{question}\label{que:CD}
Do there exist real positive constants $C$ and $D$ such that $\rt(\mA)\le Cn$ whenever $\mA$ is an $A_k$-automaton with $n$ states and $k\ge Dn$?
\end{question}

Proposition~\ref{prop:iteration} gives the upper bound $m(n-2)+1$ on the \rl{}s of $m/\omega$-au\-tom\-ata with $n$ states. The bound is tight for $m=2$ but is likely not tight for larger $m$. For the case $m=3$, there is some evidence that the bound can be improved to $3n-6$, and we therefore pose the following question.
\begin{question}\label{que:m=3}
Does every  $3/\omega$-automaton with $n$ states have a \sw\ of length $3n-6$?
\end{question}

Of interest is the possible length of a $k$-game on an $A_k$-automaton with $n$ states. For $k=\omega$, Theorem~\ref{thm:main} provides the upper bound $n-1$, which is tight since the length of any $k$-game on a DFA $\mA$ is at least $\rt(\mA)$. However, for $k\in\mathbb{N}$, we currently have only the straightforward cubic upper bound given by Corollary~\ref{cor:cubic}. It is hard to imagine a series of $A_k$-automata with $n$ states on which Alice needs $\Theta(n^3)$ moves to win, yet the existence of such automata would contradict neither current knowledge nor the \v{C}ern\'{y} conjecture. We believe that the following question may be quite challenging.
\begin{question}\label{que:length}
Do there exist a $1<k<\omega$ and an infinite series of $A_k$-automata whose $k$-game lengths grow faster than any quadratic function of the state number?
\end{question}

Notice that, for $k\in\mathbb{N}$, there is no immediate relation between the \rl{} of an $A_k$-automaton and the number of moves Alice needs to win on it, except that the latter is always greater than or equal to the former. Using the duplication construction from \cite[Theorem 6]{FMV}, one can easily produce infinite series of $2n$-state $A_k$-automata with $1<k<\omega$ and \rl{} equal to 2, on which Alice needs $\Theta(n^2)$ moves to win.

As for algorithmic issues, we believe that the upper bound on the time complexity of recognizing $A_k$-automata given by Theorems~\ref{thm:omega-algorithm} and~\ref{thm:k-algorithm} is tight for $k=\omega$ and may be improved for $k<\omega$. Observe that the bound of Theorem~\ref{thm:k-algorithm} is independent of the parameter $k$, and may therefore be unnecessarily pessimistic when $k$ is a constant, which is a natural setting, unless, as in Section~\ref{sec:nstates}, one compares $k$-games for various values of $k$. This suggests approaching the problem of recognizing $A_k$-automata with $k<\omega$ under the framework of parameterized complexity.

\end{document}